\documentclass[preprint,review,3p]{elsarticle}
\usepackage{natbib}
\usepackage{amsthm}

\newcommand{\cb}[1]{\ifmmode {\boldsymbol{#1}}\else ${\boldsymbol{#1}}$\fi}
\newcommand{\cp}[1]{\ifmmode {\mathcal{#1}}\else ${\mathcal{#1}}$\fi}

\theoremstyle{remark}
\newtheorem*{remark}{\textbf{Remark}}

\theoremstyle{definition}
\newtheorem{definition}{\textbf{Definition}}[section]

\newtheorem{theoreme}{Theorem}

\usepackage[T1]{fontenc}
\usepackage{psfrag,epsfig,graphics}
\usepackage{psfrag}
\usepackage{multirow}
\usepackage{multicol}
\usepackage{color}
\usepackage{colortbl}
\usepackage[centertags]{amsmath}
\usepackage{amsfonts}
\usepackage{amssymb}
\usepackage{newlfont}
\usepackage{graphicx}
\usepackage{float}
\usepackage{subcaption} 
\usepackage{multirow}
\usepackage{longtable}
\usepackage{color}
\usepackage{hyperref}
\hypersetup{
colorlinks=true,
linkcolor=blue,
urlcolor=cyan
}
\usepackage{soul}
\usepackage{comment}
\usepackage{colortbl}
\definecolor{Gray}{gray}{0.9}
\newcolumntype{a}{>{\columncolor{Gray}}c}

\usepackage[ruled]{algorithm2e}
\usepackage{lineno}
\definecolor{Gray}{gray}{0.9}

\usepackage{mathrsfs}
 \makeatletter
    \def\ps@pprintTitle{%
       \let\@oddhead\@empty
       \let\@evenhead\@empty
       \let\@oddfoot\@empty
       \let\@evenfoot\@oddfoot
    }
    \makeatother

\usepackage{setspace}
\journal{}

\begin{document}

\begin{frontmatter}

\title{Extended Hybrid Timed Petri Nets with Semi-Supervised Anomaly Detection for Switched Systems, Modelling and Fault Detection}
\author[a]{Fatiha Hamdi}
\author[b]{Abdelhafid Zeroual}
\author[c]{Fouzi Harrou}

\address[a]{Department of Electronics, Faculty of technology, LAAAS Laboratory, Batna 2 University, Batna, Algeria.}
\address[b]{Department of Electrical Engineering, Faculty of Science and Applied Sciences, Larbi Ben M’hidi University, Oum El Bouaghi, Algeria.
\\ E-mail: abdelhafid.zeroual@univ-oeb.dz}
\address[c]{Computer, Electrical and Mathematical Sciences and Engineering (CEMSE) Division, King Abdullah University of Science and Technology (KAUST), Thuwal, 23955-6900, Saudi Arabia.}

\begin{abstract}

Hybrid physical systems combine continuous and discrete dynamics, which can be simultaneously affected by faults. Conventional fault detection methods treat these dynamics separately, limiting their ability to capture interacting fault patterns. This paper presents a unified fault detection framework for hybrid dynamical systems by integrating a novel Extended Timed Continuous Petri Net (ETCPN) model with semi-supervised anomaly detection. The proposed ETCPN formalism advances existing Petri net models by introducing marking-dependent flow functions that natively encode the switching behavior of hybrid systems, enabling tighter and more intrinsic coupling between discrete and continuous dynamics than standard Hybrid Petri Nets. Methodologically, a mode-dependent hybrid observer is designed based on the ETCPN structure. The stability of the observer under arbitrary switching is rigorously guaranteed via Linear Matrix Inequalities (LMIs), which are solved offline to obtain the observer gains for each discrete mode. This observer generates residuals that reflect discrepancies between estimated and measured outputs. These residuals are then processed by semi-supervised detection methods—including One-Class SVM (OC-SVM), Support Vector Data Description (SVDD), and Elliptic Envelope (EE)—which are trained exclusively on fault-free data to eliminate the need for labeled fault examples. The framework is validated through comprehensive simulations covering three fault scenarios: discrete-event faults (e.g., mode blocking), continuous faults (e.g., sensor biases), and simultaneous hybrid faults. Results demonstrate high detection accuracy, rapid convergence, and robust performance across all fault types, with OC-SVM and SVDD achieving the best balance between detection rate and false alarms. The online execution of the framework involves lightweight computations, making it suitable for real-time deployment, with the main computational effort confined to the offline LMI design phase.

\end{abstract}

\begin{keyword}

Hybrid dynamic system; Extended Timed Continuous Petri Nets; Semi-supervised anomaly detection; Linear Matrix Inequalities; One-Class SVM
 \end{keyword}

\end{frontmatter}
\section{Introduction}\label{sec:Introduction}

Modern engineering systems increasingly exhibit hybrid dynamic characteristics, which reflect the interaction between continuous and discrete state behaviours. This dual nature is prevalent in various technological domains, including autonomous transportation, advanced robotics, distributed energy networks, financial modelling, and biomechanical systems \cite{taghavian2024constrained, Zeroual2017ICEE-B}. Hybrid dynamic systems (HDS) present significant challenges in monitoring and control, particularly in anomaly detection and fault diagnostics~\cite{khorasgani2017structural, wang2012hybrid}. The stochastic nature of fault occurrence in HDS architectures further complicates diagnostic methodologies. Environmental perturbations and measurement uncertainties present a difficulty in distinguishing between normal and faulty operational states~\cite{shi2024novel}. Various diagnostic frameworks have been proposed to address these issues, with particular emphasis on model-based approaches~\cite{yang2010observer, diedrich2019model}. These methods assess discrepancies between predicted and actual system behaviours to infer potential faults. Integrated strategies that combine continuous state estimation with discrete mode reasoning have demonstrated considerable effectiveness in fault detection and isolation tasks~\cite{hofbaur2002mode}.

\medskip 

Recent advances in fault detection and isolation have led to significant progress in adaptive and hybrid observer architectures for complex dynamic systems. Liu et al.~\cite{liu2018new} proposed a novel approach that employed a bank of adaptive observers for nonlinear discrete time-varying systems with multiple fault scenarios. Their method transformed fault isolation into a reconstruction-based contribution analysis task to identify the most suitable monitor with the system’s current state. In the field of autonomous aerial vehicles, Lien et al.~\cite{lien2020adaptive} developed a dual observer framework for fault detection and estimation in quadrotor flight control systems. This framework achieves fault detection and isolation by systematically integrating observer outputs. For high-speed rail transportation, Yin et al.~\cite{yin2024dynamic} introduced an advanced hybrid observer paradigm for early slip detection based on adhesion coefficient estimation, enabling precise and timely fault identification. In the context of cyber-physical systems, Yan et al.~\cite{yan2024fault} proposed a hybrid observer scheme for simultaneous state and fault estimation, which incorporates dynamic correction terms that adapt to output availability. Stability and exponential convergence of the approach were theoretically ensured through Lyapunov-based analysis.

\medskip

Hybrid Dynamic Systems (HDS) are a class of dynamical systems characterized by the co-existence and interaction of continuous dynamics and discrete events~\cite{novelli2025identification, zeroual_piecewise_2017}. The system's state evolves through differential or difference equations until a discrete event, triggered by internal conditions or external commands, causes an instantaneous switch to a different set of governing equations. These transitions are often governed by switching laws or guards, which trigger transitions between different continuous behavioral modes. This powerful modeling paradigm is widely applied across numerous engineering fields, including robotics, automotive control, aerospace , traffic management~\cite{zeroual2015calibration}, and power systems \cite{zhu2025stability, harrou2018traffic}. Formal modeling of HDS is typically achieved through dynamical models with automata or discrete event formalisms \cite{chen2021hybrid}. Among these, Petri Nets have emerged as a powerful tool for discrete event modelling, providing a graphical representation of HDS interactions and offering strong potential for fault detection applications~\cite{lefebvre2017discussion}. Conceptualized initially for modelling flow systems dominated by continuous dynamics, such systems inherently involve events that trigger transitions between behavioural states. To address this complexity, Hybrid Petri Nets (HPNs) were developed as an extension, enabling the simultaneous representation of continuous and discrete behaviours~\cite{davrazos2007modeling, chen2014novel}. Early fault detection and diagnosis approaches proved effective for isolated faults affecting either continuous or discrete dynamics independently~\cite{renganathan2011observer}. However, the intrinsic hybrid nature of physical systems demands comprehensive frameworks capable of detecting faults that may simultaneously affect both continuous and discrete domains.

\medskip
This paper presents a unified modelling and fault detection framework for hybrid dynamic systems based on an advanced Extended Timed Continuous Petri Nets (ETCPNs) paradigm. The proposed approach is specifically designed to represent hybrid phenomena in switched discrete-time systems. It synthesises ordinary Petri Nets and Timed Continuous Petri Nets (TCPNs) to establish a comprehensive modelling architecture suited for discrete-time switching systems with linear time-invariant (LTI) dynamics. ETCPNs adopt a bipartite structure: discrete events and state transitions are represented using a standard Petri Net (PN) formalism to capture logical sequences and concurrency, while continuous inter-state dynamics governed by LTI difference equations are modelled through the TCPN formalism, enabling precise representation of time-dependent behaviours. The incorporation of TCPNs is particularly valuable for embedding temporal constraints, thereby supporting advanced temporal modelling essential for fault detection and diagnosis~\cite{hatte2024transforming}. In addition, the ETCPNs introduce marking-dependent flow functions, where the flow rate of a continuous transition is defined as a function of the discrete marking. This formulation natively encodes switching phenomena, as the current operational mode directly determines the active set of recurrent equations. Compared to the generic test-arc semantics typically used in HPNs, ETCPNs achieve a tighter and more intrinsic coupling between discrete and continuous subsystems, resulting in a more expressive and self-contained modeling framework. Building on this unified modelling framework, a fault detection scheme based on a hybrid observer (ETCPN-HO) is developed. The proposed observer integrates discrete PN and TCPN structures operating in synchronised coordination. The discrete observer estimates both discrete markings and firing vectors, enabling active mode identification for continuous state estimation. These state estimates, in turn, update the marking and firing vectors. Observer convergence conditions are formulated via Linear Matrix Inequalities (LMIs), ensuring asymptotic error convergence. To enhance fault detection capability, the framework incorporates a residual-based fault detection strategy. Residuals are generated as the discrepancy between estimated and measured states and analysed using advanced semi-supervised anomaly detection techniques, including One-Class SVM, Support Vector Data Description (SVDD), and Elliptic Envelope (EE).
These methods require only fault-free data for training, which avoids the need for labelled fault datasets and addresses class imbalance issues commonly encountered in supervised approaches. By the integration of residual analysis with semi-supervised detection, the ETCPN-HO framework offers accurate and interpretable identification of faults affecting both discrete and continuous system dynamics. Overall, this methodology represents a significant advancement, providing a scalable and data-efficient solution for hybrid discrete-continuous system fault detection.

\medskip
This paper is organised as follows. Section~\ref{sec2} reviews related work on Petri Net (PN) modelling for dynamic, hybrid, and fault detection systems. Section~\ref{sec3} presents the problem formulation and analysis of switched dynamic hybrid (SDH) systems, which established the theoretical background. Section~\ref{sec4} introduces the Extended Timed Continuous Petri Net (ETCPN) framework and its mathematical formulation. Section~\ref{sec5} details the fault detection methodology, including the design of the ETCPN-based hybrid observer, residual generation and fault detection methodology. Section~\ref{sec6} validates the proposed approach through simulation studies, demonstrating the performance of the ETCPN model, hybrid observer, residual analysis, and fault detection capabilities. Finally, Section~\ref{sec7} concludes the paper and discusses future research directions.

\medskip

\section{Related Works}\label{sec2}

Petri Nets (PNs) are graphical and mathematical modelling tools initially developed to capture the dynamic behaviour of discrete event systems~\cite{petri1962kommunikation, hamdi2024ICSC}. Their ability to visually and rigorously represent system states, events, and transitions has enabled widespread adoption across diverse application domains, including manufacturing~\cite{grobelna2021challenges, kahraman2010manufacturing}, power systems~\cite{vescio2015petri}, transportation and traffic~\cite{julvez2005modelling, liang2021modeling, chen2014novel}, and biology~\cite{heiner2008petri}. This combination of intuitive graphical representation and formal mathematical foundation has established PNs as a versatile modelling approach for complex systems. To address evolving modelling requirements, several PN extensions have been introduced, such as Colored Petri Nets, Timed Petri Nets, and Stochastic Petri Nets~\cite{mecheraoui2021petri, ali2024modeling}. These extensions have further broadened the applicability of PNs, enabling more detailed and accurate representations of complex and heterogeneous system behaviours.

\medskip 
Hybrid Petri Nets (HPNs) represent an evolutionary extension of standard Petri Nets, designed to integrate discrete and continuous dynamics for more accurate modelling of complex systems. While traditional PNs depict instantaneous events through transitions and discrete states via places, HPNs extend this formalism by incorporating continuous behaviours in transitions and enabling the continuous evolution of place markings over time~\cite{david2010discrete}. This dual capability provides a formal graphical representation well suited for mixed-behaviour systems. HPNs have been applied across various domains, leading to several structural extensions. For example, Alla and David~\cite{alla1998continuous} demonstrated the utility of HPNs in describing and analysing manufacturing and logistics systems. Demongodin and Alla~\cite{demongodin1998differential} further extended HPNs to model batch and modular systems. Wang et al.~\cite{wang2005hybrid} proposed an HPN framework in which the continuous part models production dynamics and the discrete part captures delivery and order processes in manufacturing networks. Dotoli et al.~\cite{cavone2018hybrid} introduced a first-order HPN to model a baking system, effectively detecting production failures such as waste and chokepoints. Beyond manufacturing, HPNs have been adopted in other sectors. Di et al.~\cite{di2004modelling} applied HPNs to model urban traffic networks, integrating continuous traffic flows and discrete traffic light control. In the energy domain, Mishra et al.~\cite{mishra2023multi} utilised multi-agent HPNs to model and control direct current power in microgrids, representing distributed energy resources continuously while managing switching operations discretely. In the biological sciences, Brinkrolf et al.~\cite{brinkrolf2021vanesa} developed an extended HPN framework for the VANESA simulator to model and simulate complex biological processes. Despite their broad applicability, HPNs exhibit a significant limitation in their treatment of switching logic, which governs changes in continuous dynamics. In most implementations, this switching behavior is managed externally through a separate state machine or imperative code~\cite{david2010discrete}. Such external management decouples the control logic from the PN model and undermines the self-contained nature of the PN formalism. Furthermore, the interaction between the discrete and continuous subsystems in standard HPNs is often limited to basic enabling or disabling of continuous transitions based on discrete markings (generally implemented via inhibitor or test arcs)~\cite{ghomri2007modeling}. While effective for simple mode changes, this mechanism lacks the expressiveness required to accurately model complex, mode-dependent continuous behaviors, particularly in systems with tightly coupled hybrid dynamics~\cite{peleg2005using}. 

\medskip 

Petri Net (PN) structures have been extended to address fault detection challenges across various systems through numerous methodologies and extensions. Köhler et al.~\cite{kohler2023fault} proposed a diagnosis framework for discrete event manufacturing systems using signal-interpreted PNs, enabling fault detection and isolation through residual generation between fault-free and controlled system models. In the context of cyber-physical systems, He et al.~\cite{he2018modeling} developed a detection control strategy for aircraft fuel tank systems, integrating agent-oriented network modelling to represent both discrete and continuous dynamics. For power distribution networks, Jiang et al.~\cite{jiang2024petri} introduced a PN-based method to monitor power changes and detect broken input lines. These studies demonstrate the adaptability of PNs for diagnosing specialised technical domains. In addition, various analytical methodologies have emerged from discrete PN (DPN) extensions to address fault detection in complex systems. Alzalab et al.~\cite{alzalab2021trust} proposed a novel approach through the combination of coloured DPNs with a trust-based model. This incorporation aims to enhance fault detection accuracy by the comparison of estimated faulty states with nominal conditions. Arichi et al.~\cite{arichi2022fault} employed partially observed DPNs to design fault detection and identification algorithms based on algebraic observers and comparative analysis. Further advancements were made by Coquand et al.~\cite{coquand2023diagnosabilization}, who investigated diagnosability properties using a timed PN framework to analyse the temporal aspects of fault detection. Moreover, labelled PNs have been applied for online fault diagnosis in discrete event systems, which has demonstrated robust real-time performance~\cite{de2022online}. Collectively, these methodologies highlight the versatility of PN extensions in addressing the complexities of fault detection and diagnosis across diverse engineering applications.

\medskip

While existing PN-based approaches and their extensions have demonstrated significant potential for modeling and fault detection across diverse application domains, most remain constrained to either purely discrete or loosely integrated hybrid representations. The proposed ETCPN paradigm addresses these limitations by introducing two key innovations: marking-dependent flow functions and the explicit integration of timing semantics into continuous transitions. 
In ETCPNs, the flow rate of a continuous transition is not constant nor externally managed; instead, it is explicitly defined as a function of the discrete marking. This formulation natively encodes the core behavior of switched systems, where the active set of continuous differential equations is determined by the current discrete operational mode. Consequently, ETCPNs establish a tighter and more intrinsic coupling between discrete and continuous subsystems compared to the more generic “test arc” semantics in HPNs. Furthermore, ETCPNs extend the capabilities of TCPNs by embedding a discrete PN layer to formally capture the dynamics of discrete events that govern mode switches, thereby establishing a true hybrid modeling framework. By integrating the switching logic directly into the firing semantics, ETCPNs provide a unified, self-contained, and expressive formalism for modeling complex hybrid dynamic systems.

Beyond the Petri net formalism, switched systems, a prominent class of hybrid dynamical systems, have gained significant attention in fault diagnosis research over the past decade. Diagnostic methodologies for these systems can be broadly categorized into model-based and data-driven paradigms. A dominant model-based approach employs a multi-model observer design, where a bank of observers is designed for each system mode. For instance, Zhang et al. \cite{zhang2019robust} developed a robust fault detection method for discrete-time switched systems using multiple Lyapunov functions, enabling real-time fault identification and management. Kazemi and Montazeri \cite{kazemi2018new} and Du et al. \cite{du2022actuator} applied $H_\infty$ filtering techniques to achieve accurate state estimation and fault diagnosis in power inverters under disturbances. Extending this model-based paradigm, Alharabi et al. \cite{el2025line} proposed a framework robust to unknown inputs and stochastic noise, even with an unknown switching signal. Complementary to these model-based techniques, data-driven and classification-based (typically requiring labeled fault data) methods have emerged. For instance, Yahia et al. \cite{yahia2025actuator} introduced a classification-based estimation method that integrates real-time data processing to improve fault tolerance. Similarly, application-specific studies, such as the fault location algorithm for drive systems by Song and Kim \cite{song2018practical} or adaptive parameter identification for DC-DC converters by Li et al. \cite{li2019sensor, li2020robust}, demonstrate the value of tailored solutions. While these methods are effective in their respective contexts, they present notable limitations. Model-based approaches using frameworks like hybrid automata can become intractable for systems with high concurrency and complex resource sharing due to state explosion and reachability issue \cite{waez2013survey}. Meanwhile, data-driven methods require extensive labeled fault data for training, which is often unavailable or impractical to obtain in real-world industrial settings. Furthermore, they can lack interpretability, making it difficult to trace faults back to specific physical components or logical process sequences.

The limitations of existing methods, ranging from the modeling constraints of HPNs and the complexity of alternative formalisms to the dependency of data-driven approaches on labeled fault data, highlight a clear need for a more integrated and practical solution. This work addresses these challenges by introducing a comprehensive ETCPN-based modelling and fault detection framework. For the modelling task, ETCPN introduces marking-dependent flow functions, which natively encode switching behavior by defining continuous transition rates as functions of discrete markings. This creates a tighter discrete-continuous coupling than HPNs and offers a more compact, natural representation for systems with parallel processes compared to Hybrid Automata, thereby avoiding state-space explosion and reachability problems. For robust residual generation, the framework integrates a stable hybrid observer designed using LMIs, providing formal stability guarantees. This synergy between a transparent, process-oriented model and a rigorous model-based observer offers a powerful foundation for fault diagnosis. For fault detection, the framework leverages semi-supervised anomaly detection techniques, eliminating the dependency on labeled fault data that plagues many supervised fault detection approaches. This combination of an interpretable model with efficient detection enables actionable diagnostics without the need for extensive fault archives. By unifying intuitive modeling, stable estimation, and practical detection, the proposed ETCPN framework bridges critical gaps, offering a unified, scalable, and interpretable solution for fault detection in complex hybrid dynamic systems.

\section{Problem statement and materials}\label{sec3}
Fault detection and localisation are essential for maintaining the safe and reliable operation of industrial and real-world systems. Hybrid Dynamic Systems (HDS) offer an effective framework for representing the inherent complexities of such systems. Among various analytical tools, Petri Nets (PNs) are particularly notable for their robustness, providing both graphical and mathematical modelling capabilities. Their dual functionality enables comprehensive analysis of system behaviour, fault identification, and assurance of operational reliability.

\begin{figure}[h!]
  \centering
  \includegraphics[width=12cm]{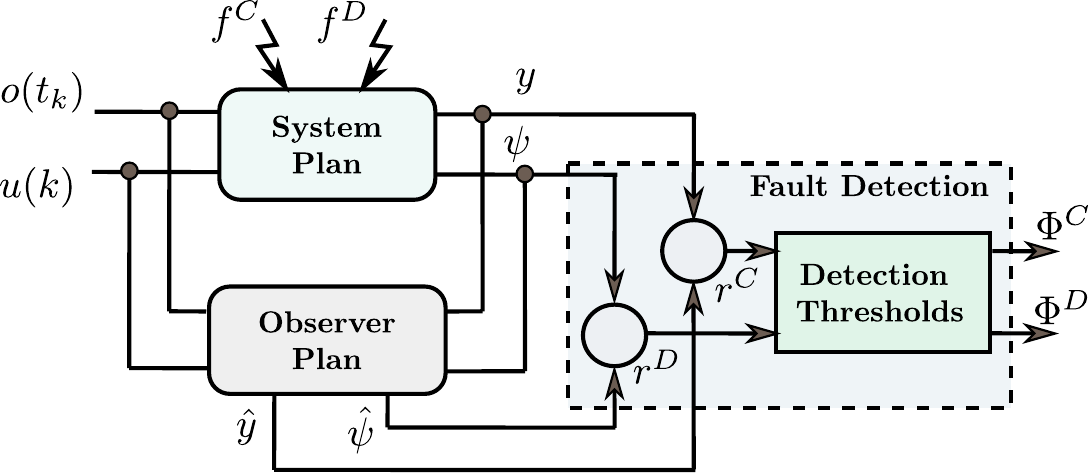}
  \caption{Schematic representation of the fault detection process for Hybrid Dynamic Systems (HDS). The system and observer plans generate outputs $(y, \psi)$ and $(\hat{y}, \hat{\psi})$, respectively. Residuals $r^C$ and $r^D$ are computed and compared against detection thresholds to identify faults affecting continuous ($f^C$) and discrete ($f^D$) dynamics, producing fault indicators $\Phi^C$ and $\Phi^D$.}
  \label{fig1}
\end{figure}

\medskip
Figure~\ref{fig1} illustrates the fault detection process specifically designed for Hybrid Dynamic Systems (HDS). In this framework, $f^C$ and $f^D$ represent faults affecting the continuous-time dynamics and discrete-event components, respectively. Fault detection is achieved by analysing the residuals generated through comparison between the actual and estimated system outputs. Specifically, the actual discrete-time outputs and discrete events, denoted by $(y, \psi)$, are compared with their estimated counterparts $(\hat{y}, \hat{\psi})$. Both the system and its observer are modelled using Petri Nets. As a result, the residual vector consists of two distinct components:
\begin{itemize}
  \item The discrete-time residual vector $r^C(k)$ for each discrete-time subsystem:
  \begin{equation}
    r^C(k) = y(k) - \hat{y}(k) {\color{red}.}
  \end{equation}
  
  \item The discrete-event residual vector $r^D(k)$ for each mode:
  \begin{equation}
    r^D(k) = \psi(k) - \hat{\psi}(k) {\color{red}.}
  \end{equation}
\end{itemize}

\noindent As shown in Figure~\ref{fig1}, $u(k)$ and $o(k)$ represent the inputs to the Hybrid Dynamic System (HDS), while $\Phi^C$ and $\Phi^D$ denote the fault indicators associated with the discrete-time subsystem and discrete-event components, respectively.

\noindent To effectively address fault detection in such systems, the following objectives must be achieved:

\begin{enumerate}

  \item Develop a Petri Net (PN) model capable of detecting malfunctions or irregular behaviours in the HDS, while ensuring the model remains robust and accurate.

  \item Design a PN-based framework that integrates both the continuous and discrete components of the Switched Dynamic Hybrid (SDH) system, under both nominal and faulty conditions.

\end{enumerate}

\subsection{Hybrid Dynamical System Class}

To address this aspect, consider the following switched hybrid dynamical system described by:

\begin{eqnarray}\label{sds}
x(k+1) & = & A_q x(k) + B_q u(k) \label{sds1}{\color{red}\,,}\\
y(k) & = & C_q x(k) \label{sds2}{\color{red}\,,}
\end{eqnarray}

\noindent
 
\noindent where $x(k) \in \mathbb{R}^n$ is the state vector at time step $k$, $u(k) \in \mathbb{R}^p$ and $y(k) \in \mathbb{R}^r$ are the input and output vectors, respectively. The system matrices $A_q \in \mathbb{R}^{n \times n}$, $B_q \in \mathbb{R}^{n \times p}$, and $C_q \in \mathbb{R}^{r \times n}$ are mode-dependent, corresponding to the active discrete mode $q(k) \in \mathcal{Q} = \{1, 2, \ldots, Q\}$, where $\mathcal{Q}$ is the finite set of discrete modes and $Q \in \mathbb{N}$ represents the number of discrete modes (or subsystems).

The evolution of the discrete mode $q(k)$ is governed by a state-dependent switching signal. The state space $\mathbb{R}^n$ is partitioned into $Q$ non-overlapping regions $\{\Omega_q\}_{q \in \mathcal{Q}}$, such that $\bigcup_{q=1}^Q \Omega_q = \mathbb{R}^n$ and $\Omega_q \cap \Omega_{q'} = \emptyset$ for $q \neq q'$, where $q$ and $q'$ are two different modes. The system operates in mode $q$ if and only if the state resides in the corresponding region{\color{red}.}
\begin{equation}
q(k) = q \quad \text{if and only if} \quad x(k) \in \Omega_q ~{\color{red}.}
\end{equation}
The switching signal $s(k): \mathbb{N} \to \mathcal{Q}$ is formally defined as the sequence of active modes, i.e., $s(k) = q(k)$. For analytical convenience, we define a Boolean indicator function $S_q(k)$ for each mode that explicitly signals its activation:

\begin{equation}
S_q(k) =
\begin{cases}
1 & \text{if } q(k) = q \quad (\text{i.e., } x(k) \in \Omega_q), \\
0 & \text{otherwise}.
\end{cases}
\end{equation}

These indicator functions satisfy $\sum_{q=1}^Q S_q(k) = 1$, ensuring the system is in exactly one mode at any time step $k$. Figure~\ref{fig2} illustrates a system with two modes ($Q=2$). 
The regions and corresponding switching signals are given by:

\[
\begin{aligned}
\Omega_1 &= \{ x \in \mathbb{R}^n \,|\, x \geq g_1 \}, \quad & S_1(k) = 1 \text{ if } x(k) \geq g_1, \\
\Omega_2 &= \{ x \in \mathbb{R}^n \,|\, x \leq g_2 \}, \quad & S_2(k) = 1 \text{ if } x(k) \leq g_2,
\end{aligned}
\]

\noindent
where $g_1$ and $g_2$ are constant thresholds that define the switching boundaries.

\begin{figure}[h!]
  \centering
  \includegraphics[width=6cm]{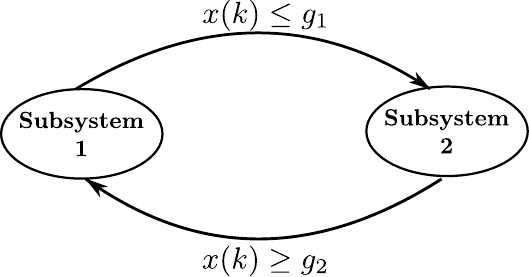}
  \caption{Illustration of subsystem switching logic in a hybrid dynamical system. Transitions between Subsystem~1 and Subsystem~2 are governed by the switching signal $S_q$ and conditions $g_1$ and $g_2$.}
  \label{fig2}
\end{figure}

To better reflect real-world conditions and account for potential system faults, the model in Equations~\eqref{sds1} and \eqref{sds2}  is reformulated as follows:

\begin{eqnarray}\label{sdsf}
x(k+1) & = & A_q x(k) + B_q u(k) + F^{x}_{q} f(k)\,, \\
y(k) & = & C_q x(k) + F^{y}_{q} f(k) {\color{red}\,,}
\end{eqnarray}

\noindent
where $F^{x}_{q}$ and $F^{y}_{q}$ are known fault distribution matrices of appropriate dimensions, representing how faults influence the state dynamics and output, respectively. This formulation explicitly incorporates additive fault dynamics, allowing for faults to affect both the state evolution and the measured outputs of the system.

\begin{figure}[h!]
  \centering
  \includegraphics[width=6.5cm]{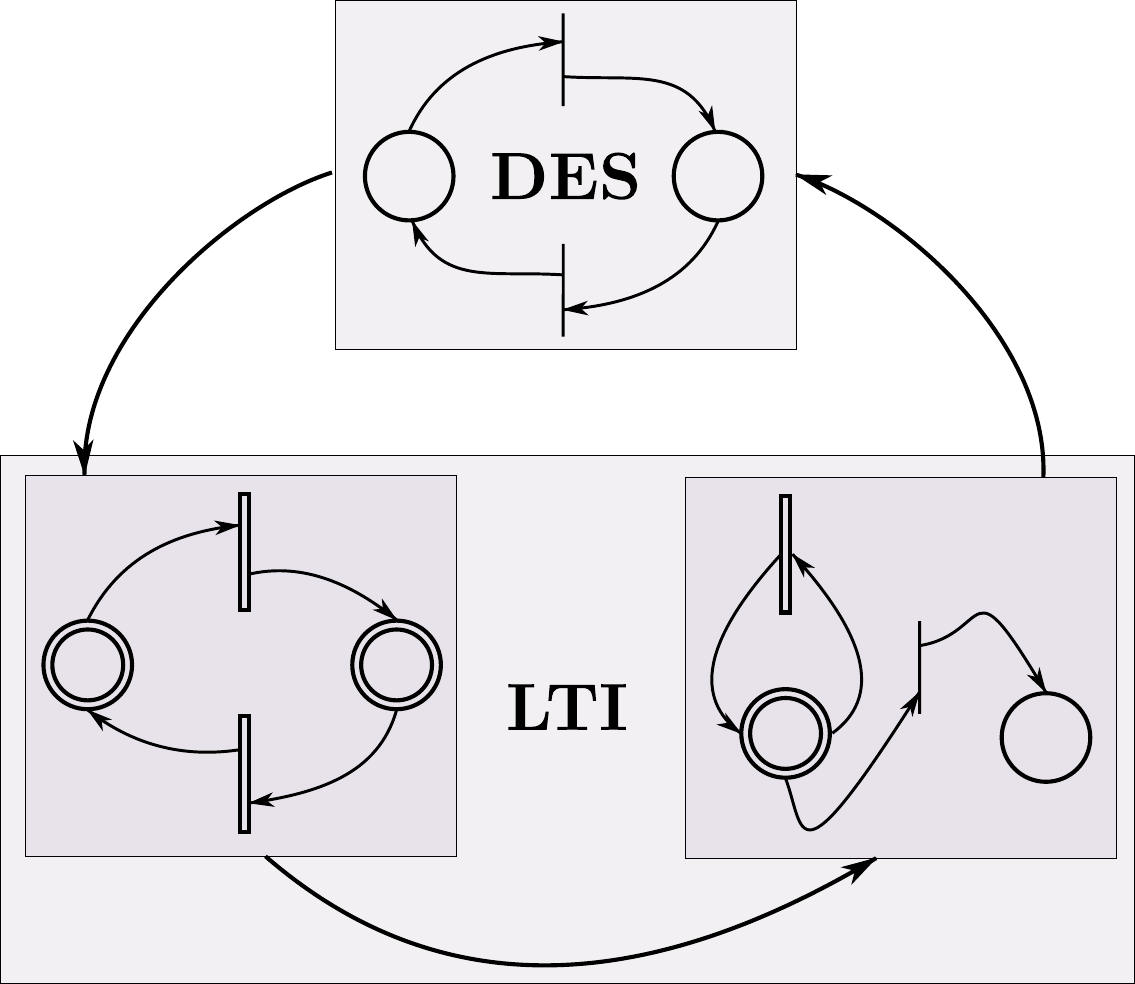}
  \caption{Structure of the SDH system. The system integrates Discrete-Event System (DES) dynamics with multiple LTI subsystems. Transitions between LTI subsystems are governed by the discrete-event controller.}
  \label{sdh}
\end{figure}

\noindent The primary challenge is to identify an appropriate Petri Net (PN) framework capable of accurately representing the hybrid system architecture while adhering to the schematic structure shown in Figure~\ref{sdh}. The proposed modelling framework comprises two distinct components:

\begin{itemize}
  \item A discrete-event PN formalism to capture the system's logical behaviour.
  \item A hybrid PN structure that integrates a continuous PN submodel for Linear Time-Invariant (LTI) system dynamics with a discrete PN submodel to govern mode switching conditions and transition logic.
\end{itemize}

\noindent
This dual-structure approach enables a comprehensive representation of both the continuous dynamics and discrete switching behaviour inherent in fault-prone hybrid systems. The continuous PN component models the LTI dynamics using appropriate place/transition structures, while the discrete PN component implements switching logic via guard conditions and transition firing rules.  LTI systems can be effectively described using Timed Continuous Petri Nets (TCPNs), which have demonstrated efficiency in modelling continuous processes~\cite{giua2018petri, Hamdi2024AFROS}. Building on this foundation, and in alignment with the Switched Dynamic Hybrid (SDH) structure illustrated in Figure~\ref{sdh}, we propose a combined modelling approach. Here, the continuous dynamics are captured using TCPN formalism, which allows for precise representation of continuous state evolution, temporal aspects of LTI behaviour, and flow dynamics through continuous places and transitions. Simultaneously, discrete-event dynamics are represented using Discrete Petri Nets (DPNs), which enable logical modelling of mode switching, description of discrete state transitions, and enforcement of guard conditions. This hybrid PN architecture offers a unified modelling framework for SDH systems, preserving the analytical strengths of TCPNs for continuous dynamics while leveraging DPNs for discrete-event representation and switching logic.

\subsection{Petri Net Definitions and Preliminaries}

To effectively integrate the two PN models and establish a robust framework for hybrid system modelling, a rigorous mathematical foundation is required. Modelling dynamical systems using PN methodologies necessitates a clear understanding of fundamental mathematical constructs and formal definitions. For further details, readers are referred to~\cite{giua2018petri, david2010discrete}.

\begin{definition}
An ordinary autonomous Discrete Petri Net (DPN) is formally defined as a five-tuple:

\begin{equation}\label{eq1pe}
N_D = (P^D, T^D, Pre^D, Post^D, M^D_{0}){\color{red}\,,}
\end{equation}

\noindent
where $P^D $ and $T^D $ are finite sets of discrete places and transitions, respectively (see Figure~\ref{pt}). The functions $Pre^D : T^D \times P^D \rightarrow \mathbb{N}$ and $Post^D : T^D \times P^D \rightarrow \mathbb{N}$ define the input and output incidence relations. Specifically, $Pre^D(P^D_{i}, T^D_{j})$ denotes the weight of the arc from place $P^D_{i}$ to transition $T^D_{j}$, and $Post^D(T^D_{j}, P^D_{i})$ denotes the weight of the arc from transition $T^D_{j}$ to place $P^D_{i}$. $M^D_0 \in \mathbb{N}$ is the initial discrete marking. The discrete marking dynamics is governed by:

\begin{equation}
M^D(k+1) = M^D(k) + W^D \sigma(k){\color{red}\,,}
\end{equation}

\noindent where $W^D=Post^D-Pre^D$ is the discrete incidence matrix and $\sigma(k)$ denote the discrete transition firing sequence vector.

\end{definition}

\begin{figure}[h]
  \centering
  \includegraphics[width=8.5cm]{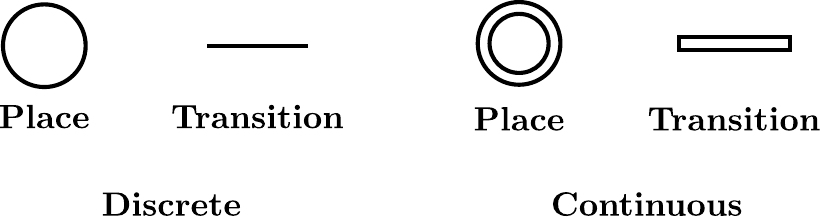}
  \caption{Graphical representation of a DPN showing discrete and continuous places and transitions}
  \label{pt}
\end{figure}

\begin{definition}
A Continuous Petri Net (CPN) is an autonomous PN characterised by real-valued place markings and continuous state evolution \cite{vardakis2025petri}. Graphically, continuous places and transitions are depicted using double lines (see Figure~\ref{pt}), distinguishing them from their discrete counterparts, which use single lines. Formally, a CPN is defined as a five-tuple:

\begin{equation}\label{eq1pe}
N_C = (P^C, T^C, Pre^C, Post^C, M^C_{0}){\color{red}\,,}
\end{equation}

\noindent
where $P^C$ and $T^C$ are finite sets of continuous places and transitions, respectively, with $P^C \cap T^C = \emptyset$. The matrices $Pre^C \in \mathbb{R}^{|T^C| \times |P^C|}$ and $Post^C \in \mathbb{R}^{|T^C| \times |P^C|}$ represent the input and output incidence relations, respectively. The vector $M^C_{0} \in \mathbb{R}^{|P^C|}$ defines the initial marking of the net.
\end{definition}

\begin{definition}
A Timed Continuous Petri Net (TCPN) extends the CPN model by incorporating temporal dynamics \cite{lefebvre2015gradient}. In TCPN, a time function is defined over the set of continuous transitions $T^C$, with $T^C \subseteq \mathbb{R} \cup \{\infty\}$. Each transition $T^C_j$ is associated with a corresponding maximum firing speed $V_j$. The actual firing speed $v_j(T^C_j)$ must satisfy:

\begin{equation}\label{fircond}
v_j(T^C_j) \leq V_j(T^C_j) {\color{red}.}
\end{equation}

\noindent
The marking evolution is governed by:

\begin{equation}
M^C(t + dt) = M^C(t) + W^C v(t) \, dt {\color{red}\,,}
\end{equation}

\noindent
where $W^C = Post^C - Pre^C$ is the continuous incidence matrix and $v(t)$ is the firing speed vector at time $t$. Over a finite time interval $[t_1, t_2]$, the marking dynamics are described by:

\begin{equation}\label{equmark}
M^C(t_2) = M^C(t_1) + W^C \int_{t_1}^{t_2} v(t) \, dt {\color{red}.}
\end{equation}

\end{definition}

\section{Proposed Extended Timed Continuous Petri Net (ETCPN)}\label{sec4}

To unify discrete and continuous dynamics within hybrid systems, an Extended Timed Continuous Petri Net (ETCPN) is proposed. This model integrates the key concepts from Discrete Petri Nets (DPN), Continuous Petri Nets (CPN), and Timed Continuous Petri Nets (TCPN), providing a coherent framework suited to the dynamic structure illustrated in Figure~\ref{sdh}. The ETCPN architecture combines two complementary formalisms: an ordinary discrete PN that captures discrete-event dynamics, such as mode transitions and fault occurrences, and a discrete-time TCPN that represents continuous state evolution governed by linear time-invariant (LTI) dynamics.

\begin{definition}\label{defETCPN}

The ETCPN is formally defined as a quadruple:
\begin{equation}
ETCPN = (N, f, M_0, \Gamma) {\color{red}.}
\end{equation}

\noindent Here $N$ denotes the net structure, $f$ is the function that assigns each node type, $M_0 \in \mathbb{R}$ the initial marking and $\Gamma$ is the timing map. Specifically, the net structure $N$ is defined by the tuple:

\begin{equation}\label{equP}
N = (P, T, Pre, Post) {\color{red}.}
\end{equation}

\noindent As defined in Equation~\eqref{equP}, $P$ represents the finite set of places, which is subdivided into continuous places $P^C$ and discrete places $P^D$. Likewise, $T$ denotes the finite set of transitions, partitioned into continuous transitions $T^C$ and discrete transitions $T^D$. These are formally expressed as:
\begin{eqnarray}
P &=& P^C \cup P^D{\color{red}\,,} \\
T &=& T^C \cup T^D {\color{red}.}
\end{eqnarray}

\noindent
The incidence matrices $Pre$ and $Post$, along with the initial marking $M_0$, follow the standard PN definition.

\noindent
The function $f$ assigns each node a type, indicating whether it is discrete (D) or continuous (C):

\begin{equation}
f: P \cup T \rightarrow \{D, C\} {\color{red}.}
\end{equation}

\noindent
The element $\Gamma$ defines the timing map, which assigns temporal attributes to transitions evolving over time, particularly those involved in discrete-time dynamics. The following assumptions are considered:

\begin{itemize}
  \item For discrete transitions, the timing map associates a fixed delay time $d_j$ to each transition $T_j$.
  
  \item For continuous transitions, a firing speed $v_j(k)$ is assigned. This speed reflects instantaneous values: it corresponds to the maximum firing speed when the transition is enabled, and is zero otherwise. In this case, the marking change occurs instantaneously at sampling instants. According to \cite{david2010discrete}, integrating the firing speed over $ [kT_s, (k+1)T_s] $ yields:
   \begin{equation}
     \int_{kT_s}^{(k+1)T_s} v_j(t) dt = v_j(k) T_s.
   \end{equation}
   
   Thus, the marking update Equation \eqref{equmark} becomes:
   \begin{equation}
   M^C((k+1)T_s) = M^C(kT_s) + W^C v_j(kT_s).
   \end{equation}
   For a switched system, the firing speed $v_j(k)$ depends on the active mode  $q$. Simplifying the notation by denoting $ M^C(k) \equiv M^C(kT_s) $, we obtain:
   \begin{equation}\label{eqdesc}
   M_q^C(k+1) = M_q^C(k) + W_q^C v_j(k){\color{red}.}
   \end{equation}  
   
   Since $ v_j(k) $ is proportional to the marking of the corresponding place (i.e., $ v_j(k) = M^C(k) $), Equation \eqref{eqdesc} is extended to:
   \begin{equation}\label{eqdescm20}
   M_q^C(k+1) = M_q^C(k) + W_q^C M_q^C(k) {\color{red}\,,}
   \end{equation} 
   where $ W_q^C $ is the mode-dependent incidence matrix.

\end{itemize}

\noindent To preserve the coherence of the Petri Net model and its formalism, we specify that:
\begin{itemize}
    \item $Pre$, $Post$, and $M$ take positive integer values when related to discrete PN,
    \item $Pre$, $Post$, and $M$ take real values when related to continuous PN.
\end{itemize}

\begin{remark}

Due to the special structure of the TCPN, if an arc connects a discrete place to a continuous transitions, there must be a same arc connecting this continuous transition to the same discrete place with identical weight.

\end{remark}

\end{definition}

\subsection{Marking Evolution of the ETCPN}

The marking dynamics of an Extended Timed Continuous Petri Net (ETCPN) evolve according to the firing sequence of transitions, which governs the system behaviour. To characterise the marking evolution, it is first necessary to compute the incidence matrix of the ETCPN, defined as:

\begin{equation}\label{mrk}
W_q = Post(P_i, T_j) - Pre(P_i, T_j) {\color{red}.}
\end{equation}

\noindent
According to Definition~\ref{defETCPN}, the incidence matrix $W_q$ can be partitioned as:

\begin{equation}\label{eqmrk}
W_q = 
\begin{bmatrix}
W_q^D & W_q^{CD} \\
W_q^{DC} & W_q^C
\end{bmatrix} {\color{red},}
\end{equation}

\noindent
where $W_q^D$ and $W_q^C$ represent the incidence matrices associated with the discrete-event and discrete-time dynamics, respectively. The matrices $W_q^{CD}$ and $W_q^{DC}$ describe the interactions between discrete and continuous dynamics and which are assumed to be zero for our case. Given the discrete transition firing sequence vector $\sigma(k)$, the overall marking dynamics are expressed as:

\begin{equation}\label{mark1}
 M_q(k+1) = \begin{bmatrix}
M_q^D(k+1) \\
M_q^C(k+1)
\end{bmatrix} = \begin{bmatrix}
M_q^D(k) + W_q^D \sigma(k)\\
M_q^C(k) + W_q^C M_q^C(k) )
\end{bmatrix} = M_q(k) + W_q \left( \sigma(k) + M_q^C(k) \right) {\color{red}.}
\end{equation}

\noindent
From Equation~\eqref{mark1}, the marking vector $M_q(k)$ naturally decomposes into two components: the discrete marking $M_q^D(k) \in \mathbb{N}^{|P^D|}$ and the continuous marking $M_q^C(k) \in \mathbb{R}^{|P^C|}$. Thus, the overall marking can be written as:

\begin{equation}\label{markm}
M_q(k) =
\begin{bmatrix}
M_q^D(k) \\
M_q^C(k)
\end{bmatrix}{\color{red}.}
\end{equation}

\begin{definition}[Enabling Conditions in ETCPN]
\label{def:enabling}

The ETCPN contains two types of transitions, discrete and continuous. Their enabling conditions are defined as follows:

\begin{itemize}
\item \textbf{Discrete Transition:} A discrete transition $T^D_j$ is \textbf{enabled} if and only if:
\begin{enumerate}
\item  For every input place $P_i^D$, the discrete marking meets or exceeds the arc weight: $M_q^D(P_i^D) \geq Pre(P_i^D, T^D_j)$.
\item Any guard condition associated with $T^D_j$ (state variable dependent condition) is satisfied.
\end{enumerate}

When enabled, $T^D_j$ may fire, instantaneously changing the discrete marking according to the incidence relation: $ W_q^D = Post(T^D_j, P^D_i) - Pre(P^D_i, T^D_j)$.

\item \textbf{Continuous Transition:} A continuous transition $T^C_j$ is \textbf{enabled} (or \textbf{active}) if and only if:
\begin{enumerate}
\item The discrete marking $M_q^D$ satisfies the condition for the operational mode $q$ in which $T^C_j$ is active. This is equivalent to the enabling of the discrete transitions that govern the mode switch.
\item The transition is not inhibited by its continuous inputs; that is, for all continuous input places $P_i^C$, the continuous marking must be sufficient to allow the flow (typically, $M_q^C(P_i^C) > 0$ for standard semantics).
\end{enumerate}

When enabled, $T^C_j$ fires continuously and must respect the condition of {\color{red}Equation~\eqref{fircond}}. The continuous marking evolves according to the marking-dependent flow function \label{markdep}. 
\end{itemize}

\end{definition}

\subsection{Hybrid Discrete-Time ETCPN Model representation}

To effectively represent discrete-time hybrid dynamic systems (HDS), the ETCPN formalism is employed to model the various system components, including discrete-time subsystems, discrete events, and their interactions. Figure~\ref{fig5} illustrates the general evolution of the discrete-event dynamics within an HDS. The discrete places $P^D$ define the $D$ modes of the HDS, which switch from one mode to another through discrete transitions $T^D$. Mode switching is governed by the following conditions:

\begin{itemize}
  \item Activation of an upstream place, indicated by the presence of a token.
  \item Satisfaction of a switching condition, evaluated through a test function.
\end{itemize}

The linear time-invariant (LTI) subsystems of the HDS can be mapped into the ETCPN framework under the following assumptions:

\begin{itemize}
  \item Each signal variable in the state-space representation is associated with a continuous place $P^C$.
  \item The number of continuous places is equal to the number of continuous transitions.
  \item Each continuous transition has a single input place, with the input incidence matrix defined as:
  
  \begin{equation}\label{eqind}
  Pre(P^C_{i}, T^C_{j}) =
  \begin{cases}
  1 & \text{if } i = j {\color{red}\,,} \\
  0 & \text{if } i \neq j {\color{red}.}
  \end{cases}
  \end{equation}
  
  \item The initial marking of the discrete places $P^D$ defines the initial operational mode of the system and must be specified accordingly. For the continuous places $P^C$, the initial marking is typically set to represent the initial state of the LTI subsystem, which can be zero, except for designated input places which may receive nonzero initial markings to represent external inputs.
  \item The incidence matrix integrates the state-space model matrices, as formalised in Theorem~\ref{theo1}.
\end{itemize}

\begin{theoreme}\label{theo1}
Given the ETCPN definitions and the assumptions presented above, the incidence matrix $W_q^C$ can be formulated as follows:

\begin{itemize}
   \item \textbf{Without output:}

  \begin{equation}\label{eqaut}
  W_q^C =
  \begin{bmatrix}
  I_{p \times p} & 0_{p \times n} \\
  B_q & A_q 
  \end{bmatrix} - I_{f \times f}{\color{red}.}
  \end{equation}

  \item \textbf{With output:}

  \begin{equation}\label{eqnaut}
  W_q^C =
  \begin{bmatrix}
  I_{p \times p} & 0_{p \times n} & 0_{p \times r} \\
  B_q & A_q& 0_{n \times r}\\
  C_q B_q & C_q A_q & 0_{r \times r}
  \end{bmatrix} - I_{f \times f}{\color{red}.}
  \end{equation}

\end{itemize}
\end{theoreme}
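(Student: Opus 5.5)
The approach is to read the continuous marking of the ETCPN as a stacked signal vector and match the discrete-time marking update \eqref{eqdescm20} with the switched dynamics \eqref{sds1}--\eqref{sds2}, mode by mode. Fix an active mode $q$ (by the enabling conditions of Definition~\ref{def:enabling} exactly one mode is active). By the first mapping assumption, one continuous place is associated with each signal variable. In the case without output, we therefore order the continuous places as $M_q^C(k)=\begin{bmatrix}u(k)^\top & x(k)^\top\end{bmatrix}^\top$, so $f=p+n$. In the case with output, we order them as $M_q^C(k)=\begin{bmatrix}u(k)^\top & x(k)^\top & y(k)^\top\end{bmatrix}^\top$, so $f=p+n+r$.

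\textbf{Step 1: reduce to a one-step transition matrix.} By \eqref{eqind}, $Pre^C=I_{f\times f}$, since each continuous transition has a single input place. Because $v_j(k)=M^C_q(k)$, equation \eqref{eqdescm20} reads
\[
M_q^C(k+1)=(I+W_q^C)M_q^C(k)=Post^C_q\,M_q^C(k).
\]
Hence it suffices to show that $Post^C_q$ equals the block matrix appearing in the statement. Then $W_q^C=Post_q^C-Pre^C$ is that block matrix minus $I_{f\times f}$, which is exactly the claimed form.

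\textbf{Step 2: identify $Post^C_q$ row-block by row-block.}
\begin{itemize}
\item The state rows must reproduce \eqref{sds1}, $x(k+1)=B_qu(k)+A_qx(k)$. This gives the row block $[B_q\;A_q]$, extended by $0_{n\times r}$ when outputs are present.
\item The input places hold the external input as an initial marking, and the input is treated as persistent over a sampling step, $u(k+1)=u(k)$. This gives the row block $[I_{p\times p}\;0]$: the input transition simply returns its token.
\item In the case with output, the output place must contain $y(k+1)=C_qx(k+1)$. Substituting \eqref{sds1} gives $y(k+1)=C_qB_qu(k)+C_qA_qx(k)$, so the row block is $[C_qB_q\;C_qA_q\;0_{r\times r}]$. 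The zero block reflects that the output place does not feed back into any dynamics.
\end{itemize}
Stacking these three row blocks gives \eqref{eqnaut}; dropping the output rows and columns gives \eqref{eqaut}.

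\textbf{Main obstacle.} The delicate step is justifying the input and output rows. Nothing earlier in the paper states that the input is held constant, so I would introduce that convention explicitly, consistent with the assumption on nonzero initial markings of input places. I would also note that the output is computed one step ahead, $y(k+1)$ from $M^C(k)$, rather than as $C_qx(k)$, so the stated form holds under that indexing convention. Finally, I would check that entries of $Post$ are allowed to be real and possibly negative, which the continuous-PN convention of Definition~\ref{defETCPN} permits, and that the mode dependence enters only through $A_q,B_q,C_q$.
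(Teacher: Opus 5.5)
Your proof is correct and follows essentially the same route as the paper's Appendix A. You set $Pre^C=I_{f\times f}$, write $M_q^C(k+1)=(I+W_q^C)M_q^C(k)=Post_q^C\,M_q^C(k)$, and read off $Post_q^C$ block-row by block-row from the system equations; the paper also silently uses $u(k+1)=u(k)$ and only says the output case follows ``similarly,'' so your explicit output row $[C_qB_q\;\;C_qA_q\;\;0_{r\times r}]$ from $y(k+1)=C_qx(k+1)$ and your stated conventions fill in what the paper leaves implicit.
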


\noindent The detailed proof of Theorem~\ref{theo1} is provided in \hyperref[proofA]{Appendix A}.

\medskip 
The evolution of the ETCPN model replicates the behaviour of the discrete-time system defined in Equations~\eqref{sds1} and \eqref{sds2}. 
Importantly, the discrete state-space parameters are directly embedded within the $Post$ matrix.

\noindent Figure~\ref{fig5} illustrates the TCPN representation for the case without output. The continuous places $P^C_i$, for $i = 1, \ldots, p$, represent the input signals $u(k)$, while places $P^C_i$, for $i = p+1, \ldots, p+n$, correspond to the discrete-time states $x_i$, where $i = 1, \ldots, n$. The arc weights from continuous transitions to continuous places reflect the elements of the matrices $A_q \in \mathbb{R}^{n \times n}$ and $B_q \in \mathbb{R}^{n \times p}$.

\begin{figure}[H]
  \centering
  \includegraphics[scale=0.6]{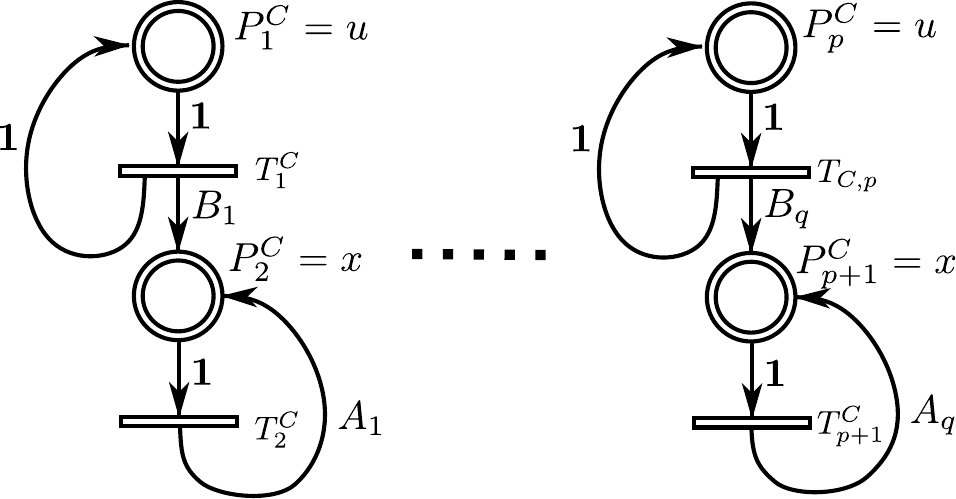}
  \caption{TCPN representation of ETCPN submodel (without output case)}\label{fig5}
\end{figure}

\noindent Figure~\ref{fig6} presents the TCPN structure for the case with output, where the graphical dynamics are extended to incorporate output variables.

\begin{figure}[H]
  \centering
  \includegraphics[scale=0.6]{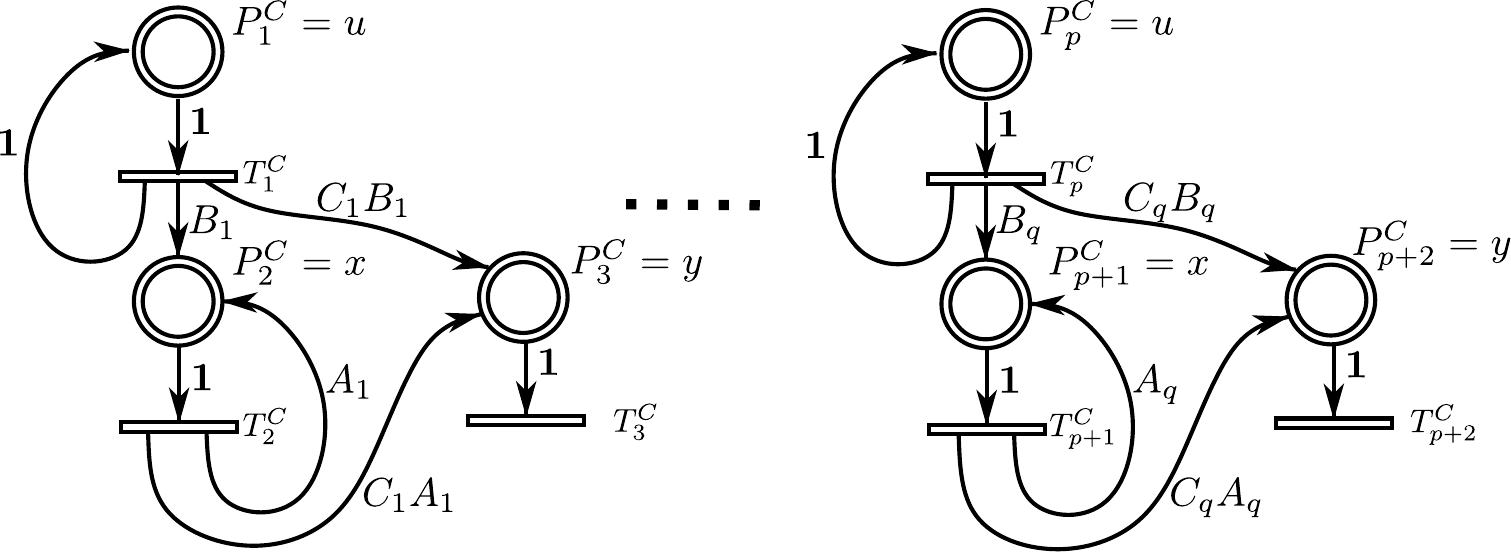}
  \caption{TCPN representation of ETCPN submodel (with output case)}\label{fig6}
\end{figure}

\noindent The interaction between the discrete and continuous dynamics of the HDS, which governs the switching conditions, is captured using upward and downward level crossings, as shown in Figure~\ref{fig7}. The LTI models are first constructed, and each switching condition is subsequently represented through the TCPN structure.

\begin{figure}[H]
  \centering
  \includegraphics[scale=0.6]{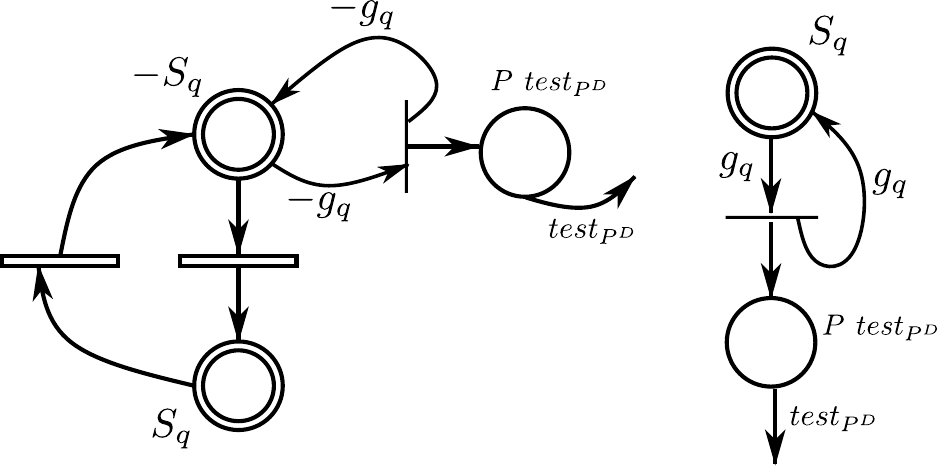}
  \caption{Graphical representation of upward and downward level crossings for switching conditions}\label{fig7}
\end{figure}

The arcs labeled $test_{P^D}$ in Figure~\ref{fig7} represent test arcs, a fundamental component of the ETCPN switching logic. A test arc from a discrete place $P^D$ to a transition does not consume tokens; its sole function is to check the marking of $P^D$. This mechanism formally encodes the enabling of transitions based on the discrete state of the system. Specifically, a continuous transition is active if and only if the discrete place associated with its operational mode is marked. This creates a direct, formal link between the discrete mode (defined by the marking of $P^D$ places) and the activation of the corresponding continuous dynamics (subsystems $A_q, B_q, C_q$ from Figure~\ref{fig5}), thereby implementing the core switching behavior of the hybrid system.

\subsection{Hybrid Evolution of the ETCPN Dynamics Model}

To effectively model hybrid dynamic systems (HDS) using the ETCPN framework, it is essential to establish the interaction and consistency between the continuous and discrete dynamics and their respective markings.

\noindent Assuming reciprocal arcs such that $W_q^{DC} = 0$ and $W_q^{CD} = 0$, Equation~\eqref{eqmrk} simplifies to:

\begin{equation}\label{eqmrk1}
W_q =
\begin{bmatrix}
W_q^D & 0 \\
0 & W_q^C
\end{bmatrix}{\color{red}.}
\end{equation}

\noindent This structure implies that no state jumps occur during the activation of each mode. Nonetheless, it remains necessary to define interaction equations that link the discrete and continuous markings. For a hybrid system, the incidence matrix $ W_q^C $ depends on the active discrete mode $ q $, which is encoded in the discrete marking $ M_q^D $. Following the approach in~\cite{HAMDI2009310}, we introduce the coupling through matrix $Z_q$, which defined as:

\begin{equation}\label{zeq}
Z_q = M_q^D (k) \otimes I_{f \times f}{\color{red}\,,}
\end{equation}

\noindent where $ \otimes $ denotes the Kronecker product and $ I_{f \times f} $ is an identity matrix of appropriate dimensions. To couple the discrete and continuous dynamics, we project the continuous incidence matrix through $Z_q$, which serves as a mode selector~\cite{davrazos2007modeling}. The term $ W_q^C Z_q $ effectively "selects" the active subsystem's dynamics based on $ M_q^D $. Thus, Equation \eqref{eqdescm20} is extended to:

\begin{equation}\label{mrkz}
M_q^C(k+1) = M_q^C(k) + W_q^C Z_q M_q^C(k){\color{red}.}
\end{equation}

\noindent This can be rewritten as:

\begin{eqnarray}
M_q^C(k+1) &=& \left( I + W_q^C Z_q \right) M_q^C(k) \nonumber\\
    &=& \left( I + W_q^C (M_q^D(k) \otimes I_{f \times f}) \right) M_q^C(k) {\color{red}.}
\end{eqnarray}

\noindent To explicitly show the dependence on discrete transitions. Substituting yields:

\begin{equation}\label{eqn33}
M_q^C(k+1) = \underbrace{\left(I + W_q^C \left(M_0^D(k) + W^D \sigma(k)\right) \otimes I_{f \times f} \right)}_{\bar{A}_q} M_q^C(k){\color{red}.}
\end{equation}

\noindent Based on Equations~\eqref{eqau2},~\eqref{sds}, and~\eqref{eqn33} the discrete-time representation of the HDS can be expressed as an augmented system:

\begin{eqnarray}
X(k+1) &=& \bar{A}_q X(k) {\color{red}\,,}\\
Y(k) &=&\begin{bmatrix}0 & C_q\end{bmatrix} X(k) {\color{red}\,,}\\
X(k) &=& \begin{bmatrix} u\\
 x\end{bmatrix}{\color{red}\,,}
\end{eqnarray}

\noindent where $X(k) \in \mathbb{R}^{P^C}$ is the augmented discrete-time state vector, $Y(k)$ denotes the output vector, and $\bar{A}_q \in \mathbb{R}^{P^C \times P^C}$ represents the mode-dependent system matrix.

\section{ETCPN-Based Residual Generator Framework}\label{sec5}

Building upon the modelling approach presented above, this section introduces the residual generation framework, illustrated in Figure~\ref{figge}. The proposed scheme is grounded in the ETCPN hybrid observer, which synthesizes a switched observer directly from the ETCPN model. This observer structure enables the systematic generation of residuals essential for fault detection.

\noindent For the estimation process, it is assumed that the active mode is unknown. Accordingly, both the ETCPN switched model and the proposed observer operate in parallel to provide accurate state and output estimations.

\noindent The output estimation task is divided into two complementary parts:

\begin{itemize}
  \item \textbf{Discrete-event observer:} responsible for monitoring and estimating the evolution of discrete markings and transition firings.
  \item \textbf{Discrete LTI dynamic observer:} responsible for estimating the continuous state evolution governed by LTI dynamics.
\end{itemize}

\noindent Figure~\ref{figge} depicts the interaction between these two components. The marking estimation is determined based on the observed system markings, estimated firings, and the switching logic. This estimated marking, in conjunction with the system input and output, drives the continuous state estimation process. Together, these estimations form the basis for residual generation, which is subsequently used for fault detection.

\begin{figure}[H]
  \centering
  \includegraphics[scale=0.7]{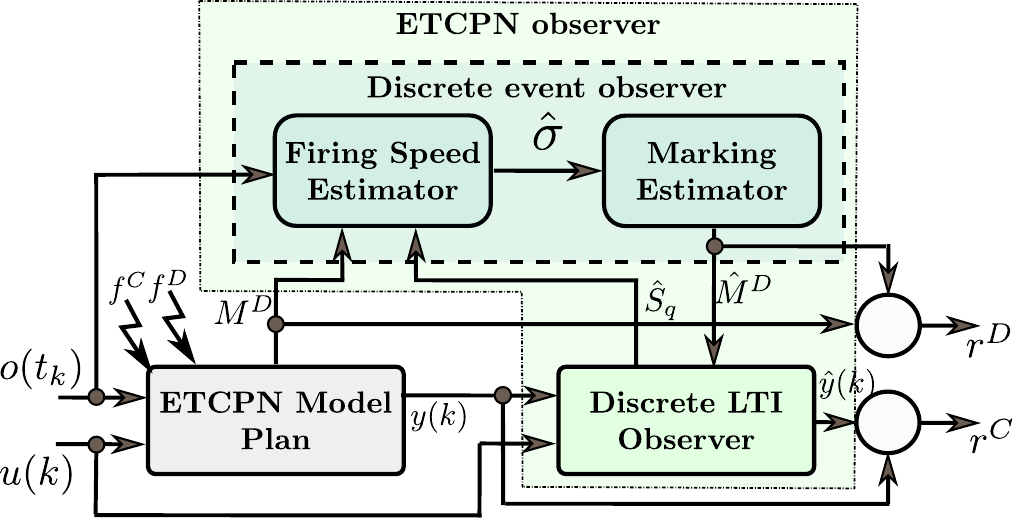}
  \caption{Architecture of the ETCPN-based hybrid observer framework}\label{figge}
\end{figure}

\subsection{Discrete Event Observer Design}

The design of the ETCPN discrete event observer is based on the classical reduced-order Luenberger observer approach~\cite{HAMDI2009310}, which enables the estimation of discrete markings. In this formulation, both the marking and firing vectors are partitioned into measurable and non-measurable components, defined as follows:

\begin{equation}\label{mark}
M^D(k) = 
\begin{bmatrix}
M^{D_o}(k) \\ M^{D_{\overline{o}}}(k)
\end{bmatrix}, \quad
\sigma(k) =
\begin{bmatrix}
\sigma^{o}(k) \\ \sigma^{\overline{o}}(k)
\end{bmatrix}{\color{red}\,,}
\end{equation}

\noindent where $M^{D_o}(k) \in \mathbb{N}^{P^{D_o}}$ and $M^{D_{\overline{o}}}(k) \in \mathbb{N}^{P^{D_{\overline{o}}}}$ represent the markings of measured and unmeasured places, respectively. Similarly, $\sigma^{o}(k) \in \mathbb{N}^{m_o}$ and $\sigma^{\overline{o}}(k) \in \mathbb{N}^{m_{\overline{o}}}$ correspond to the firing vectors of measured and unmeasured transitions.

\noindent The general state-space representation of the ETCPN system is given by:

\begin{eqnarray}\label{outy}
E \mathcal{Y}^D(k+1) &=& A \mathcal{Y}^D(k) {\color{red}\,,}\\
\overline{\psi}(k) &=& H \mathcal{Y}^D(k){\color{red}.}
\end{eqnarray}

\noindent Where:

\begin{equation*}
\mathcal{Y}^D(k) =
\begin{bmatrix}
M^{D}(k) \\ \sigma(k)
\end{bmatrix} \in \mathbb{N}^{P^D + m}, \quad
\overline{\psi}(k) =
\begin{bmatrix}
M^{D_o}(k) \\ \sigma^o(k)
\end{bmatrix} \in \mathbb{N}^{P^{D_o} + m_o}{\color{red}\,,}
\end{equation*}

\noindent and:

\begin{equation*}
E = 
\begin{bmatrix}
I_{P^D} & W^D
\end{bmatrix}, \quad
A = 
\begin{bmatrix}
I_{P^D} & 0_{P^D \times m}
\end{bmatrix}{\color{red}\,,}
\end{equation*}

\begin{equation*}
H =
\begin{bmatrix}
I_{P^{D_o}} & 0_{P^{D_o} \times P^{D_{\overline{o}}}} & 0_{P^{D_o} \times m_o} & 0_{P^{D_o} \times m_{\overline{o}}} \\
0_{m \times P^{D_o}} & 0_{m \times P^{D_{\overline{o}}}} & I_{m_o} & 0_{m_o \times m_{\overline{o}}}
\end{bmatrix}{\color{red}.}
\end{equation*}

\noindent The global marking estimation based on Equation~\eqref{outy} is achieved using the following observer design:

\begin{eqnarray}
\hat{M}^D(k+1) &=& F \hat{M}^D(k) + G \psi(k) {\color{red}\,,}\\
\hat{\mathcal{Y}}^D(k) &=& R \hat{M}^D(k) + N \psi(k){\color{red}.}
\end{eqnarray}

\noindent Where $F$, $G$, $R$, and $N$ are observer gain matrices, selected to ensure accurate estimation of the system outputs. Further details on their determination can be found in~\cite{HAMDI2009310}.

\subsection{Discrete-Time LTI Observer Design}

This subsection introduces the switched Luenberger observer, designed to estimate the discrete-time dynamics of the ETCPN model. The observer formulation is expressed as:

\begin{eqnarray} \label{eq43}
\hat{x}(k+1) &=& A_q \hat{x}(k) + B_q u(k) + L_q (y(k) - \hat{y}(k)) {\color{red}\,,}\\
\hat{y}(k) &=& C_q \hat{x}(k){\color{red}.}
\end{eqnarray}

\noindent Where $L_q$ denotes the observer gain associated with subsystem $q$. This formulation leads to a reformulated state-space representation:

\begin{equation}\label{osv45}
\hat{x}(k+1) = \underbrace{\left(A_q - L_q C_q\right)}_{\underline{A}_q} \hat{x}(k) + \overline{B}_q \, \overline{u}(k){\color{red}\,,}
\end{equation}

\noindent where $\overline{B}_q = \begin{bmatrix} B_q & L_q \end{bmatrix}$ and $\overline{u}(k) = \begin{bmatrix} u(k) & y(k) \end{bmatrix}$.

\noindent To maintain consistency with the ETCPN structure, this observer can be integrated into the ETCPN framework through the following incidence matrix:

\begin{equation}\label{wmat}
W^C_{o_q} =
\begin{bmatrix}
I_{p \times p} & 0_{p \times n} \\
\overline{B}_q & \underline{A}_q- I_{n \times n} %
\end{bmatrix}
- I_{f \times f}{\color{red}.}
\end{equation}

\noindent To compute the observer gains $L_q$ and guarantee observer convergence, a design procedure based on Linear Matrix Inequality (LMI) constraints is applied, as formalized in the following theorem.

\begin{theoreme}\label{theo2}

The estimation error $e(k) = x(k) - \hat{x}(k)$ of the ETCPN-based hybrid observer converges asymptotically to zero for any switching sequence between discrete modes $q \in Q$, if there exist symmetric positive definite matrices $T_q = T_q^T > 0$ for each mode $q$, and matrices $G_q$ and $F_q$ of appropriate dimensions, such that the following LMI is satisfied for every possible transition from mode $q$ to mode $\acute{q}$:

\begin{equation}\label{theob}
\begin{bmatrix}
T_q - G_q - G_q^T & G_q^T A_q^T - C_q^T F_q^T \\
\ast & -T_{\acute{q}}
\end{bmatrix} \prec 0{\color{red}.}
\end{equation}

\noindent Where $T_q = P_q^{-1}$ and $T_{\acute{q}} = P_{\acute{q}}^{-1}$. The observer gain $L_q$ for each discrete mode $q$ is then given by $L_q=F_q C_q (C_q G_q^{T})^{-1}$.

\end{theoreme}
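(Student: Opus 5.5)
The plan is a switched quadratic Lyapunov argument on the estimation error, with the LMI \eqref{theob} turned into a mode-dependent Lyapunov decrease condition by a slack-variable (de Oliveira-type) relaxation.

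\textbf{Step 1: error dynamics.} I will work in the fault-free case ($f\equiv 0$) and assume that the discrete-event observer supplies the correct active mode, i.e.\ $\hat q(k)=q(k)$. Subtracting \eqref{eq43} from \eqref{sds1}--\eqref{sds2} gives
\begin{equation*}
e(k+1)=(A_q-L_qC_q)\,e(k)=\underline{A}_q\,e(k),\qquad q=q(k).
\end{equation*}
This is a discrete-time switched linear system. The proof therefore reduces to showing uniform asymptotic stability of this system under arbitrary switching.

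\textbf{Step 2: Lyapunov candidate and target inequality.} I take the mode-dependent function $V(k)=e(k)^TP_{q(k)}e(k)$ with $P_q=T_q^{-1}>0$. Suppose the mode goes from $q$ at time $k$ to $\acute q$ at time $k+1$. Then
\begin{equation*}
\Delta V = e^T\big(\underline{A}_q^TP_{\acute q}\underline{A}_q-P_q\big)e .
\end{equation*}
So it suffices to show that \eqref{theob} implies
\begin{equation*}
\underline{A}_q^TP_{\acute q}\underline{A}_q-P_q\prec 0 \quad\text{for all pairs } (q,\acute q)\in\mathcal{Q}\times\mathcal{Q}.
\end{equation*}
Arbitrary switching is what forces us to require this for every pair.

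\textbf{Step 3: from \eqref{theob} to the target inequality.} This is the core of the argument.

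First I rewrite the off-diagonal block as $G_q^T\underline{A}_q^T$. This needs the identity $C_q^TF_q^T=G_q^TC_q^TL_q^T$, i.e.\ $F_qC_q=L_qC_qG_q$ (up to transposition conventions). It is exactly what the prescribed gain formula $L_q=F_qC_q(C_qG_q^T)^{-1}$ is meant to enforce, under the assumption that $C_qG_q^T$ is invertible. I expect this to be the \emph{main obstacle}. The stated formula is only consistent if one imposes a structural constraint linking $F_q$ and $G_q$ (for instance $F_q=L_qC_qG_q$, equivalently using the change of variables $F_q:=L_qC_qG_q$ with $C_q$ of full row rank). I would first try to state this as a standing assumption, or reinterpret $F_q$ as $L_q C_q G_q$ so that the block becomes linear in the decision variables. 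I would then check that the paper's formula recovers $L_q$ from it.

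Next, the upper-left block $T_q-G_q-G_q^T\prec0$ forces $G_q$ to be nonsingular. From $(T_q-G_q)^TT_q^{-1}(T_q-G_q)\succeq0$ we get
\begin{equation*}
-G_q^TP_qG_q\preceq T_q-G_q-G_q^T .
\end{equation*}
Hence \eqref{theob} implies
\begin{equation*}
\begin{bmatrix}-G_q^TP_qG_q & G_q^T\underline{A}_q^T\\ \underline{A}_qG_q & -T_{\acute q}\end{bmatrix}\prec0 .
\end{equation*}
A congruence with $\mathrm{diag}(G_q^{-T},I)$ yields
\begin{equation*}
\begin{bmatrix}-P_q & \underline{A}_q^T\\ \underline{A}_q & -T_{\acute q}\end{bmatrix}\prec0 .
\end{equation*}
A Schur complement with respect to $-T_{\acute q}=-P_{\acute q}^{-1}$ then gives the target inequality of Step 2.

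\textbf{Step 4: conclusion.} There are finitely many pairs $(q,\acute q)$, so there exist constants $\alpha,\beta,\varepsilon>0$ with
\begin{equation*}
\alpha\|e\|^2\le V\le\beta\|e\|^2,\qquad \Delta V\le-\varepsilon\|e\|^2 .
\end{equation*}
It follows that $V(k+1)\le(1-\varepsilon/\beta)V(k)$ along any switching sequence. Therefore $e(k)\to0$ exponentially, and in particular asymptotically, which proves Theorem~\ref{theo2}.
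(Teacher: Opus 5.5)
Your Steps 1--4 follow the paper's Appendix~B: the same switched Lyapunov function $e^TP_qe$, Schur complement, congruence with $G_q$, slack bound $-G_q^TT_q^{-1}G_q\preceq T_q-G_q-G_q^T$, and substitution $F_qC_q=L_qC_qG_q$. The difference is that you run it from the LMI to the decrease condition. That makes sufficiency explicit, since the slack bound only gives an implication in that direction. You also supply the uniform decay estimate the paper omits. Your intermediate matrices are consistent, whereas the paper's Schur step pairs $\underline{A}_q^TP_{\acute q}$ with $-P_{\acute q}^{-1}$. Requiring the LMI for all pairs, including $q=\acute q$, is correct under arbitrary switching; the paper's two-mode illustration lists only $1\to2$ and $2\to1$.

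The genuine gap is the step you flagged, and the paper has it too: it simply declares $F_qC_q=L_qC_qG_q$. None of your proposed ways around it work.

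\emph{The gain formula does not enforce the identity.} $C_qG_q^T$ is $r\times n$, so it is invertible only if $r=n$. Even then, $L_q=F_qC_q(C_qG_q^T)^{-1}$ gives $F_qC_q=L_qC_qG_q^T$, which is the needed identity only for symmetric $G_q$.

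\emph{The reinterpretation $F_q:=L_qC_qG_q$ does not fit.} That matrix is $n\times n$, whereas $F_q$ enters as $C_q^TF_q^T$ and must be $n\times r$.

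\emph{The identity cannot be a standing assumption on otherwise free $(T_q,G_q,F_q)$,} because the statement is then false. Take a single mode with $A=\left[\begin{smallmatrix}2&1\\0&0\end{smallmatrix}\right]$ and $C=[0\;\,1]$. The choice $T=I$, $G=\left[\begin{smallmatrix}1&2\\-2&1\end{smallmatrix}\right]$, $F=[5\;\,0]^T$ makes the off-diagonal block vanish, so the LMI equals $-I\prec0$. Yet the unobservable eigenvalue $2$ lies in the spectrum of every $A-LC$, so no gain $L$ can stabilize the error.

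The identity must therefore be built into the LMI problem itself. One option is to add the linear constraint $C_qG_q=V_qC_q$ and set $L_q=F_qV_q^{-1}$. Here $V_q$ is nonsingular when $C_q$ has full row rank, since $G_q$ is nonsingular. Your Step~3 then goes through verbatim. Another option is to move the slack to the $P_{\acute q}$ side:
\[
\begin{bmatrix}
-P_q & A_q^TG_q-C_q^TF_q^T\\
\ast & P_{\acute q}-G_q-G_q^T
\end{bmatrix}\prec0,\qquad L_q=G_q^{-T}F_q.
\]
For this form $A_q^TG_q-C_q^TF_q^T=\underline{A}_q^TG_q$ holds identically. The same slack, congruence and Schur steps, now applied to the lower-right block, yield $\underline{A}_q^TP_{\acute q}\underline{A}_q-P_q\prec0$ with no extra hypothesis.
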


The detailed proofs of Theorem~\ref{theo2} are provided in \hyperref[proofB]{Appendix B}. To illustrate the practical application of these LMI conditions, consider a simple system with two discrete modes. The stability of the observer must be guaranteed for all possible switching scenarios, which in this case are: 1) switching from Mode 1 to Mode 2, and 2) switching from Mode 2 to Mode 1. This requires solving a set of LMIs for each switching case.

\begin{itemize}
  \item \textbf{Case 1: Mode 1 switches to Mode 2 ($q=1, \acute{q}=2$)}
  For this switching event, the LMI \eqref{theob} is instantiated with the matrices corresponding to the initial mode ($q=1$) and the Lyapunov matrix for the target mode ($\acute{q}=2$). The resulting LMI is given by:
  
  \begin{equation}\label{LMI_1to2}
  \begin{bmatrix}
  T_1 - G_1 - G_1^T & G_1^T A_1^T - C_1^T F_1^T \\
  \ast & -T_2
  \end{bmatrix} \prec 0 {\color{red}.}
  \end{equation}
  
  \item \textbf{Case 2: Mode 2 switches to Mode 1 ($q=2, \acute{q}=1$)}
  Similarly, for the reverse switch, the LMI is instantiated with the matrices of mode $q=2$ and the Lyapunov matrix for mode $\acute{q}=1$:
  
  \begin{equation}\label{LMI_2to1}
  \begin{bmatrix}
  T_2 - G_2 - G_2^T & G_2^T A_2^T - C_2^T F_2^T \\
  \ast & -T_1
  \end{bmatrix} \prec 0 {\color{red}.}
  \end{equation}
\end{itemize}

To design the observer, the LMIs \eqref{LMI_1to2} and \eqref{LMI_2to1} must be solved simultaneously for the decision variables $T_1 \succ 0$, $T_2 \succ 0$, $G_1$, $G_2$, $F_1$, and $F_2$.

The choice of the LMI formulation in Theorem 2 is fundamentally dictated by the switched-system nature of the ETCPN model. The discrete Petri Net governs the transitions between continuous operational modes, each with distinct dynamics $(A_q, B_q, C_q)$. This switching behavior requires a mode-dependent Lyapunov stability condition that must hold for all possible transitions $q \rightarrow \acute{q}$ in the net's reachability graph. Consequently, the LMI structure explicitly encodes these transitions as individual constraints. Moreover, the matrices $A_q$ and $C_q$ within each LMI are not arbitrary but are structurally derived from the ETCPN's continuous subnet and sensor configuration, respectively. As such, the LMI constraints are not merely a mathematical tool but a direct formalization of the stability conditions imposed by the interconnected discrete and continuous structure of the ETCPN.

Once the ETCPN-based residual generator is designed, the generated residuals can be used for fault detection. Figure~\ref{fig_res} presents the overall detection scheme, highlighting the interaction between the nominal hybrid and faulty systems and the corresponding detection processes.

\begin{figure}[H]
  \centering
  \includegraphics[scale=0.7]{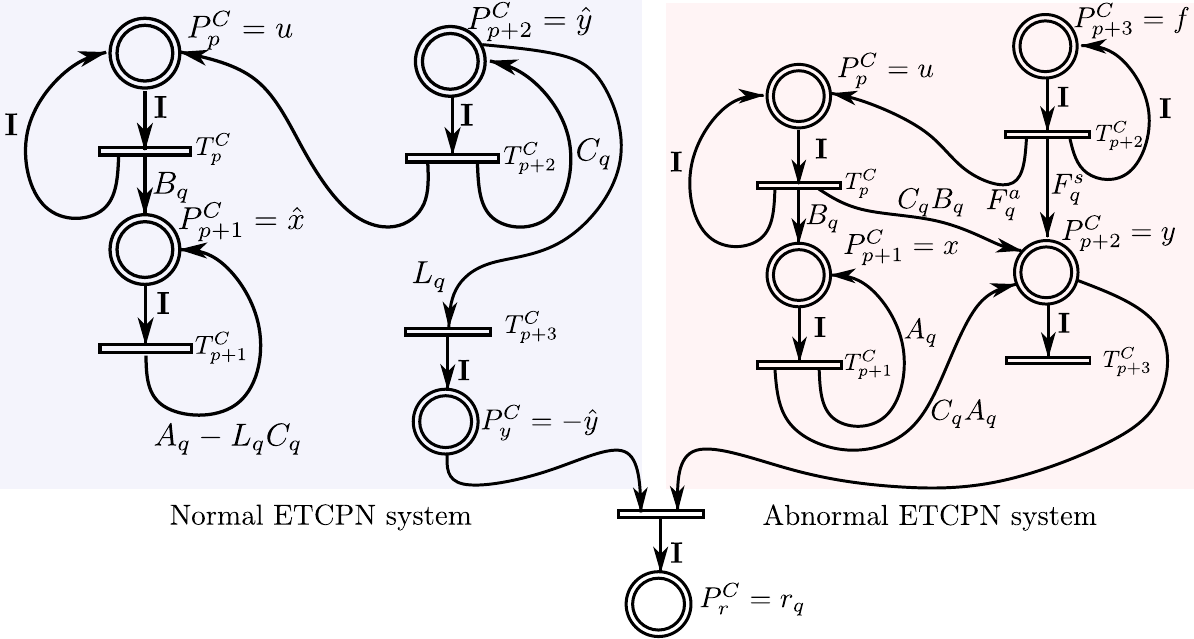}
  \caption{Fault detection process using the ETCPN-based residual generator}\label{fig_res}
\end{figure}

\subsection{Residual Generation and Fault Detection Methodology}
In the proposed ETCPN framework, residual signals serve as crucial indicators for fault detection by quantifying the deviation between actual system measurements and the estimates provided by the hybrid observer. These residuals are designed to capture inconsistencies caused by potential faults while remaining insensitive to normal process variations and measurement noise. In the proposed scheme, residuals are defined based on the system state vector rather than the outputs.

In the proposed scheme, residuals are defined based on the system state vector rather than the outputs. This formulation enables the detection of subtle deviations in system dynamics by comparing the actual state vector $x(k)$ with its estimated counterpart $\hat{x}(k)$, thus generating the state residual $r_x(k)$
    \begin{equation}
        r_x(k) = x(k) - \hat{x}(k) {\color{red}.}
    \end{equation}
  
Under fault-free conditions, these residuals remain close to zero, capturing only normal process variations and measurement noise. However, when faults occur, the residuals deviate significantly from zero, indicating abnormal system behaviour. State residuals are particularly effective for detecting faults that affect internal system dynamics, such as actuator malfunctions, parameter drifts, or component degradation. For example, partial actuator failures or gain changes in control loops directly influence the state evolution and are well captured by $r_x(k)$.

\medskip

\noindent In addition to state residuals, the proposed ETCPN framework supports the generation of discrete-event and output residuals, which offer complementary diagnostic capabilities. Discrete-event residuals $r_\psi(k)$, defined as the difference between observed and estimated discrete events, are useful for identifying logical faults, such as erroneous event triggering, missing events, or incorrect mode transitions. Similarly, output residuals $r_y(k)$, computed from the discrepancy between the measured output and its estimated value, are advantageous for detecting faults that manifest directly at the system outputs. Typical examples include sensor malfunctions, measurement bias, or signal transmission errors.

\medskip

\noindent Depending on the application requirements, residuals can be selectively used or combined. State residuals offer deep insight into dynamic faults, discrete-event residuals target logical and sequencing anomalies, while output residuals capture faults visible at the system boundaries. This flexibility enhances the robustness and diagnostic coverage of the proposed ETCPN-based fault detection methodology, allowing it to address a wide range of fault types in hybrid discrete-continuous systems.

\medskip

\noindent In this study, residual-based anomaly detection is applied primarily to the state residuals $r_x(k)$. However, the same detection methodology can be extended to discrete-event residuals $r_\psi(k)$ and output residuals $r_y(k)$ depending on the fault scenario and system characteristics.

\subsubsection{Anomaly Detection Techniques}

The generated residuals are analysed using three semi-supervised anomaly detection techniques, trained exclusively on fault-free data to avoid the need for labelled fault samples:

\begin{itemize}

    \item \textbf{One-Class Support Vector Machine (OC-SVM)}~\cite{scholkopf2001estimating}: 
    This kernel-based method learns a decision boundary that encloses the majority of the normal residual distribution in a high-dimensional feature space. The boundary is constructed to exclude a small fraction of potential anomalies, controlled by the hyperparameter $\nu$. During monitoring, residuals falling outside this learned boundary are flagged as anomalous. OC-SVM is particularly effective for capturing global deviations and is robust to small variations in the normal data.

\item \textbf{Support Vector Data Description (SVDD)}~\cite{tax1999support}: 
SVDD is a kernel-based anomaly detection technique that aims to encapsulate fault-free residual patterns within a hypersphere in the transformed feature space. The model learns to tightly enclose normal data points, while allowing for some flexibility through a user-defined parameter to accommodate natural variations and noise. Residuals falling outside this boundary are flagged as anomalies, making SVDD particularly effective in capturing abrupt or significant deviations that may indicate faults in the system dynamics. Furthermore, the kernel function provides the capability to model non-linear residual distributions, enhancing its fault detection versatility.

    \item \textbf{Elliptic Envelope (EE)}~\cite{rousseeuw1999fast}: 
    EE fits a multivariate Gaussian distribution to the residual data generated under fault-free conditions. An ellipsoidal decision boundary is derived to encompass the majority of the normal observations. Residuals that fall significantly outside this ellipsoid are classified as anomalies. This approach is well suited when residuals approximately follow a Gaussian distribution and offers a simple yet effective solution for multivariate outlier detection.

\end{itemize}

\noindent These techniques operate directly on residual features and offer complementary detection capabilities, enhancing the robustness of the proposed fault detection framework.

\medskip

\noindent The schematic in Figure~\ref{fig_detection_schematic} summarizes the residual generation process and the integration of semi-supervised anomaly detection methods for fault decision-making.

\begin{figure}[h!]
  \centering
  \includegraphics[width=4.5cm]{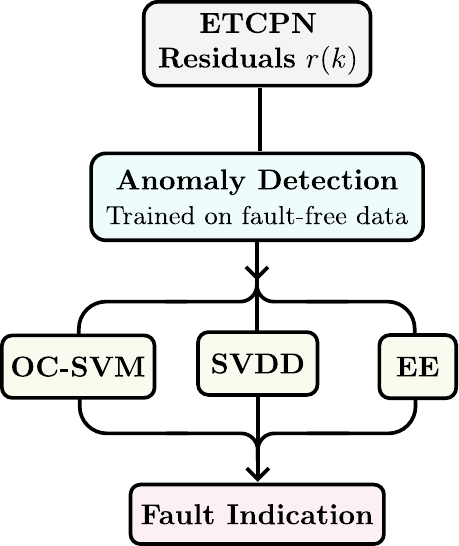}
  \caption{Residual-based fault detection scheme using ETCPN observer and semi-supervised anomaly detection methods}\label{fig_detection_schematic}
\end{figure}

\subsubsection{Detection performance evaluation metrics}

The performance of the proposed fault detection approach is quantitatively assessed using the following widely adopted evaluation metrics:

\begin{itemize}
  \item \textbf{Accuracy (ACC):}
  \begin{equation}
  Accuracy = \frac{TP + TN}{TP + TN + FP + FN} {\color{red}.}
  \end{equation}
  Represents the overall proportion of correctly classified instances, both fault-free and faulty.

  \item \textbf{True Positive Rate (Recall):}
  \begin{equation}
  TPR = \frac{TP}{TP + FN} {\color{red}.}
  \end{equation}
  Measures the ability to correctly identify actual faults, also referred to as detection rate or sensitivity.

  \item \textbf{False Positive Rate (FPR):}
  \begin{equation}
  FPR = \frac{FP}{FP + TN} {\color{red}.}
  \end{equation}
  Reflects the proportion of fault-free samples incorrectly classified as faulty, directly linked to false alarm occurrences.

  \item \textbf{F1 Score:}
  \begin{equation}
  F1 = 2 \times \frac{Precision \times Recall}{Precision + Recall}{\color{red}\,,} \quad \text{where} \quad Precision = \frac{TP}{TP + FP} {\color{red}.}
  \end{equation}
  The harmonic mean of precision and recall, providing a balanced assessment that considers both detection capability and false alarms. Where, $T$ and $F$ are respectively true and false, while $P$ and $N$ denote positive and negative successively.
\end{itemize}

\noindent These metrics offer a comprehensive evaluation framework to validate the reliability, robustness, and detection accuracy of the proposed ETCPN-based semi-supervised fault detection methodology.

\section{Results and Discussion}\label{sec6}

To validate the theoretical developments related to the ETCPN modelling and fault detection framework, a hybrid discrete-time system composed of two modes ($q = \{1, 2\}$) is considered. The dynamics of each mode are defined by the following state-space matrices:

\begin{equation*}
A_1 = 
\begin{bmatrix}
\cos \left(\frac{\pi}{3}\right) & \sin \left(\frac{\pi}{3}\right) \\
- \sin \left(\frac{\pi}{3}\right) & \cos \left(\frac{\pi}{3}\right)
\end{bmatrix},
\quad
B_1 = 
\begin{bmatrix}
1 \\
0
\end{bmatrix}{\color{red}.}
\end{equation*}

\begin{equation*}
A_2 = 
\begin{bmatrix}
\cos \left(\frac{2\pi}{3}\right) & \sin \left(\frac{2\pi}{3}\right) \\
- \sin \left(\frac{2\pi}{3}\right) & \cos \left(\frac{2\pi}{3}\right)
\end{bmatrix},
\quad
B_2 = 
\begin{bmatrix}
1 \\
0
\end{bmatrix}{\color{red}.}
\end{equation*}

\noindent To align with the ETCPN formalism, the discrete-event component is modelled using two discrete places and two discrete transitions, which enable switching between the LTI subsystems according to the conditions $x_1(k)>0$, $x_1(k) \leq 0$ and $M_q^D(k)$. The mode selection mechanism adheres to the following rules:  

\begin{itemize}
    \item When the switching law $S_1$ is satisfied ($x_1(k) > 0$ \text{and} $M_1^D(k)= \begin{bmatrix} 1 & 0 \end{bmatrix}^T$), the second subsystem is activated.
    \item If $S_2$ is satisfied $(x_1(k) \leq 0$ \text{and} $M_2^D(k)= \begin{bmatrix} 0 & 1 \end{bmatrix}^T )$, the first subsystem is activated.
\end{itemize}

\noindent Based on Theorem~\ref{theo1} and the switching logic described above, the continuous marking vector $M_q^C(k)$ can be expressed as:

\begin{equation*}
M_q^C(k) = 
\begin{bmatrix}
u(k) & x_1(k) & x_2(k)
\end{bmatrix}^T{\color{red}.}
\end{equation*}

\noindent The corresponding $Post$ and $Pre$ matrices for the two modes are defined as follows:

\begin{equation*}
Post^C_1 = 
\begin{bmatrix}
1 & 0 & 0 \\
1 & \cos \left(\frac{\pi}{3}\right) & \sin \left(\frac{\pi}{3}\right) \\
0 & - \sin \left(\frac{\pi}{3}\right) & \cos \left(\frac{\pi}{3}\right)
\end{bmatrix},
\quad
Post^C_2 = 
\begin{bmatrix}
1 & 0 & 0 \\
1 & \cos \left(\frac{2\pi}{3}\right)& \sin \left(\frac{2\pi}{3}\right) \\
0 & - \sin \left(\frac{2\pi}{3}\right) & \cos \left(\frac{2\pi}{3}\right)
\end{bmatrix}{\color{red}.}
\end{equation*}

\begin{equation*}
Pre^C_1 = Pre^C_2 = 
\begin{bmatrix}
1 & 0 & 0 \\
0 & 1 & 0 \\
0 & 0 & 1
\end{bmatrix}{\color{red}.}
\end{equation*}

\noindent Finally, the resulting incidence matrices $W^C_1$ and $W^C_2$ for each mode are:

\begin{equation*}
W^C_1 =
\begin{bmatrix}
0 & 0 & 0 \\
1 & \cos \left(\frac{\pi}{3}\right)-1 & \sin \left(\frac{\pi}{3}\right) \\
0 & - \sin \left(\frac{\pi}{3}\right) & \cos \left(\frac{\pi}{3}\right)-1
\end{bmatrix},
\quad
W^C_2 =
\begin{bmatrix}
0 & 0 & 0 \\
1 & \cos \left(\frac{2\pi}{3}\right) -1& \sin \left(\frac{2\pi}{3}\right) \\
0 & - \sin \left(\frac{2\pi}{3}\right) & \cos \left(\frac{2\pi}{3}\right)-1
\end{bmatrix} {\color{red}.}
\end{equation*}

\noindent This numerical configuration serves as a benchmark scenario to assess the modelling capacity of the proposed ETCPN approach and to validate the performance of the residual-based fault detection scheme under hybrid dynamic conditions.

\medskip
Figure~\ref{fig10} presents the ETCPN schematic representation of the considered switched hybrid system comprising two operational modes. The model integrates three key components: discrete-event dynamics, continuous LTI subsystems, and switching laws, all structured within a unified HDS framework. Discrete events and switching logic are represented through discrete places and transitions, whereas continuous dynamics are described using continuous places corresponding to input and state variables. The interconnections between continuous transitions and places reflect the system matrices, effectively capturing the evolution of continuous state variables under each mode. Additionally, the switching conditions are modeled via test transitions, which govern the activation of the corresponding LTI subsystems based on system states.

\begin{figure}[h!]
  \centering
  \includegraphics[scale=0.75]{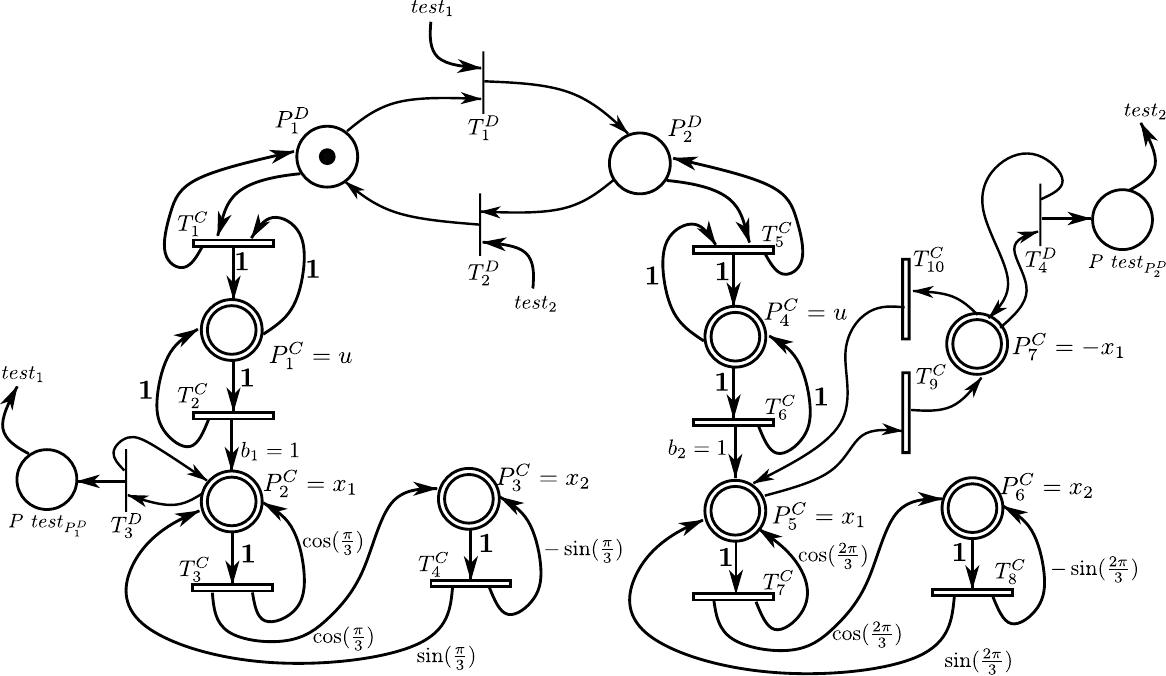}
  \caption{ETCPN representation of the hybrid system with two modes, illustrating discrete-event dynamics, continuous LTI subsystems, and switching logic within a unified modelling framework.}\label{fig10}
\end{figure}

\medskip

Based on this graphical representation and the proposed modelling architecture, Figure~\ref{fig:evolution} presents the simulation results that validate the ETCPN approach. The results capture the dynamic behaviour of the switched hybrid system, illustrating the evolution of the system input, continuous states, and discrete switching modes.

\begin{figure}[h!]
    \centering
    \subfloat[System input $U(k)$]{\includegraphics[scale=0.35]{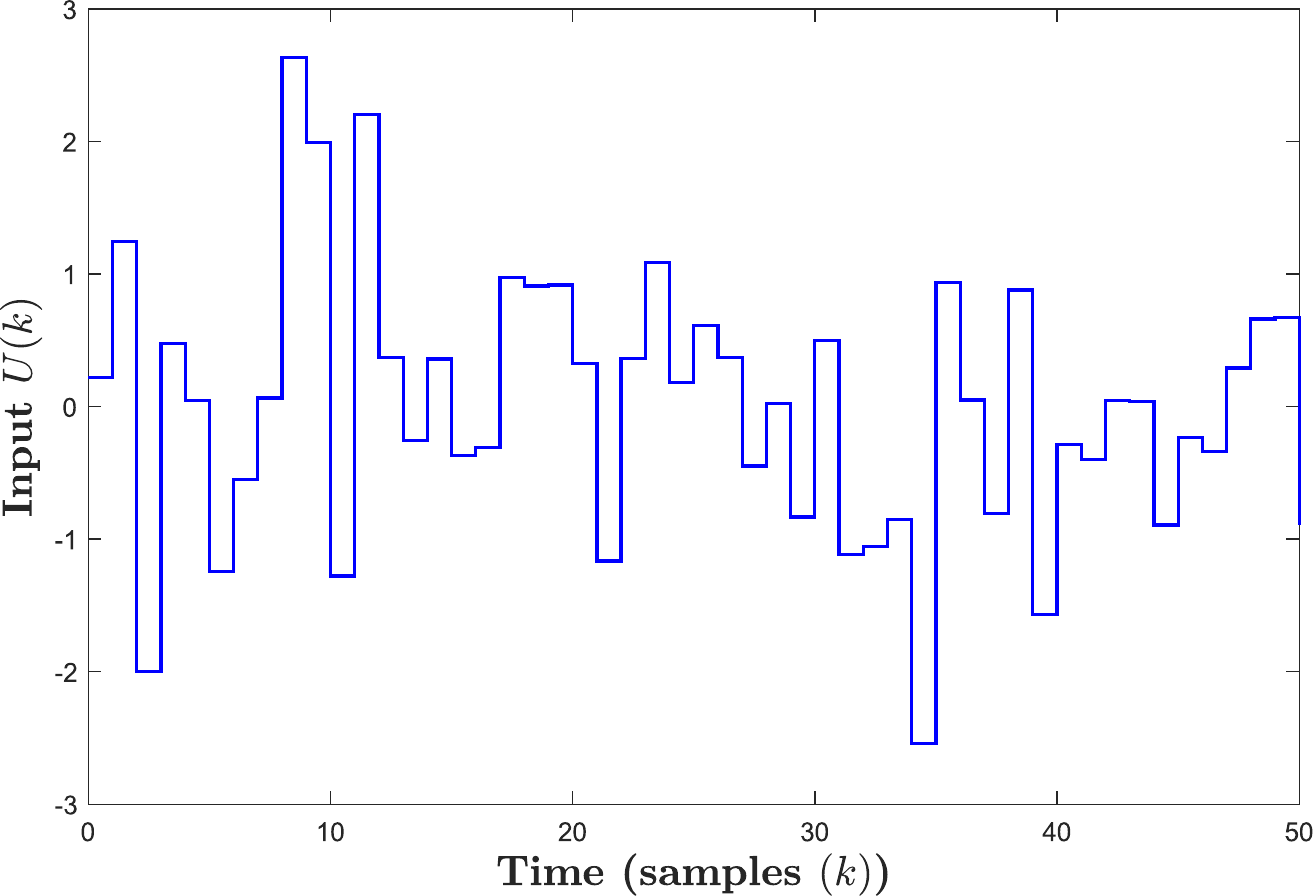}} \\
    \subfloat[ETCPN model states $x_1(k)$ and $x_2(k)$]{\includegraphics[scale=0.35]{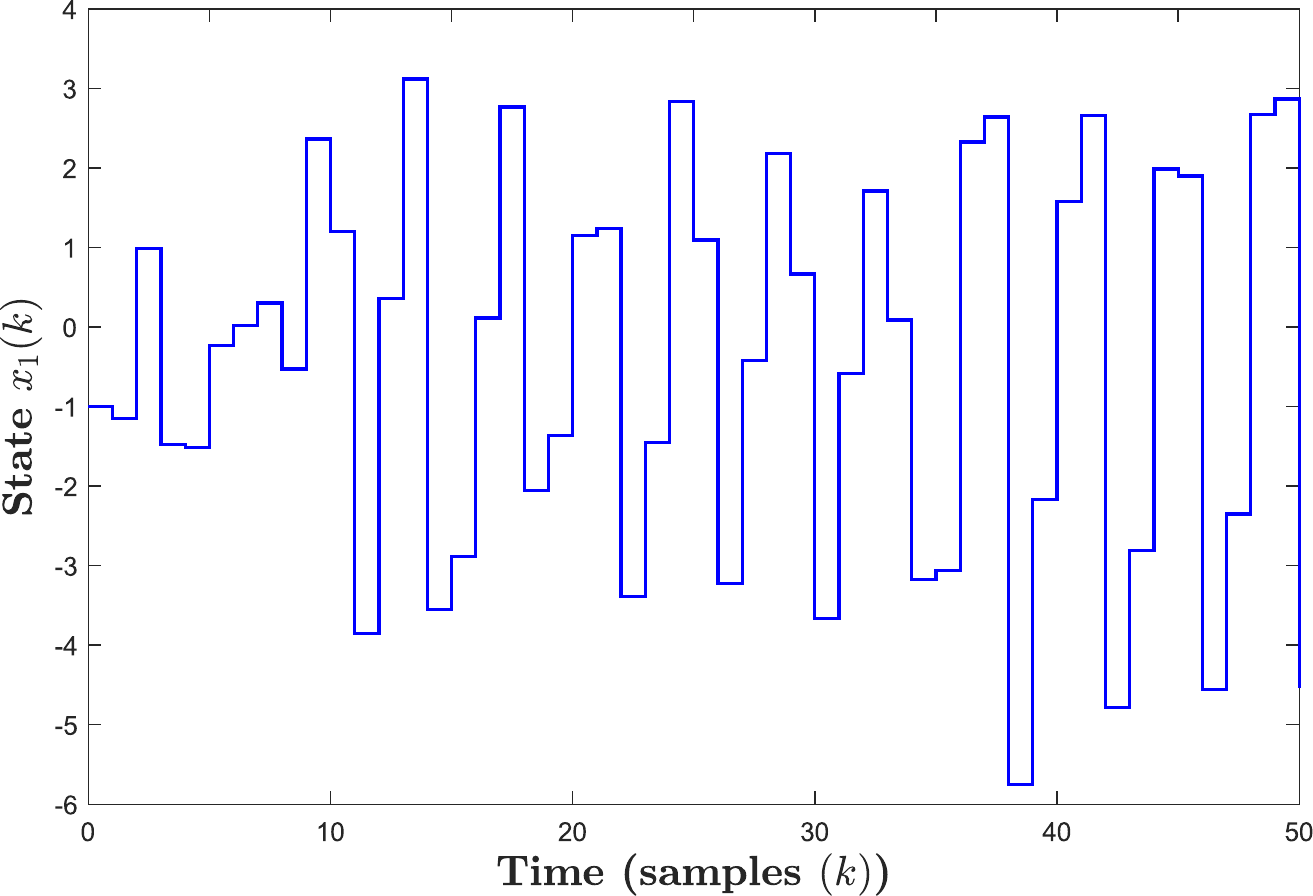}
    \includegraphics[scale=0.35]{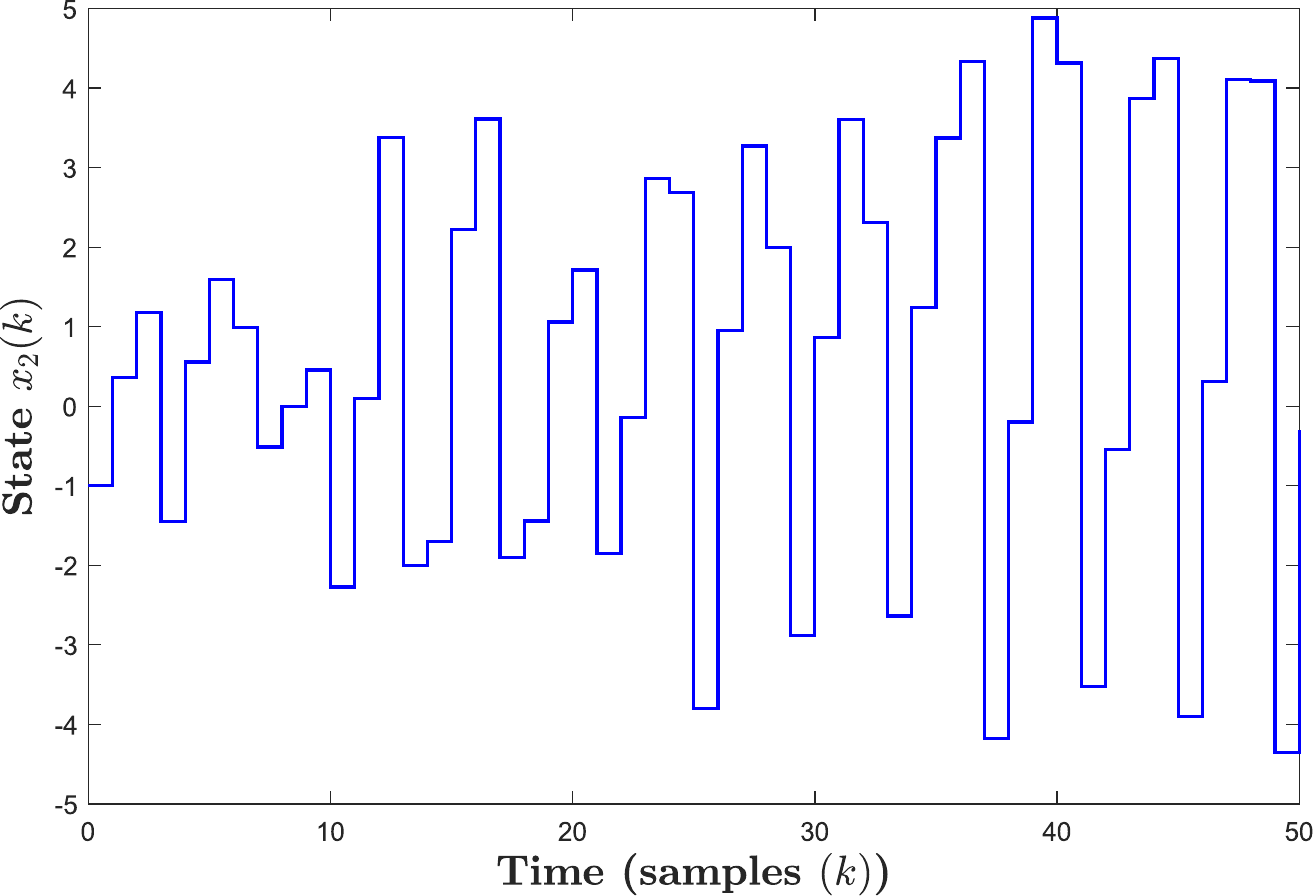}} \\
    \subfloat[ETCPN discrete modes $q(k)$]{\includegraphics[scale=0.35]{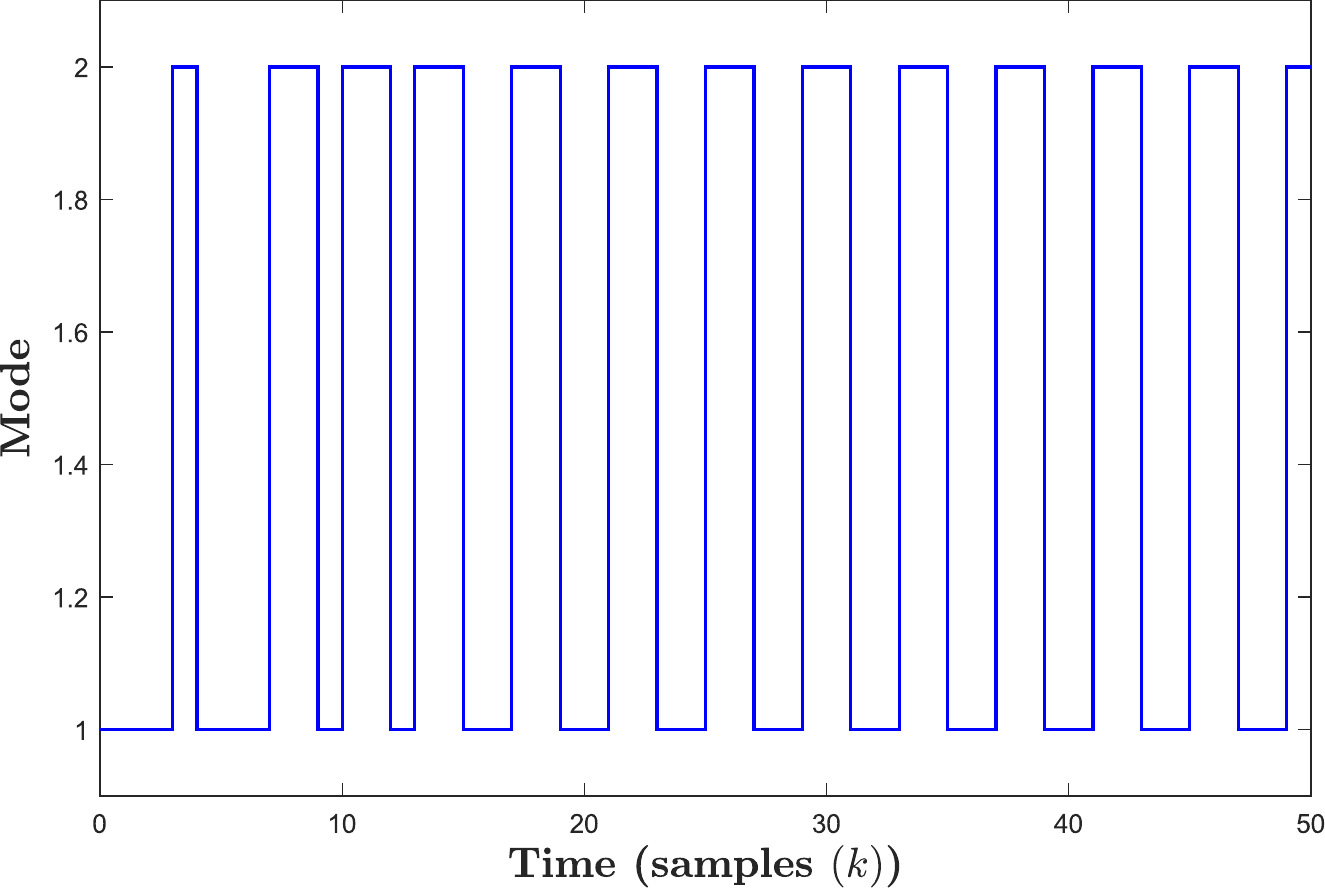}}
    \caption{Simulation results of the ETCPN model illustrating (a) evolution of the system input $U(k)$, (b) continuous model states $x_1(k)$ and $x_2(k)$, and (c) switching behaviour through discrete modes $q(k)$}
    \label{fig:evolution}
\end{figure}

\medskip
 
\noindent
To assess the effectiveness of the proposed ETCPN-based observer and to demonstrate its ability to accurately estimate system states under hybrid dynamic conditions, the observer gains were computed by solving the formulated LMI problem. The resulting observer gains corresponding to the two operating modes are:

\begin{equation*}
L_1 = 
\begin{bmatrix}
0.866 \\
0.5
\end{bmatrix}{\color{red}\,,}
\quad \text{and} \quad
L_2 = 
\begin{bmatrix}
0.866 \\
-0.5
\end{bmatrix}{\color{red}.}
\end{equation*}

\noindent
These gain matrices are applied in the observer structure to ensure stable and accurate estimation performance across the two switching modes of the hybrid discrete-time system.

\medskip

Figures~\ref{fig15} and~\ref{fig16} present the evolution of the proposed ETCPN switched observer, highlighting its performance in estimating both continuous and discrete states of the hybrid system. Figure~\ref{fig15} shows the accurate tracking of the continuous-time states by the observer, while Figure~\ref{fig16} illustrates the correct identification and estimation of the discrete system modes. The results confirm that the proposed observer ensures rapid convergence and high fidelity in following the actual system dynamics, demonstrating robustness across hybrid dynamics.

\begin{figure}[H]
     \centering
         \includegraphics[scale=0.35]{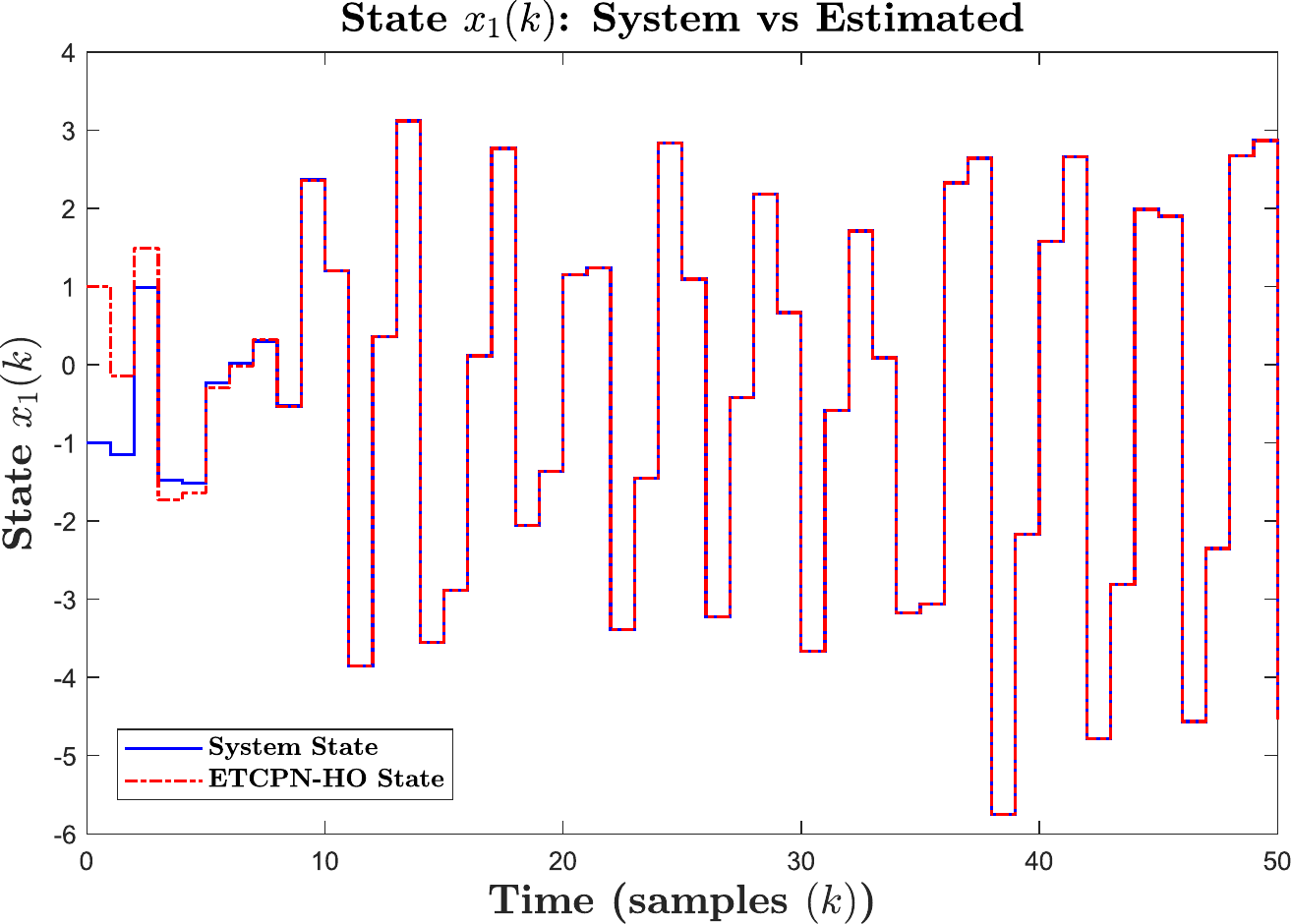}
         \includegraphics[scale=0.35]{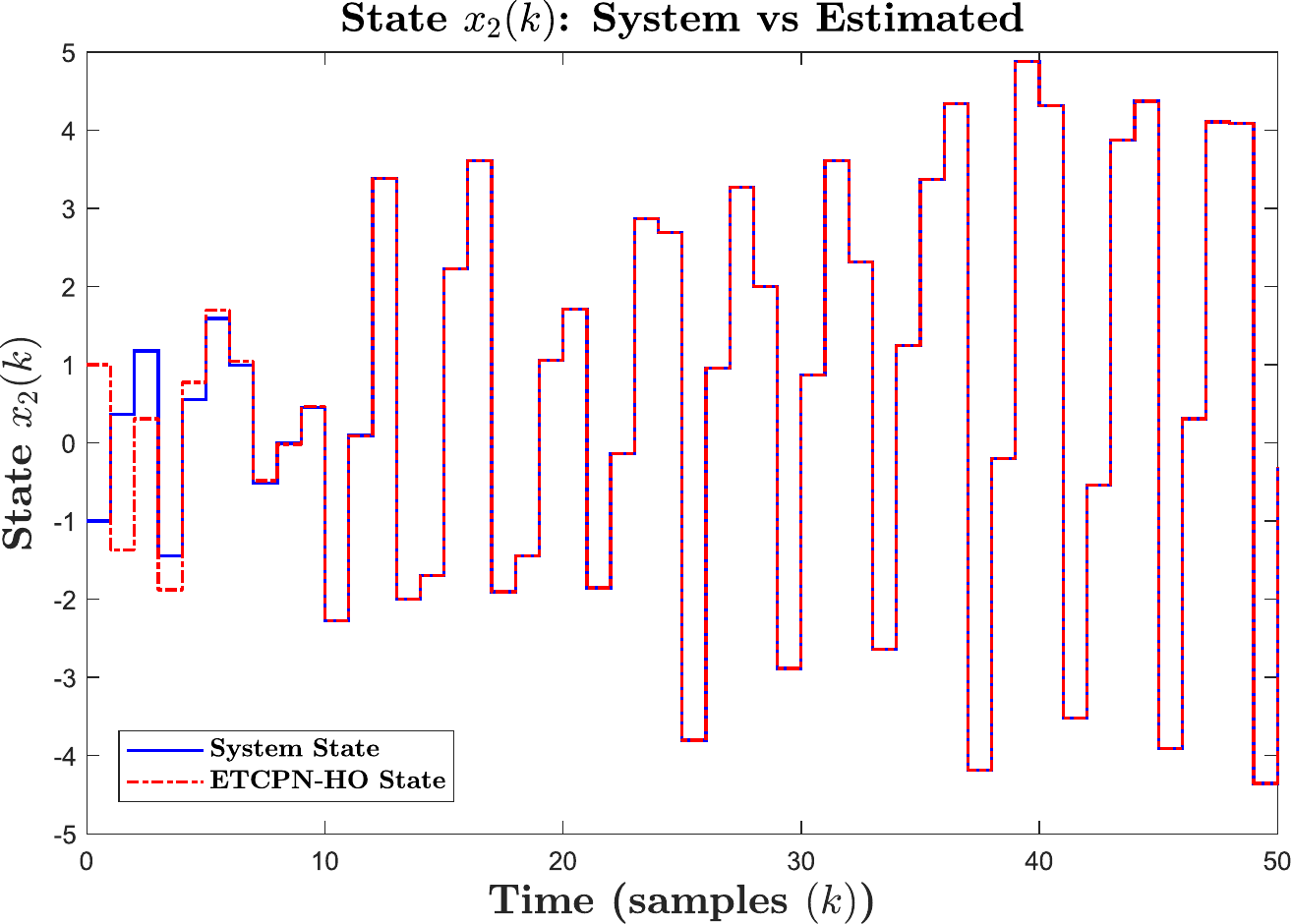}
     \caption{Evolution of the ETCPN switched observer states: estimated trajectories of $x_1(k)$ and $x_2(k)$ tracking the actual system states.}\label{fig15}
\end{figure}

\begin{figure}[H]
  \centering
  \includegraphics[scale=0.35]{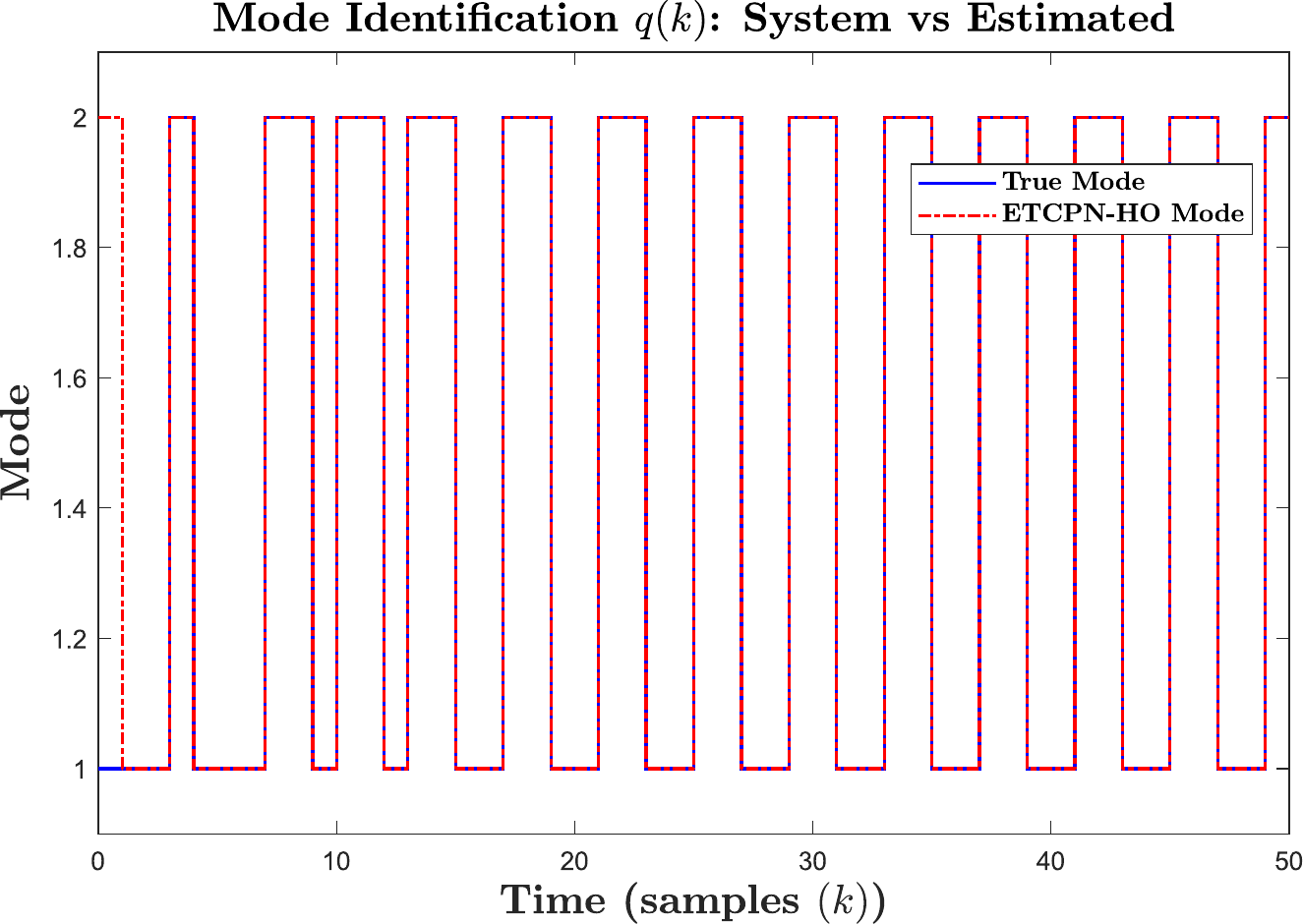}
  \caption{Evolution of the ETCPN switched observer modes $q(k)$: estimated discrete modes accurately capturing the system’s switching behavior.}\label{fig16}
\end{figure}

\medskip 
The effectiveness of the proposed observer design is further validated through the analysis of the estimation error dynamics. As shown in Figure~\ref{fig19}, the estimation errors exhibit rapid convergence to zero, which clearly demonstrates both the accuracy and stability of the observer. This fast convergence highlights the robustness of the ETCPN-based observer in faithfully reconstructing the system states and modes across hybrid dynamics.

\begin{figure}[H]
     \centering
         \includegraphics[scale=0.35]{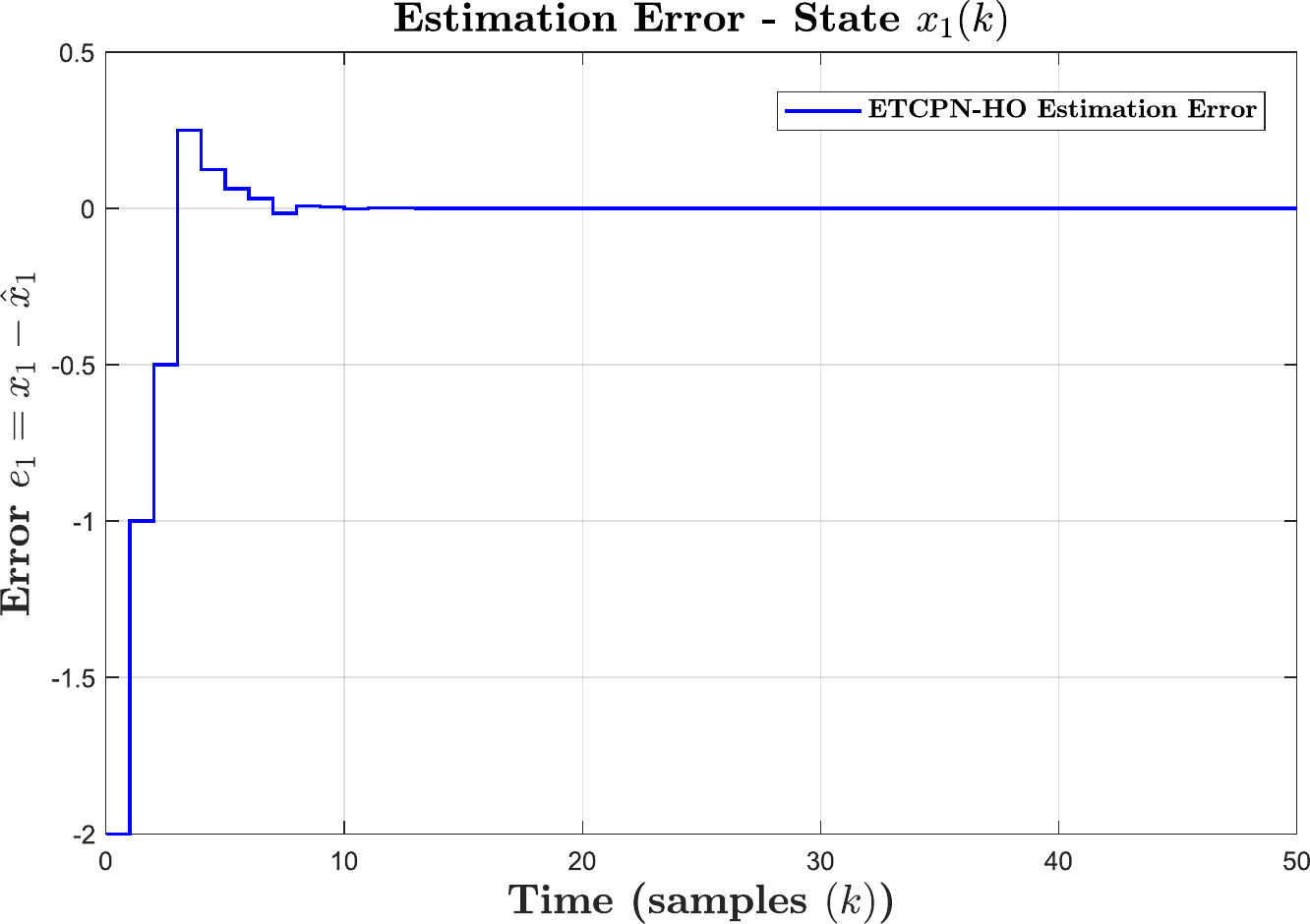}
         \includegraphics[scale=0.35]{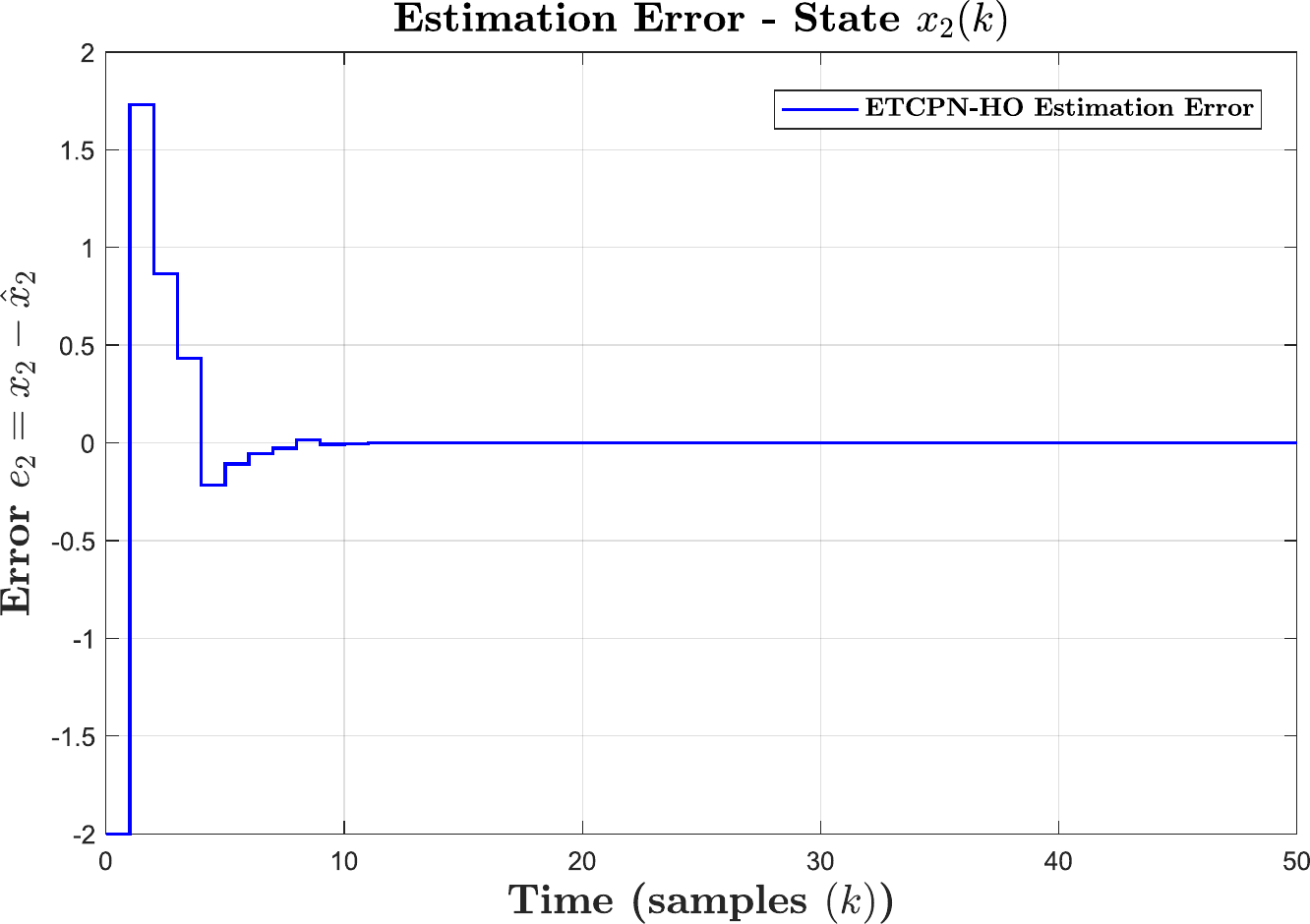}
     \caption{Evolution of estimation error dynamics: observer errors rapidly converging to zero, indicating accurate and stable hybrid state estimation.}\label{fig19}
\end{figure}

\medskip

Finally, to further assess the fault detection capabilities of the proposed ETCPN modeling and observer framework, three representative fault scenarios are investigated. The first involves faults affecting the discrete-event dynamics, the second focuses on faults impacting the continuous-time subsystem, and the third scenario integrates concurrent faults that affect both domains simultaneously. The performance of the proposed ETCPN-HO is benchmarked against a state of the art Hybrid Automaton-based Observer (HA-OB) within these scenarios. To implement the semi-supervised ETCPN based framework, the training data for anomaly detection algorithms (OC-SVM, SVDD, and Elliptic Envelope) was generated by simulating the nominal (fault-free) ETCPN model. The residuals $ r(k) $, calculated as the discrepancy between the measured outputs and the ETCPN-HO's estimates under these nominal conditions, served as the feature vectors for training. This dataset exclusively represents normal system behavior, ensuring the detectors are trained to recognize the system's baseline operational state without any exposure to fault data. 

The following sections detail the fault injection process, residual generation, and detection results, thereby illustrating the practical diagnostic effectiveness of the proposed scheme.

\paragraph{\textbf{Case 1 - Faults Affecting the Discrete-Event Part:}}

In this scenario, faults are deliberately introduced into the discrete-event component of the system, specifically through mode blocking events that interfere with normal switching logic. Such faults prevent mode transitions even when switching conditions are satisfied, thereby disrupting the expected hybrid system behavior. Two representative fault periods are considered to illustrate this effect: the first occurs during the interval $k\in[13,~17]$, where the system is forcibly held in mode~1, and the second during $k\in[30,~34]$, where it remains trapped in mode~2 (Figure~\ref{figf2}).

These injected faults enable a detailed investigation of how discrete-event anomalies propagate through the hybrid dynamics and challenge the observer's estimation capabilities. The results in Figure~\ref{figf1} demonstrate a key advantage of the ETCPN-HO, its rapid convergence following a fault event. Once the blocking fault ends and normal switching resumes, the ETCPN-HO quickly re-synchronizes with the true system state. In contrast, the HA-OB exhibits slower recovery and prolonged estimation error, which can lead to false alarms. This superior performance is attributed to the ETCPN's intrinsic, native representation of the system's concurrent switching logic, which enables more robust state estimation during these critical transition periods. This robustness is further emphasized in Figure~\ref{figf3}, where the ETCPN-HO estimation errors show sharp deviations during fault periods followed by a swift return to zero, underscoring the framework's responsiveness.

\begin{figure}[H]
     \centering
         \includegraphics[scale=0.35]{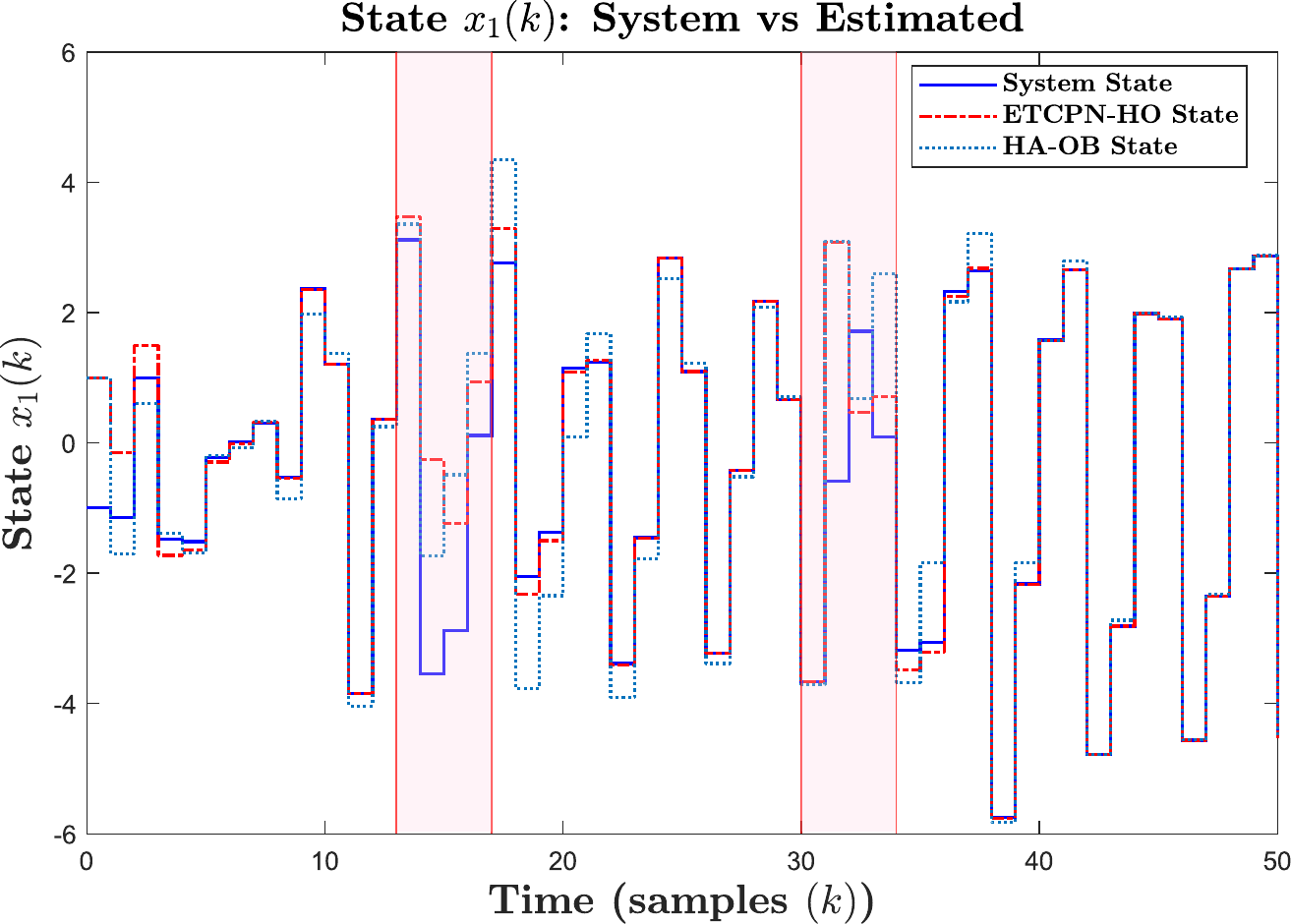}
         \includegraphics[scale=0.35]{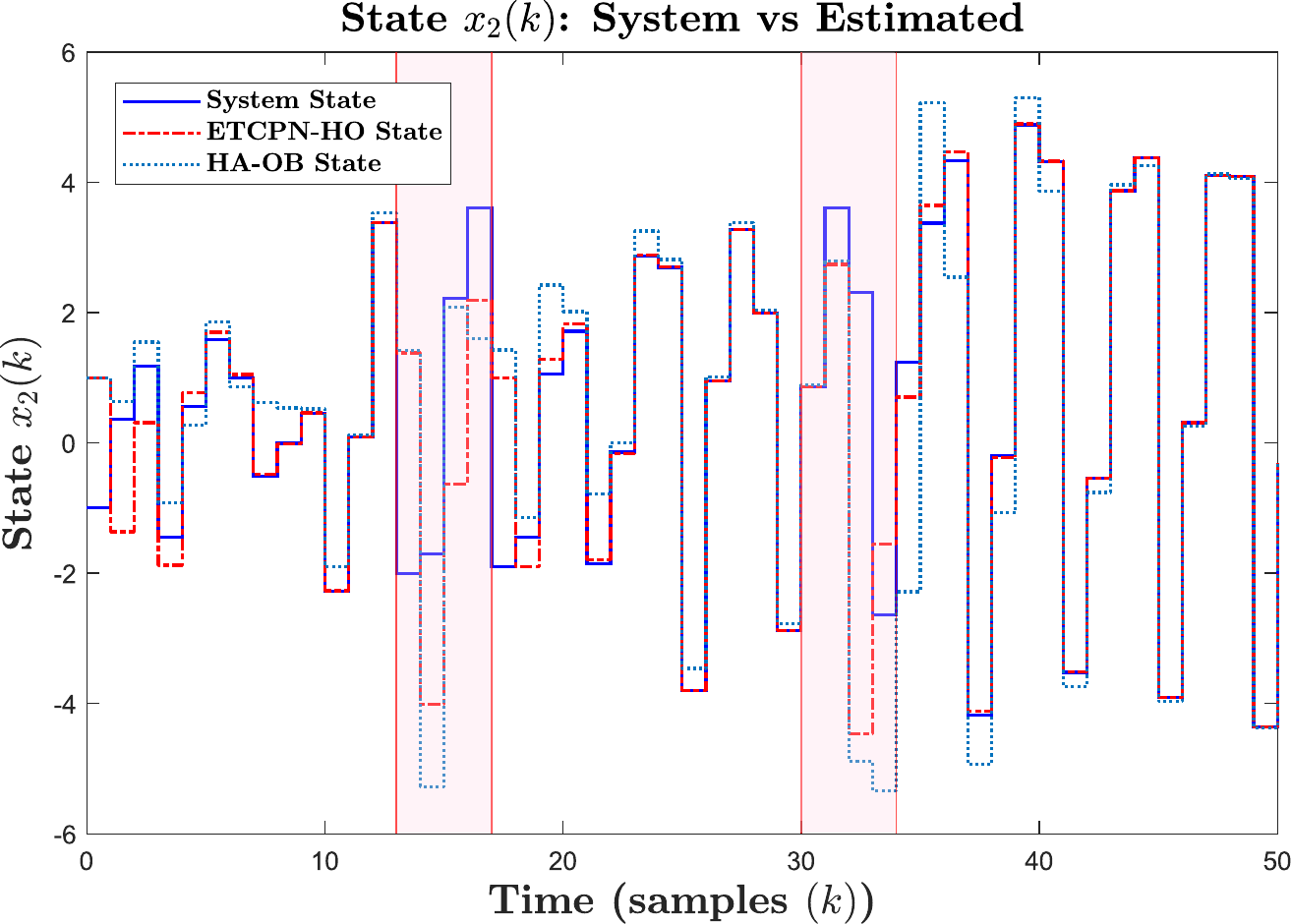}
     \caption{Evolution of ETCPN switched observer states during discrete-event faults.}\label{figf1}
\end{figure}

\begin{figure}[H]
  \centering
  \includegraphics[scale=0.35]{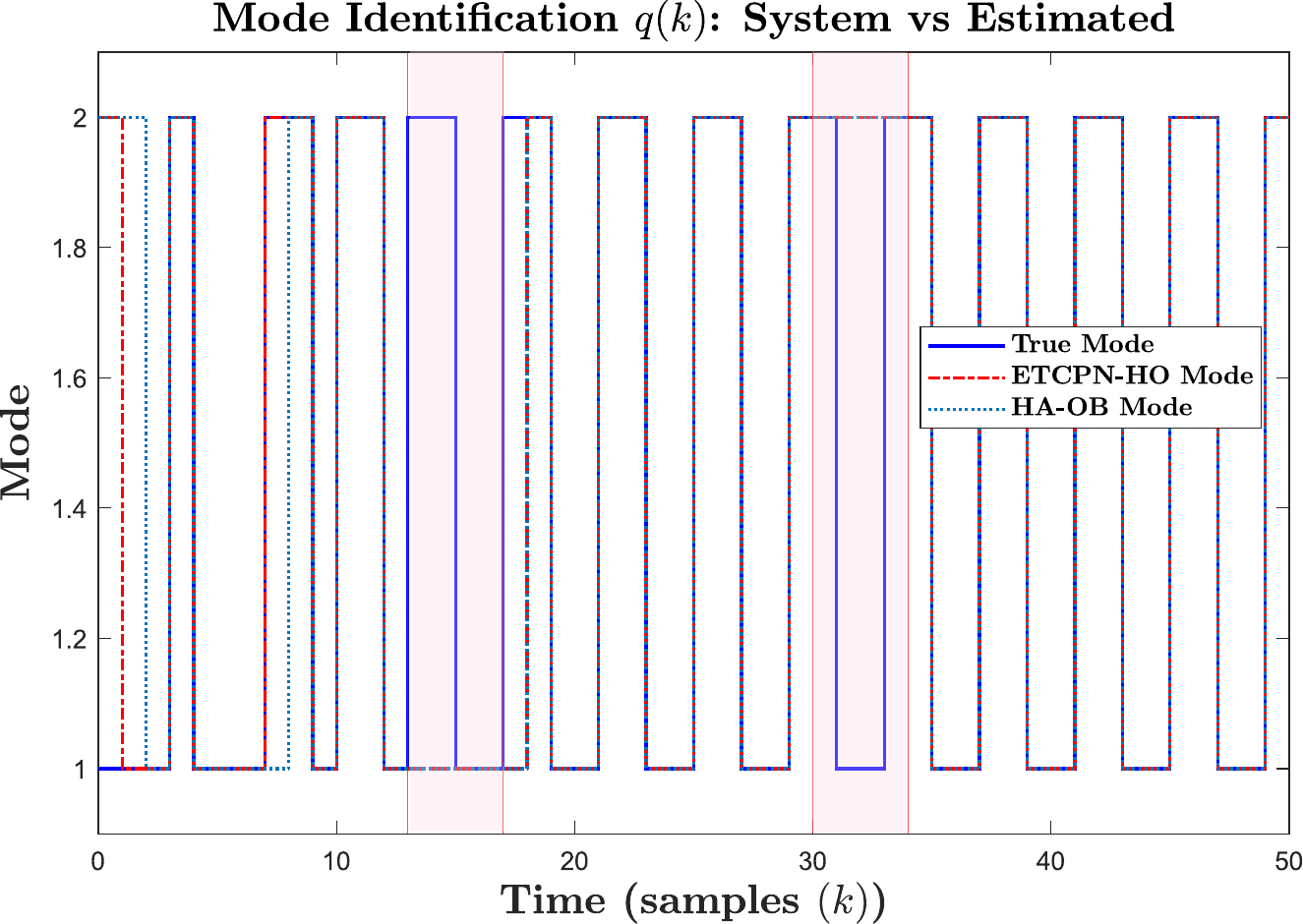}
  \caption{Evolution of ETCPN switched observer modes $q(k)$ illustrating mode blocking faults.}\label{figf2}
\end{figure}

\begin{figure}[H]
     \centering
         \includegraphics[scale=0.35]{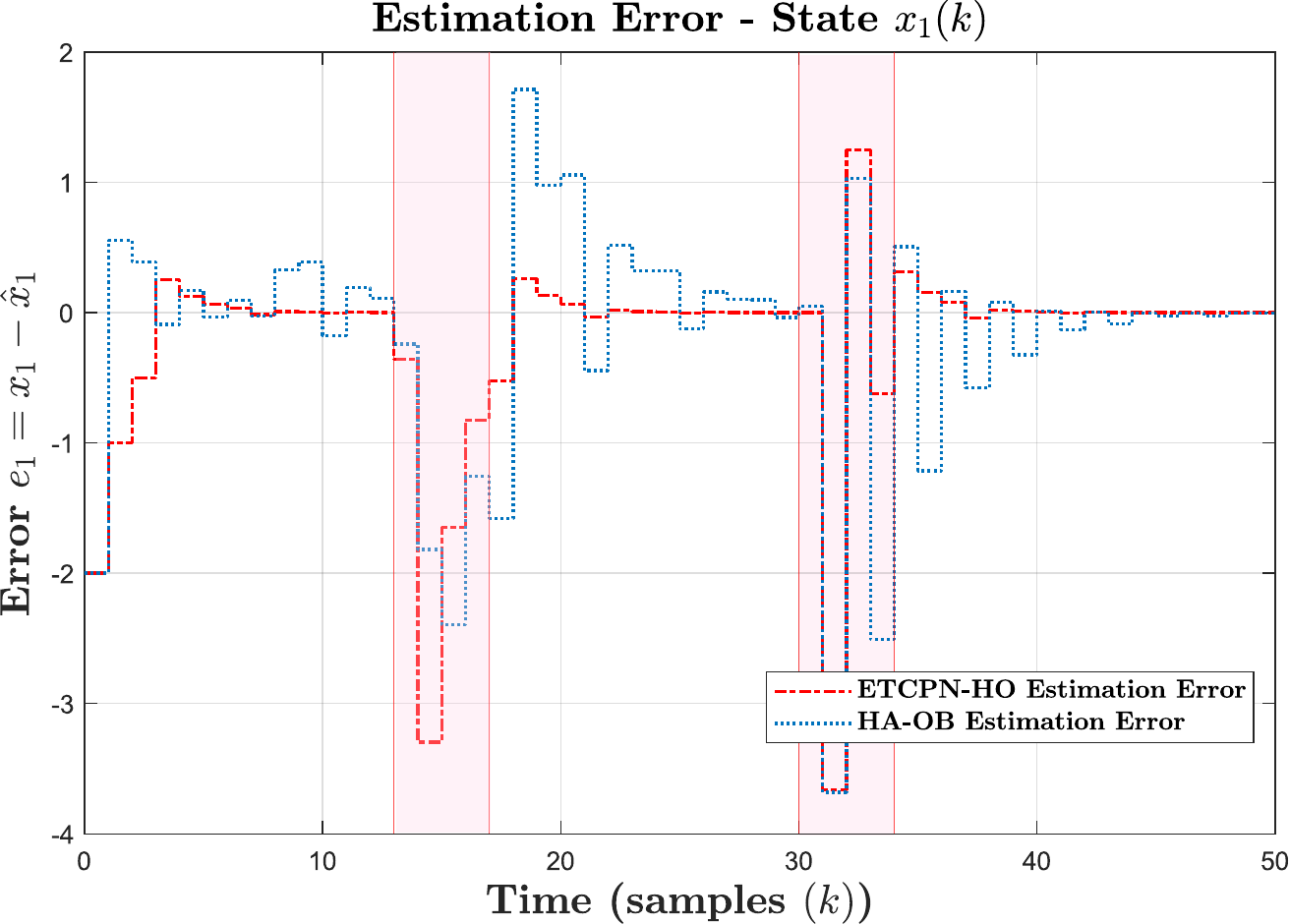}
         \includegraphics[scale=0.35]{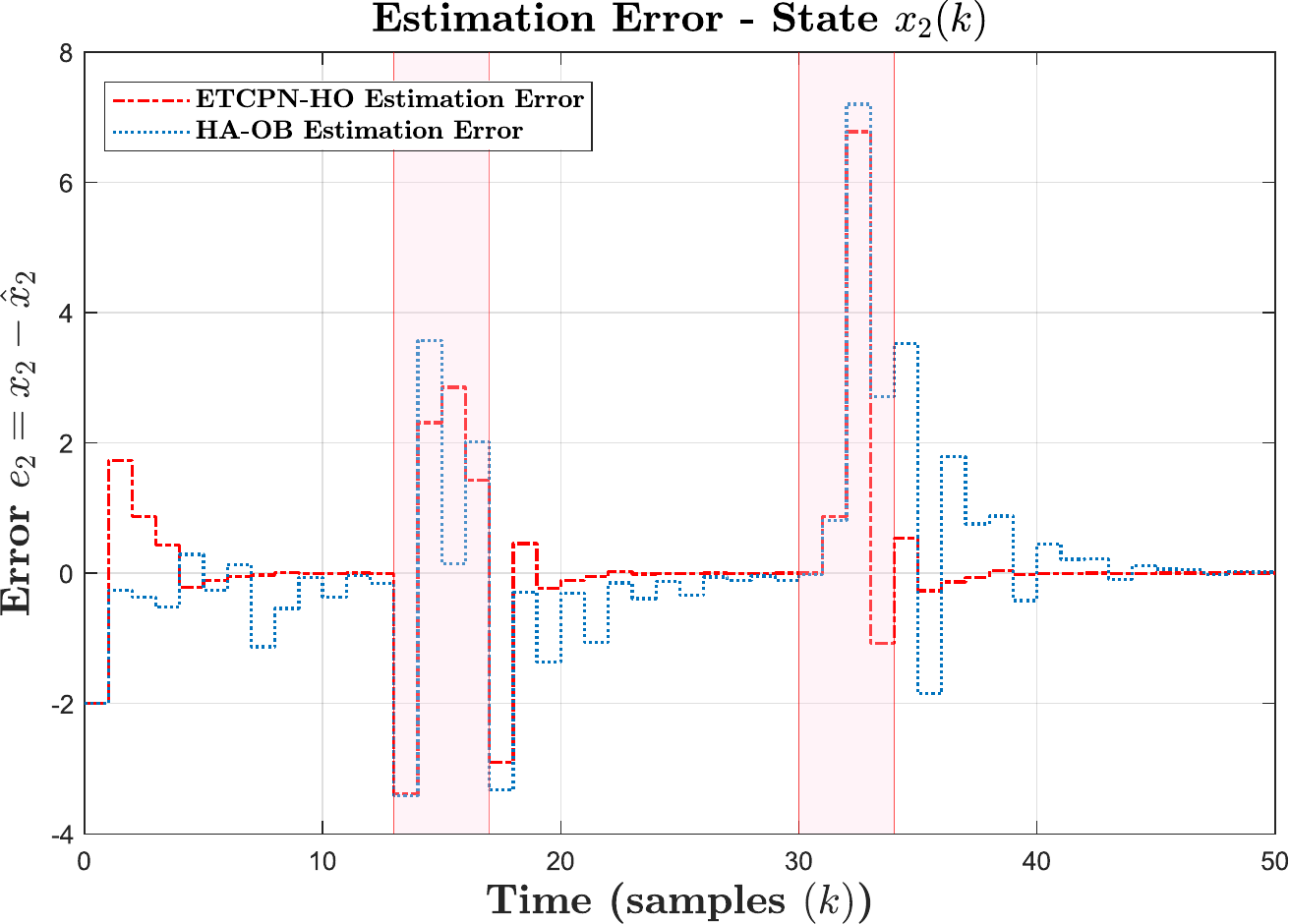}
     \caption{Estimation error dynamics highlighting fault detection and post-fault recovery.}\label{figf3}
\end{figure}

\medskip 

Following the fault injection described above, residuals were generated using the ETCPN hybrid observer and analyzed through three semi-supervised anomaly detection techniques, namely OC-SVM~\cite{scholkopf2001estimating}, SVDD~\cite{tax2004support}, and Elliptic Envelope (EE)~\cite{rousseeuw1999fast}. These models were trained exclusively on fault-free data to establish the nominal residual distributions, thereby eliminating the requirement for labelled fault samples. During the training phase, the detection models were configured as follows. For OC-SVM, the radial basis function (RBF) kernel was employed with a parameter $\nu=0.12$, while the kernel coefficient $\gamma$ was left at its default setting. In the case of SVDD, the RBF kernel was also used, with identical $\nu=0.12$ and a kernel coefficient $\gamma=0.1$. Finally, for the Elliptic Envelope model, the contamination parameter was set to $0.05$ under the assumption of Gaussian-distributed fault-free residuals.

\medskip 

The detection capability of each method was assessed based on the computed residuals using four evaluation metrics: Recall, False Positive Rate (FPR), F1 Score, and overall accuracy. Table~\ref{tab:case1_detection} presents the quantitative results obtained for discrete-event faults (Case 1).

\begin{table}[H]
\centering
\caption{Fault detection results using residual-based methods for discrete-event faults (Case 1).}
\label{tab:case1_detection}
\begin{tabular}{lcccc}
\hline
\textbf{Method} & \textbf{Accuracy} & \textbf{Recall} & \textbf{FPR} & \textbf{F1 Score} \\
\hline
OC-SVM & 0.911 & 0.800 & 0.057 & 0.80 \\
EE     & 0.844 & 0.700 & 0.114 & 0.67 \\
SVDD   & 0.889 & 0.900 & 0.114 & 0.78 \\
\hline
\end{tabular}
\end{table}

\medskip 

The results clearly demonstrate that all three approaches successfully detected the injected discrete-event faults. One-Class SVM and SVDD, in particular, achieved the highest recall rates (0.90), reflecting strong sensitivity to mode-blocking anomalies. Their F1 Scores were also relatively high (0.80 and 0.78, respectively), suggesting a balanced detection performance with limited false alarms. Although Elliptic Envelope performed reasonably well, it exhibited lower recall and F1 Score, indicating reduced effectiveness in capturing the discrete-event fault patterns, likely due to its reliance on global Gaussian assumptions which may not fully capture the local dynamics induced by mode blocking.  Overall, One-Class SVM and SVDD proved to be highly reliable detectors for discrete-event faults in the proposed ETCPN framework, offering an effective trade-off between sensitivity and false alarm rates in the absence of labelled fault data.

\paragraph{\textbf{ Case 2 -  Faults affecting the continuous-time dynamics:}}

To further validate the proposed framework, this case investigates the detection of faults that affect the continuous discrete-time dynamics of the hybrid system. Specifically, intermittent output sensor errors were introduced, characterized by signal distortions occurring during the time intervals $k \in [5,10]$ and $k \in [25,30]$. The injected fault pattern is illustrated in Figure~\ref{figu}.

\begin{figure}[h!]
  \centering
  \includegraphics[scale=0.35]{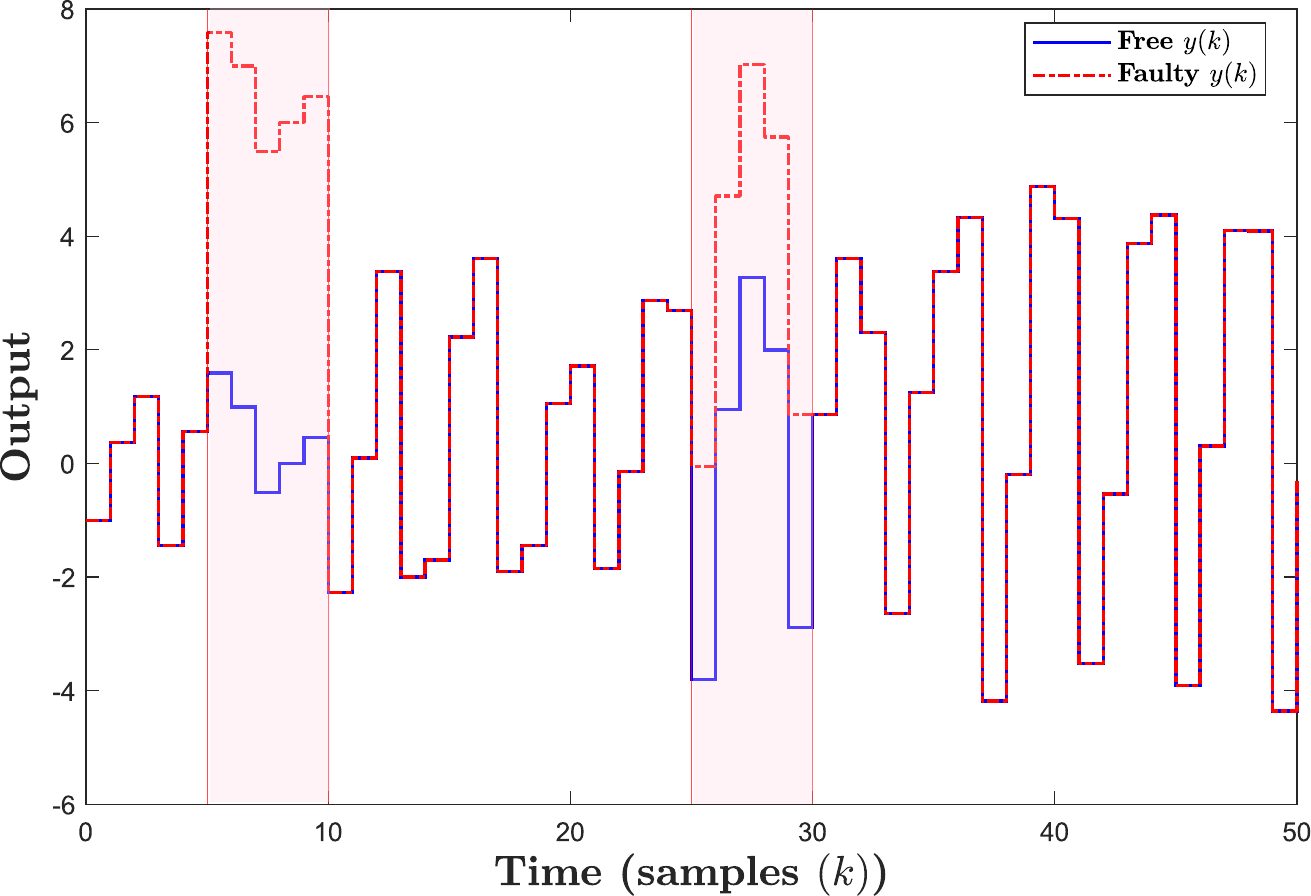}
  \caption{Injected intermittent output faults affecting continuous dynamics.}
  \label{figu}
\end{figure}

The results confirm the superior capability of the proposed ETCPN-HO in detecting and tracking faults within the continuous discrete time state dynamics. As shown in Figure~\ref{figxf1}, the ETCPN-HO demonstrates faster and more accurate convergence following fault injection compared to the HA-OB, which exhibits slower recovery and larger, more persistent estimation errors. This effective response is attributed to the tighter intrinsic coupling between discrete and continuous dynamics in the ETCPN architecture, enabling more robust state correction when confronted with faulty sensor outputs. Furthermore, Figure~\ref{figmf} highlights the resilience of the ETCPN-HO's discrete mode observer, which maintains accurate mode estimation despite the continuous sensor faults. In contrast, the HA-OB struggles to recover and converge to the actual system mode. The estimation error dynamics in Figure~\ref{error3} provide further confirmation of the ETCPN-HO's efficiency, with its residuals quickly decaying to near zero after each fault. The HA-OB, however, shows significant residual error throughout and after the fault periods.

Collectively, these results illustrate the effectiveness of the ETCPN-HO for resilient monitoring of hybrid dynamic systems affected by intermittent sensing faults. Overall, for continuous-domain faults, the ETCPN-HO framework demonstrates clear advantages over the HA-OB approach, providing more accurate state estimates and higher-quality residuals that enable more precise and reliable fault detection.

\begin{figure}[H]
  \centering
  \includegraphics[scale=0.35]{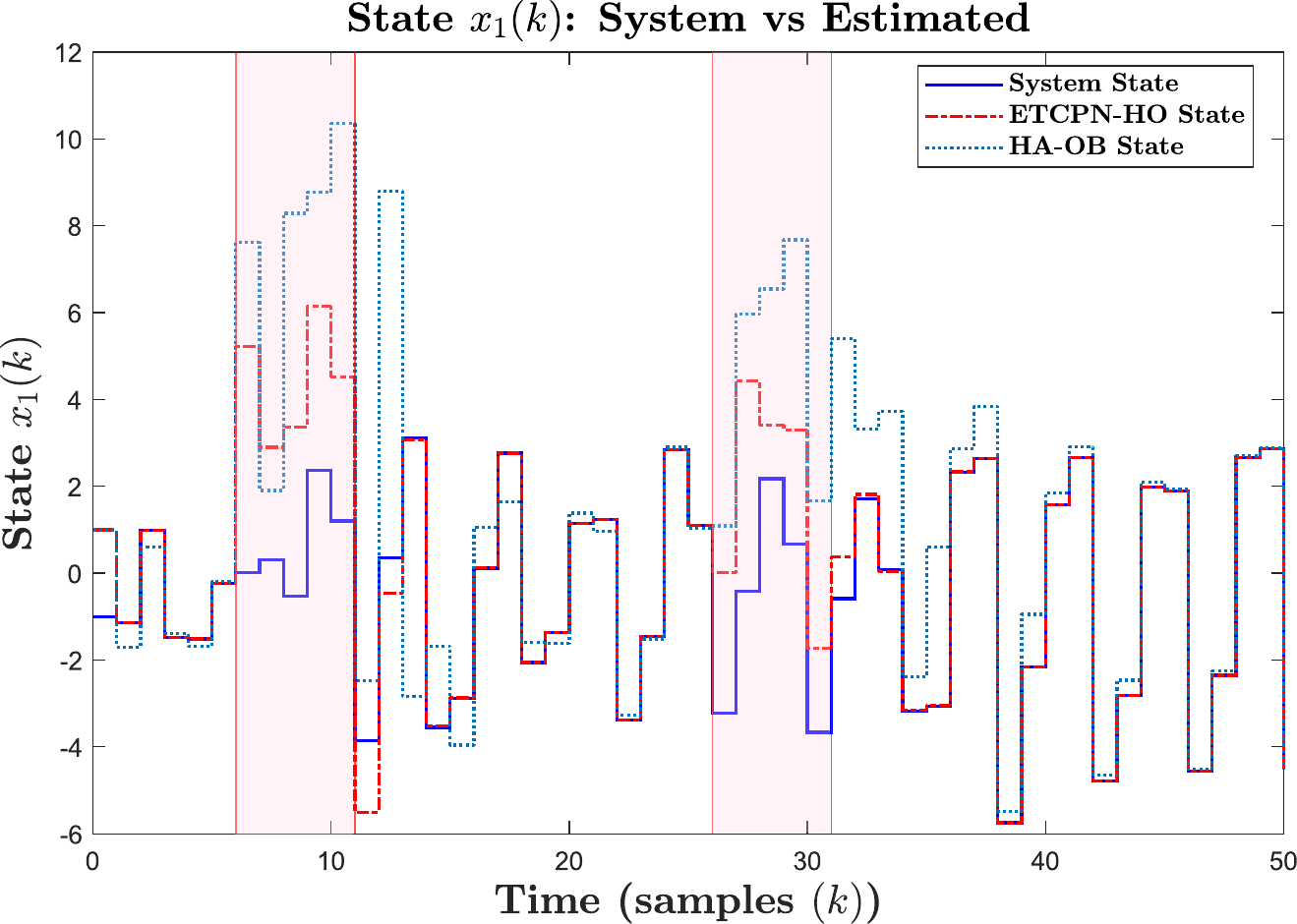}
  \includegraphics[scale=0.35]{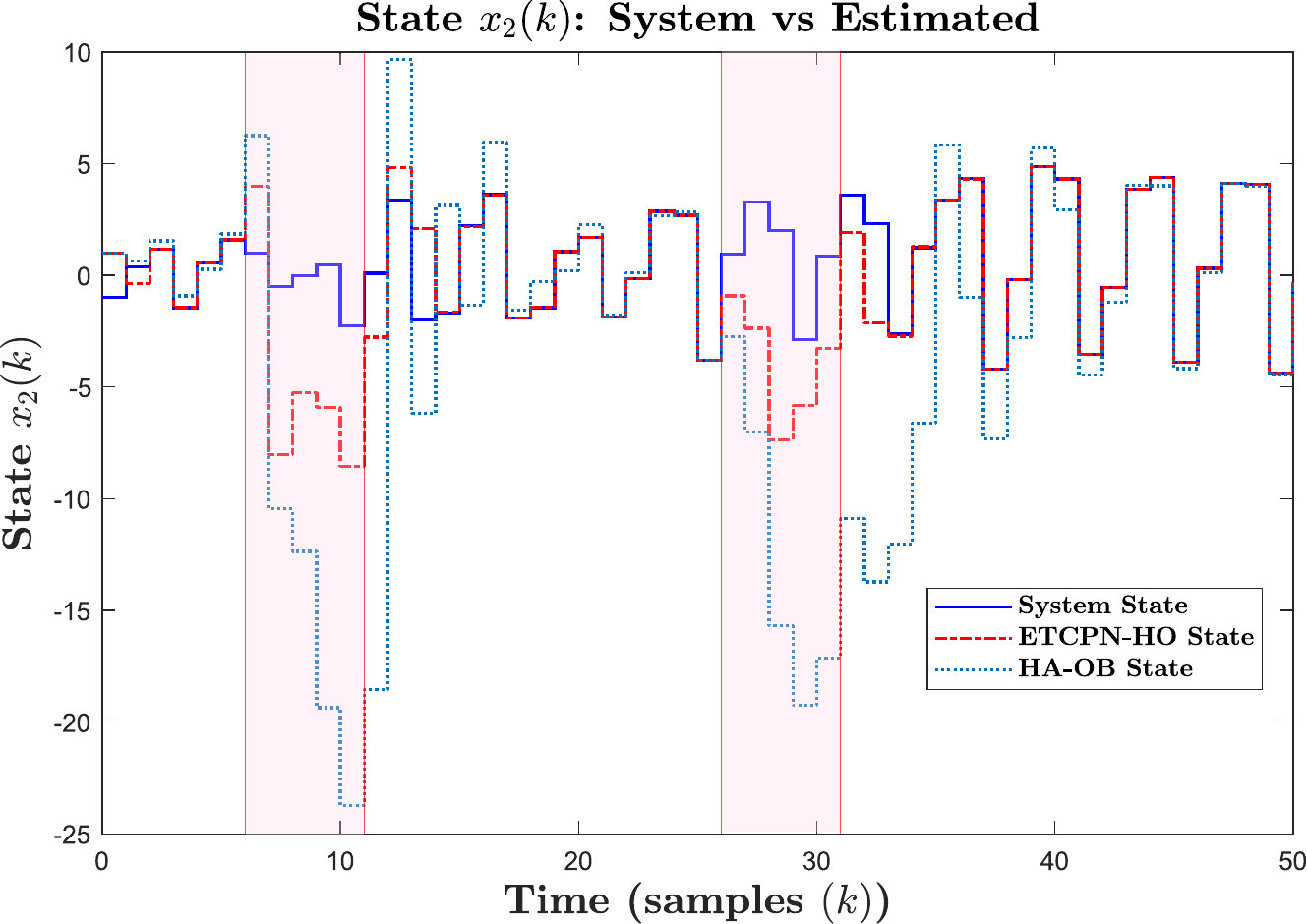}
  \caption{Evolution of ETCPN switched observer states under intermittent output faults.}
  \label{figxf1}
\end{figure}

\begin{figure}[H]
  \centering
  \includegraphics[scale=0.35]{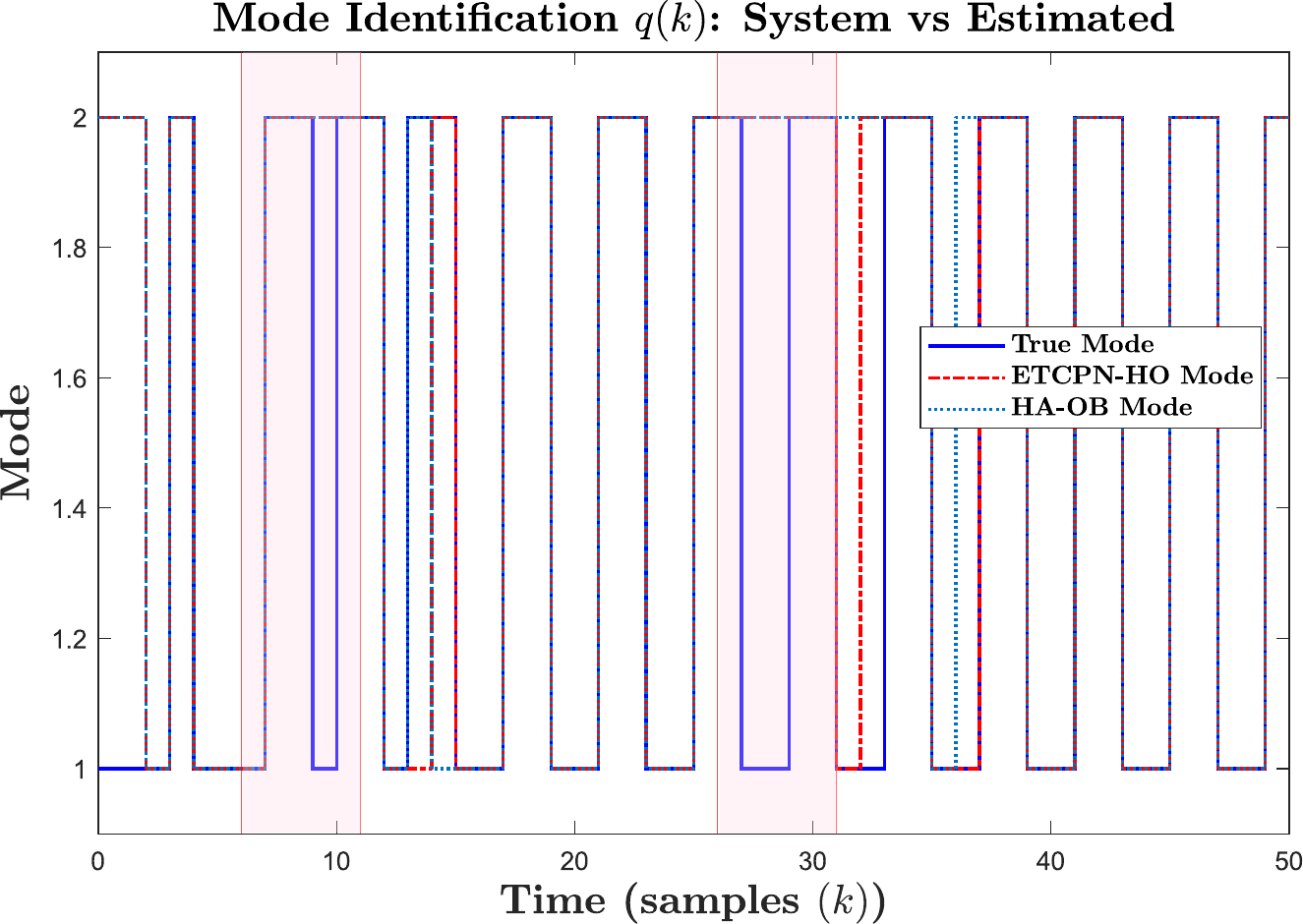}
  \caption{Evolution of ETCPN switched observer modes $q(k)$ during fault occurrences.}
  \label{figmf}
\end{figure}

\begin{figure}[H]
  \centering
  \includegraphics[scale=0.35]{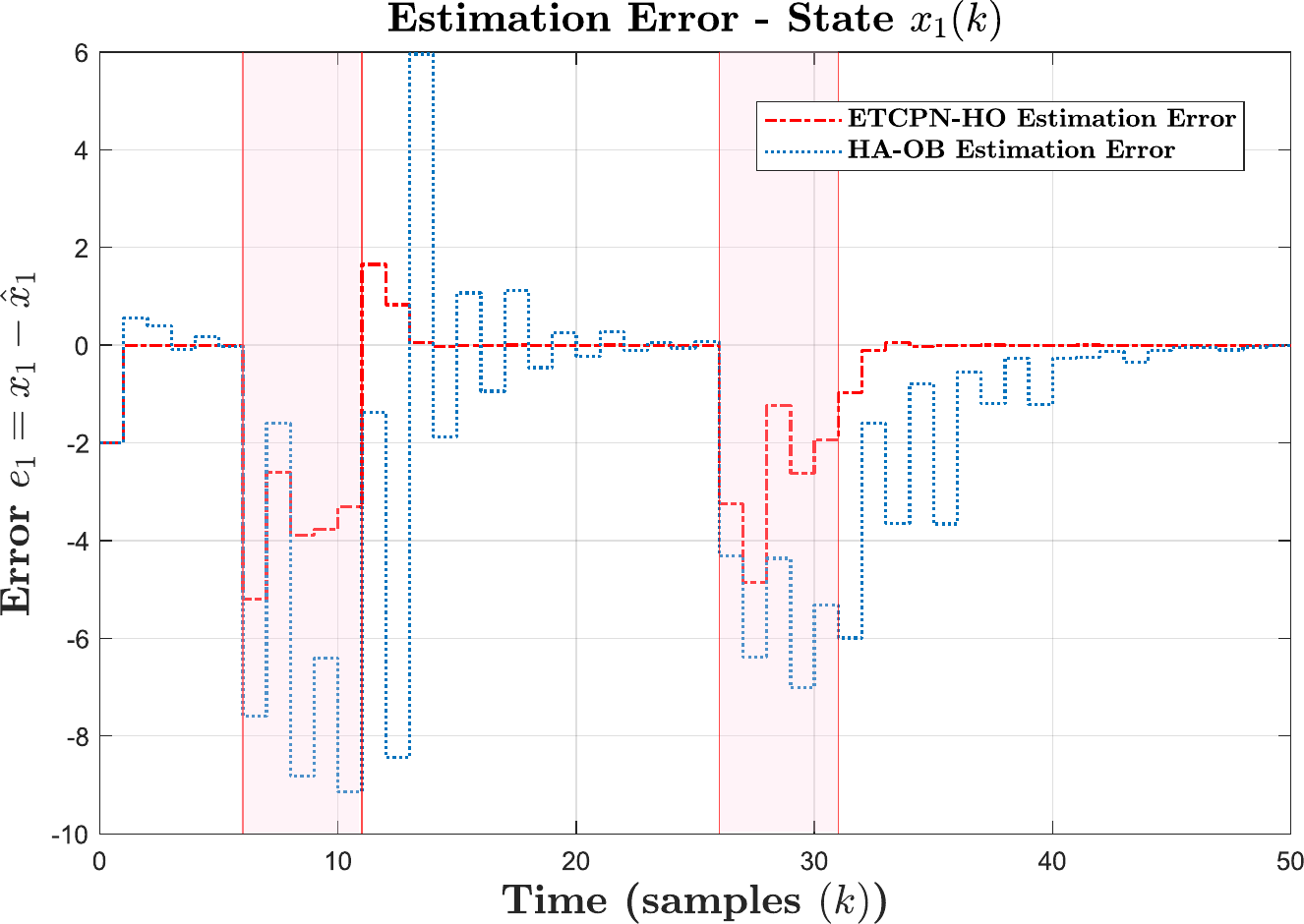}
  \includegraphics[scale=0.35]{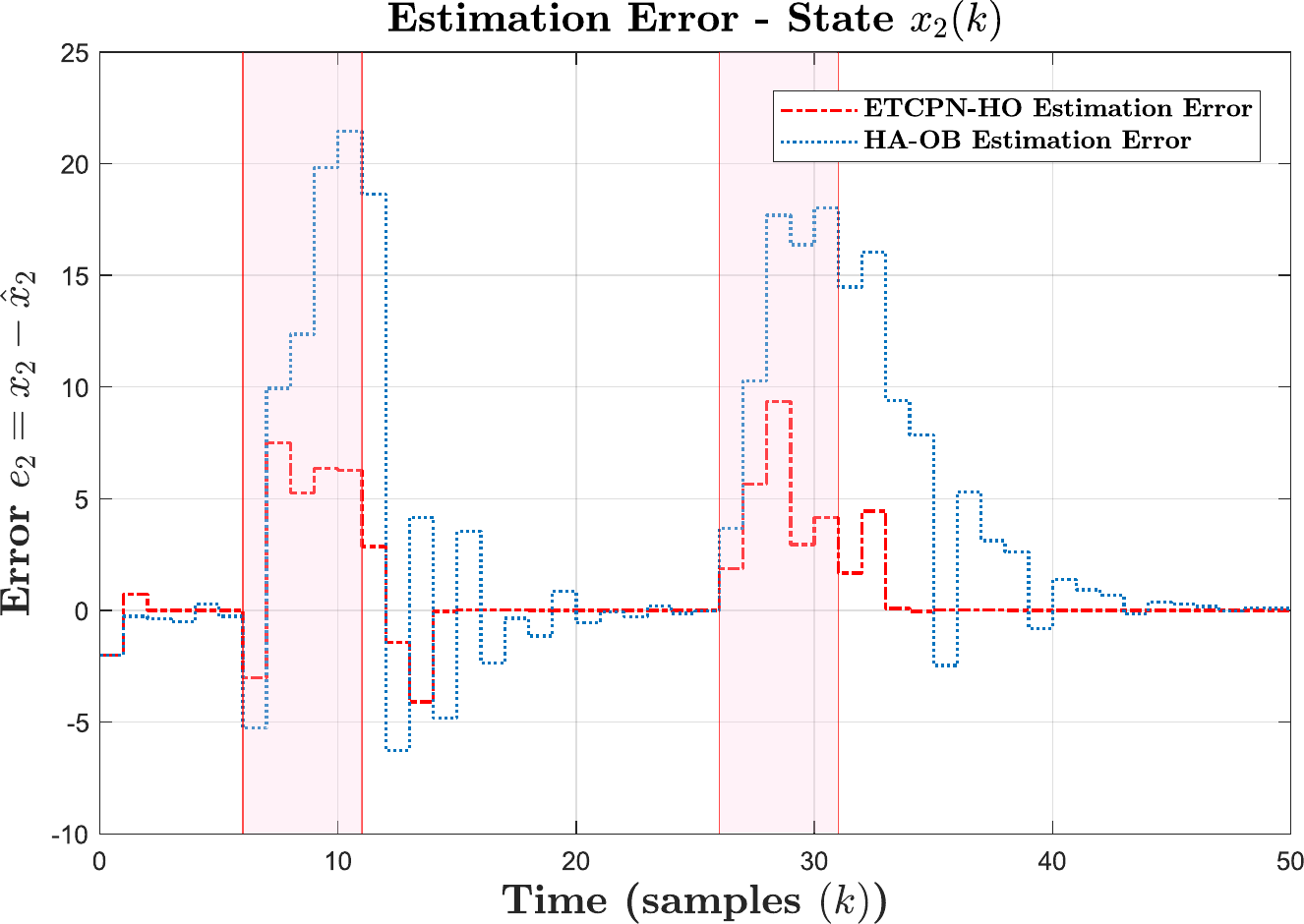}
  \caption{Evolution of estimation error dynamics during continuous faults.}
  \label{error3}
\end{figure}

In this scenario, faults affecting the continuous dynamics were considered, specifically output sensor faults introduced intermittently as described previously. The generated residuals using ETCPN-HO were analyzed through the three anomaly detection methods: One-Class SVM (OC-SVM), Elliptic Envelope (EE), and Support Vector Data Description (SVDD). The performance of each method was quantitatively evaluated using four key metrics: Recall, False Positive Rate (FPR), F1 Score, and overall accuracy. The detection results for Case 2 are summarized in Table~\ref{tab:case2_detection}.

\begin{table}[H]
\centering
\caption{Fault detection results for Case 2 (continuous-time faults) using residual-based methods.}
\label{tab:case2_detection}
\begin{tabular}{lcccc}
\hline
\textbf{Method} & \textbf{Accuracy} & \textbf{Recall} & \textbf{FPR} & \textbf{F1 Score} \\
\hline
One-Class SVM & 0.943 & 1.000 & 0.057 & 0.83 \\
EE            & 0.911 & 1.000 & 0.121 & 0.86 \\
SVDD          & 0.889 & 1.000 & 0.152 & 0.83 \\
\hline
\end{tabular}
\end{table}

\noindent The results clearly demonstrate the effectiveness of all three detection approaches in identifying continuous-time faults. Notably, all methods achieved perfect fault detection rates, as evidenced by Recall values of $1.00$, confirming their ability to successfully detect all fault occurrences in Case 2. This is attributed to the pronounced residual deviations induced by the intermittent sensor faults, which provided clear fault signatures easily distinguishable from the nominal behavior.

However, differences were observed in terms of false alarm rates (FPR). The One-Class SVM exhibited the lowest FPR (0.057), indicating a better trade-off between detection sensitivity and false positives. SVDD and Elliptic Envelope produced slightly higher FPR values of 0.152 and 0.121, respectively, which may be explained by their stricter decision boundaries or distributional assumptions, especially in the presence of noisy or borderline samples. In terms of overall detection quality (F1 Score), Elliptic Envelope achieved the highest F1 score (0.86), balancing perfect recall with an acceptable false alarm level. OC-SVM and SVDD followed closely, with F1 scores of 0.83, reflecting robust and consistent performance.

\paragraph{\textbf{ Case 3 -  Faults affecting both discrete event and continuous-time dynamics:}}

To comprehensively evaluate the effectiveness of the proposed ETCPN-based detection framework, we investigate a complex scenario involving concurrent faults affecting both the continuous dynamics and the discrete-event dynamics. Specifically, we introduce intermittent sensor faults in the continuous output and mode-blocking faults that interfere with the normal discrete switching logic. The fault injection protocol is designed to analyze both separate and simultaneous occurrences:
\begin{itemize}
  \item Separate Faults: A continuous sensor fault is active in the time interval $k \in [5, 10]$, followed by a discrete mode-blocking fault that forces the system to remain in Mode 1 during $k \in [20, 24]$.
  \item Simultaneous Faults: Both fault types are activated concurrently within the same time interval $k \in [37, 41]$, , where the discrete fault traps the system in Mode 2 while the continuous sensor fault is active.
\end{itemize}

The injected fault pattern is illustrated in Figure~\ref{figud}.

\begin{figure}[h!]
\centering
\includegraphics[scale=0.35]{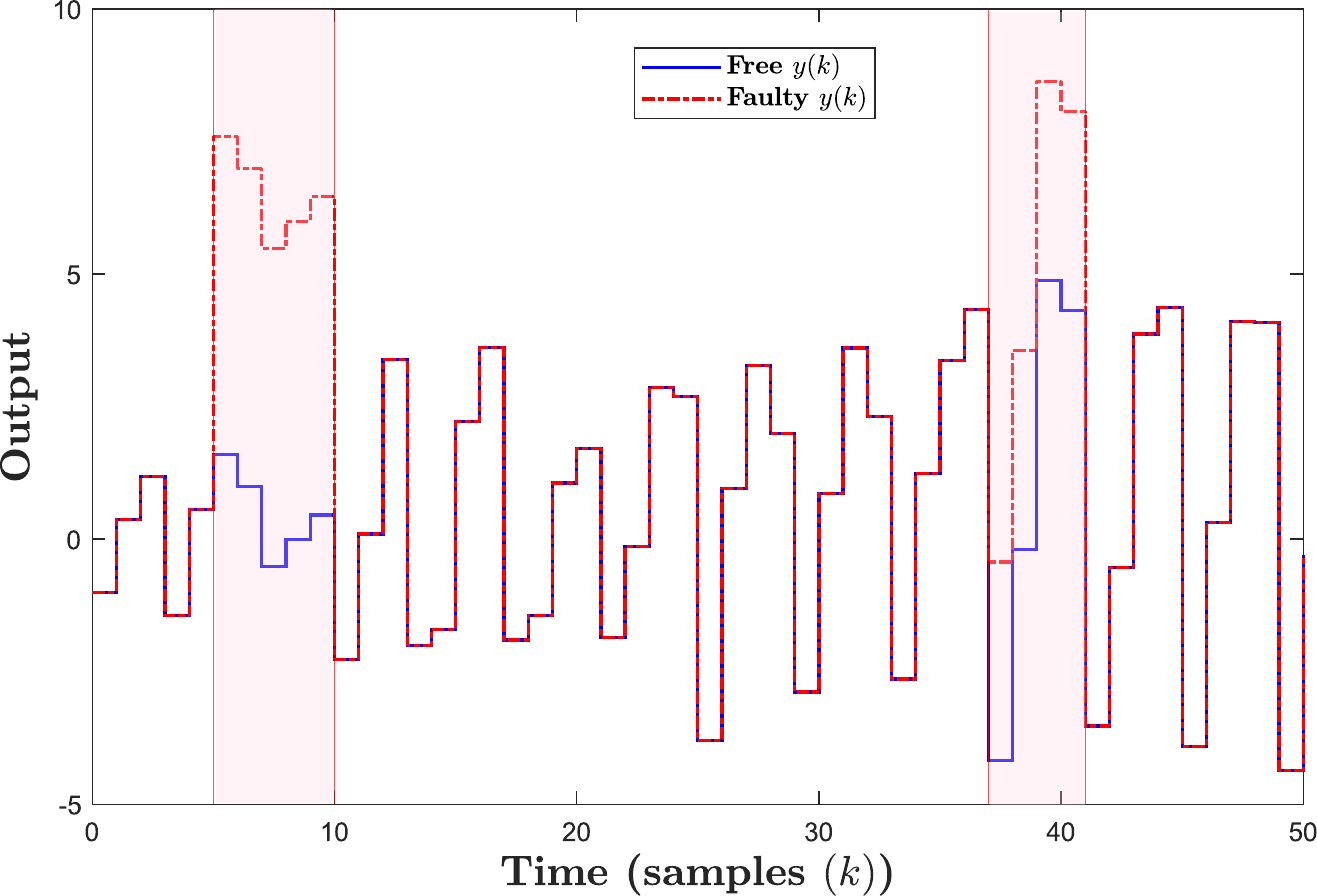}
\includegraphics[scale=0.35]{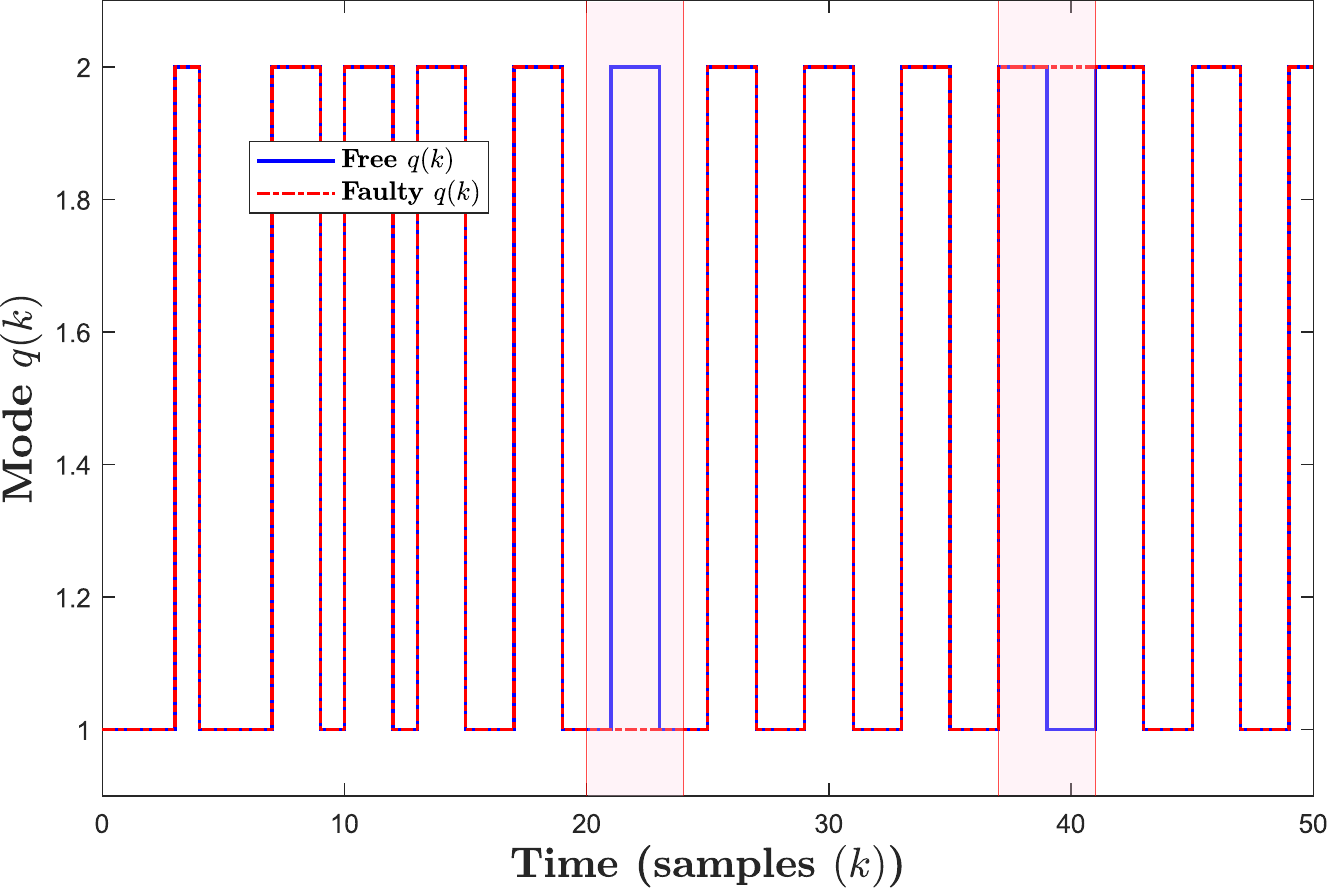}
\caption{Pattern of injected intermittent output sensor faults and discrete mode-blocking events.}
\label{figud}
\end{figure}

The results in Figure~\ref{figxf1d} demonstrate the decisive advantage of the ETCPN-HO. It exhibits rapid convergence and accurate state estimation following each fault injection (both separate and simultaneous), swiftly re-synchronizing with the true system state once normal operation resumes. In stark contrast, the HA-OB shows significantly degraded performance, struggling with slow recovery, larger estimation errors, and, in the case of simultaneous faults, a failure to converge properly within the observed timeframe. This performance gap underscores the critical importance of the ETCPN's intrinsically coupled modeling of hybrid dynamics for maintaining stability under complex fault conditions.

The evolution of the discrete mode estimate in Figure~\ref{figmff} further confirms the robustness of the ETCPN-HO, which successfully identifies and recovers from all mode-blocking events to provide accurate mode information during both separate and simultaneous fault occurrences. The estimation error dynamics in Figure~\ref{error33} provide further empirical evidence of stable performance, showing the ETCPN-HO's residuals converging rapidly to near zero after each fault. This behavior provides strong empirical evidence for the asymptotic stability of the ETCPN-HO, as guaranteed theoretically by the LMI conditions. In conclusion, for the most challenging scenario of concurrent faults, multiple and simultaneous faults, the proposed ETCPN-HO framework demonstrates a clear and decisive advantage over the HA-OB benchmark, providing the resilient and accurate monitoring necessary for dependable fault detection in safety-critical hybrid systems.

\begin{figure}[H]
  \centering
  \includegraphics[scale=0.35]{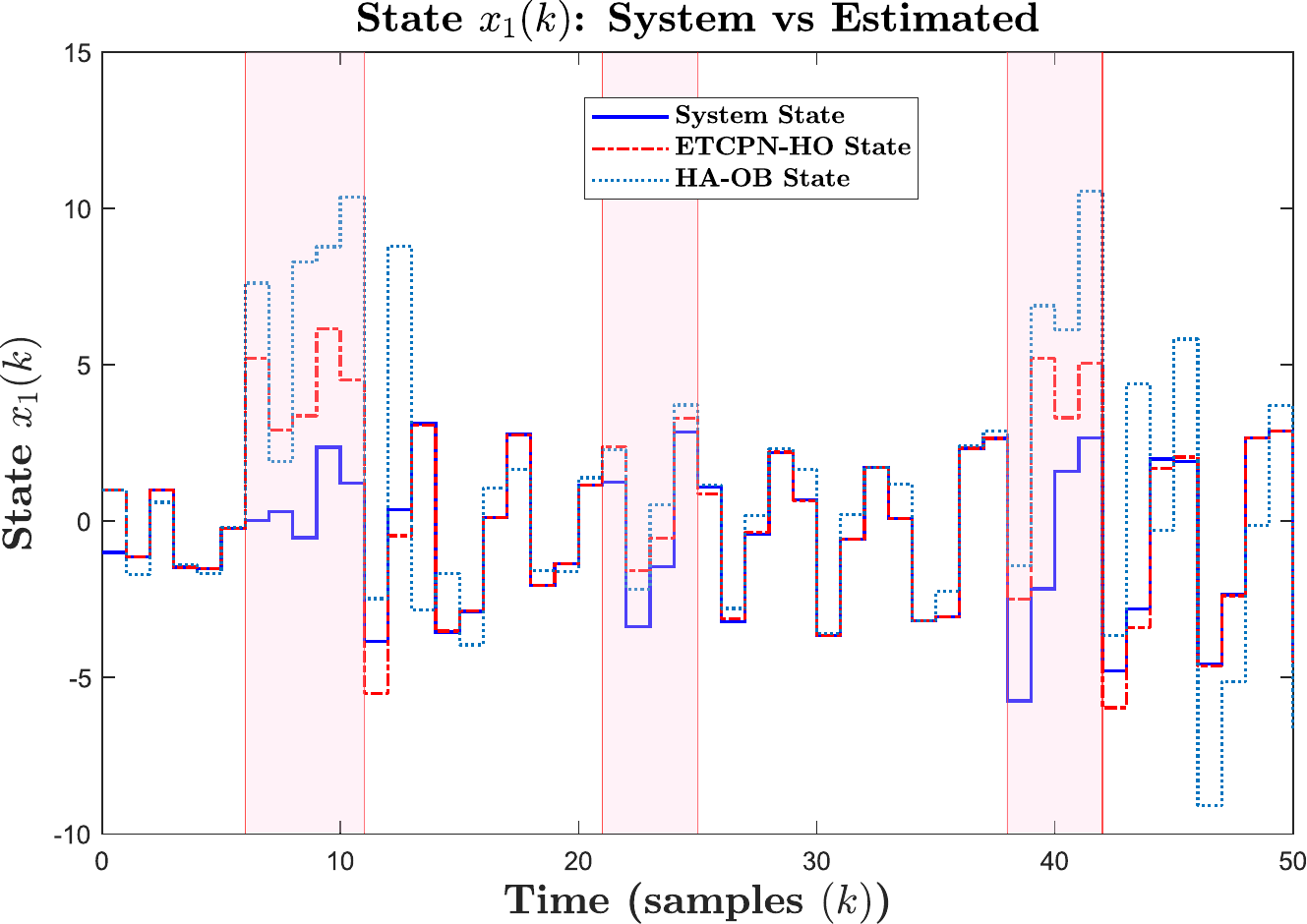}
  \includegraphics[scale=0.35]{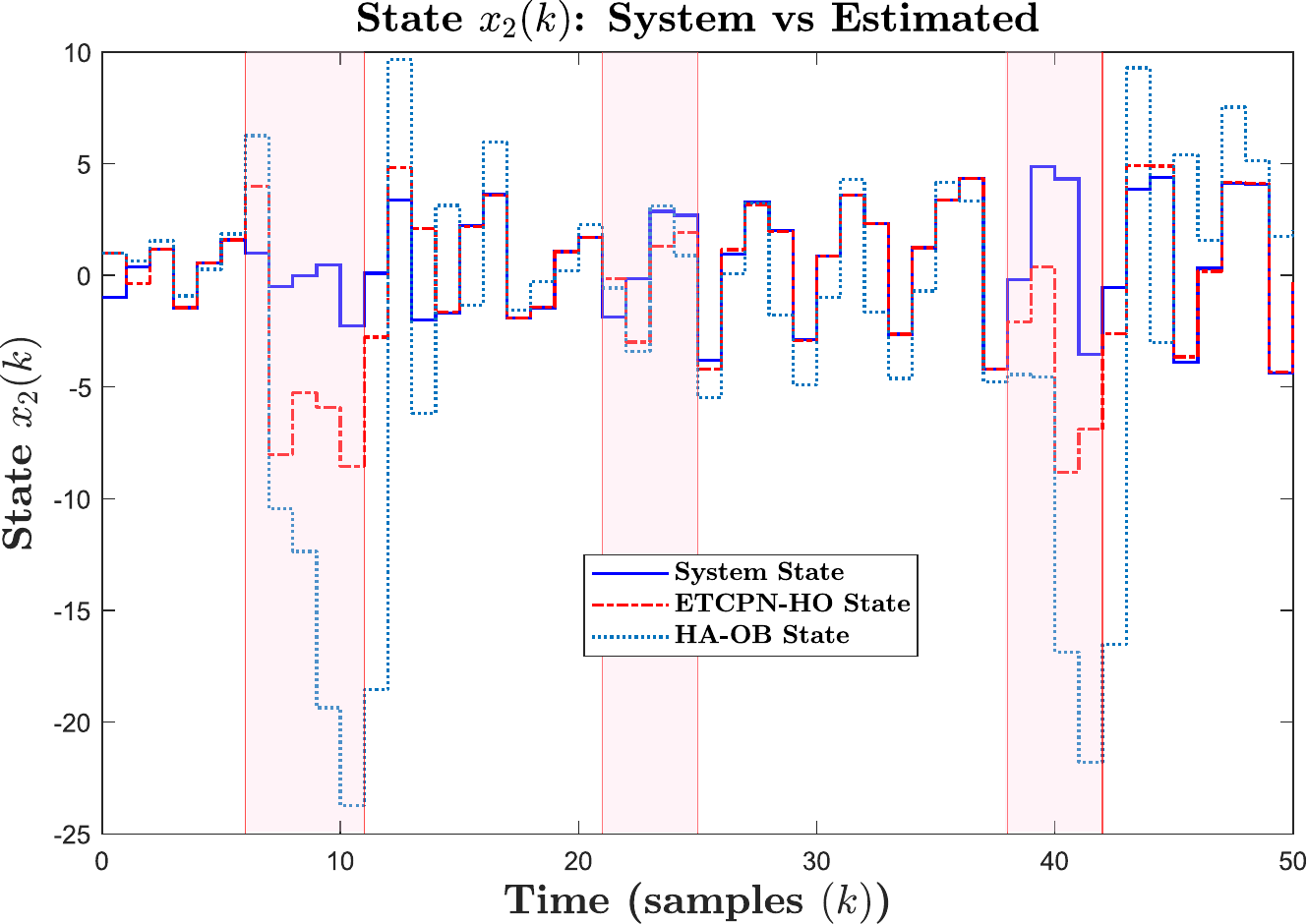}
  \caption{Evolution of ETCPN switched observer states under intermittent output faults.}
  \label{figxf1d}
\end{figure}

\begin{figure}[H]
  \centering
  \includegraphics[scale=0.35]{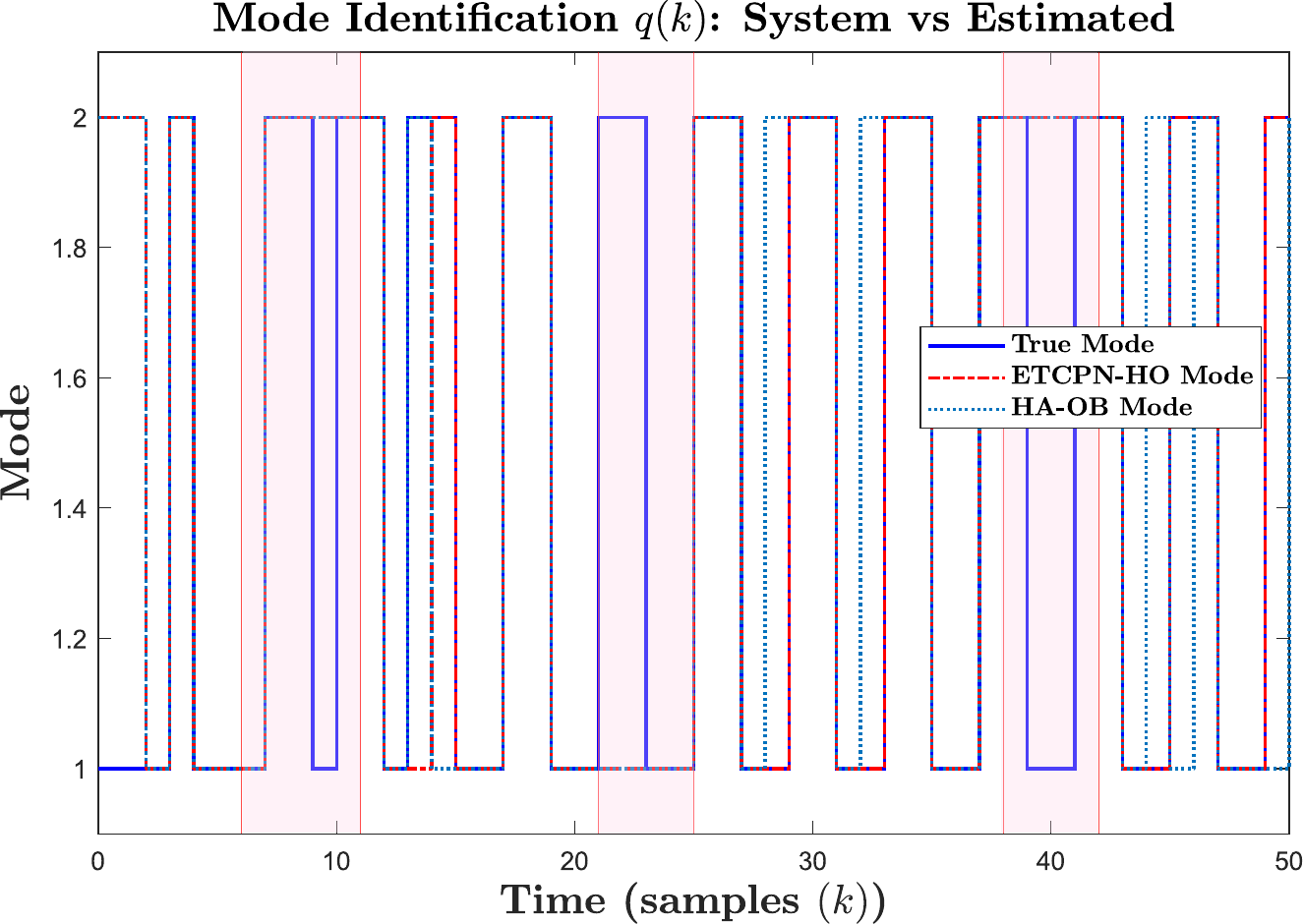}
  \caption{Evolution of ETCPN switched observer modes $q(k)$ during fault occurrences.}
  \label{figmff}
\end{figure}

\begin{figure}[H]
  \centering
  \includegraphics[scale=0.35]{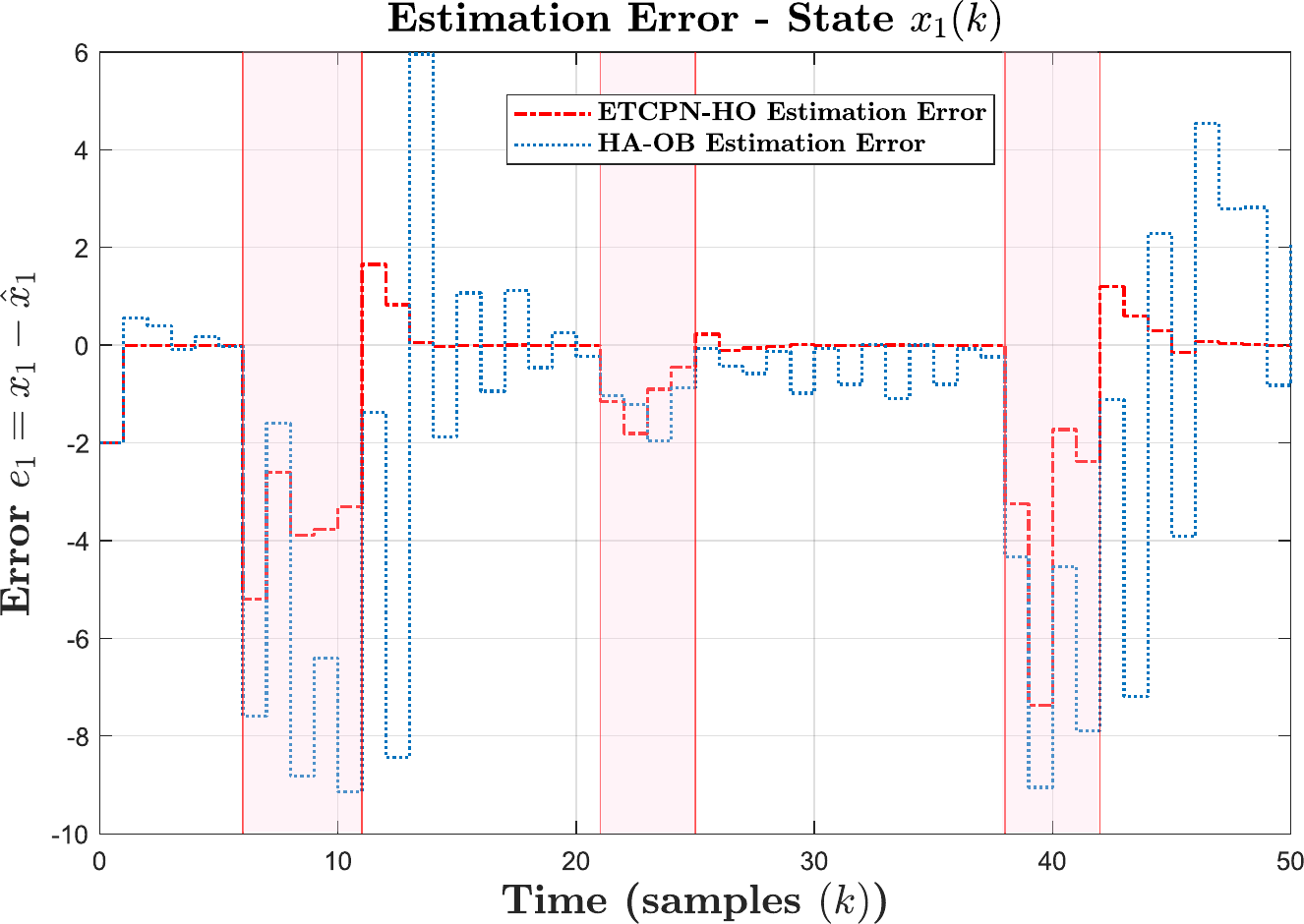}
  \includegraphics[scale=0.35]{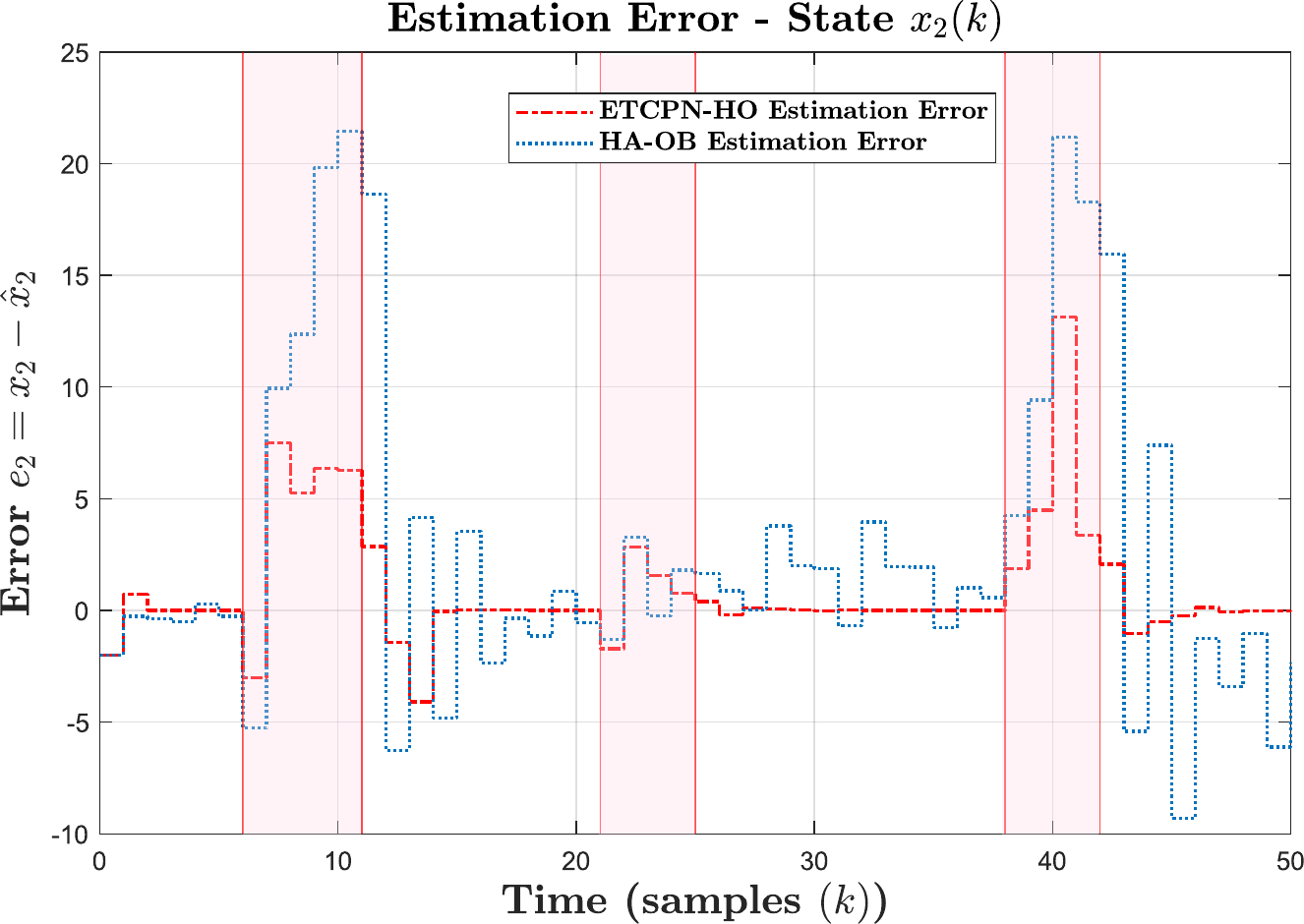}
  \caption{Evolution of estimation error dynamics during continuous faults.}
  \label{error33}
\end{figure}

Table~\ref{tab:case3} reports the quantitative detection metrics of the three considered detection methods for Case 3. The results indicate that all methods achieve relatively high overall accuracy, with OCSVM obtaining the best performance (90\%) and the highest F1 score (0.828), reflecting a balanced tradeoff between precision and recall. SVDD achieves competitive results with an F1 score of 0.786 and a recall of 0.688, though slightly lower than OCSVM. In contrast, the EE achieves perfect specificity with zero false positives, but this comes at the cost of reduced sensitivity (recall of 0.562). Results highlight the inherent trade-off between maximizing fault detection coverage and minimizing false alarms. OCSVM provides the most robust performance in handling the hybrid faults of Case 3, while SVDD offers a reasonable balance.

\begin{table}[h!]
\centering
\caption{Detection performance across the three methods on Case 3.}
\label{tab:case3}
\begin{tabular}{lcccc}
\hline
\textbf{Method} & \textbf{Accuracy} & \textbf{Recall} & \textbf{FPR} & \textbf{F1 Score} \\
\hline
OCSVM & 0.900 & 0.750 & 0.029 & 0.828 \\
EE & 0.860 & 0.562 & 0.000 & 0.720 \\
SVDD & 0.880 & 0.688 & 0.029 & 0.786 \\
\hline
\end{tabular}
\end{table}

\medskip

\noindent Overall, the experimental investigations, spanning both discrete-event and continuous fault scenarios, clearly demonstrate the effectiveness of the proposed ETCPN-based residual generation and hybrid observer framework. The developed residual signals, when processed by semi-supervised anomaly detection algorithms, namely OC-SVM, SVDD, and Elliptic Envelope, enabled accurate and timely fault detection without the need for labeled fault data. While discrete-event faults were characterized by abrupt mode blocking, continuous faults induced subtle deviations in the system state trajectories. Nevertheless, the three fault types were successfully captured through the combined estimation and detection architecture. The comparative performance of these semi-supervised algorithms aligns with their theoretical foundations. The robustness of OC-SVM and SVDD to complex, non-Gaussian residual distributions makes them generally well-suited for fault detection in hybrid dynamic systems. Conversely, the Elliptic Envelope's performance is highly dependent on the Gaussianity of the nominal data, a condition often violated during dynamic mode transitions. While the specific performance metrics are case-dependent, these relative trends are expected to generalize across similar applications of the proposed ETCPN-HO framework. These results underscore the flexibility and robustness of the proposed method, establishing its suitability for real-world hybrid systems where faults may occur across both discrete and continuous dynamics.

\section{Conclusion}\label{sec7}

This paper introduced an innovative fault detection framework for hybrid dynamical systems by integrating Extended Timed Continuous Petri Net (ETCPN) modeling with semi-supervised anomaly detection. Overall, the results confirm the potential of the proposed approach as a flexible and efficient solution for fault detection in hybrid systems. The framework's strength lies in its unified modeling formalism, which is purpose-built for systematic observer design (whose stability is guaranteed via Linear Matrix Inequalities (LMIs)) and its use of state residuals for high-sensitivity detection of incipient faults without requiring pre-labeled fault data. Validation across multiple fault scenarios, including discrete mode-blocking, continuous sensor faults, and combined hybrid faults occurring separately or simultaneously, demonstrated the framework's capability for accurate state estimation and rapid fault detection. Among the tested semi-supervised methods, One-Class SVM and SVDD achieved the best trade-off between detection rate and false alarms.

A distinct practical advantage is the framework's real-time feasibility. The computationally intensive LMI-based observer synthesis is an offline process. At the same time, online execution relies only on efficient matrix-vector operations and simple residual checks, making it suitable for deployment on modern industrial controllers. The memory footprint, which scales with the number of modes and the state dimension, remains manageable for typical industrial applications. The primary challenges for industrial adoption lie in the initial modeling effort and the long-term maintenance required to address potential model mismatch as physical systems evolve. 

Future work will explore adaptive thresholding and uncertainty modeling to enhance detection robustness under noisy and highly variable operating conditions. In addition, explainable and interpretable fault detection will be investigated through advanced explainable AI (XAI) techniques, enabling operators to gain actionable insights into the nature and causes of detected anomalies.

\appendix

\section*{Appendix A. Proof of Theorem~\ref{theo1}}\label{proofA}

\begin{proof}

The demonstration proceeds in two parts:

\begin{enumerate}

\item \textbf{Case without output.}

Consider the system without an output. The ETCPN places are defined as:

\begin{equation}\label{eqau1}
P_q^C =
\begin{bmatrix}
u_{p \times 1} & x_{n \times 1}
\end{bmatrix}^T {\color{red}\,,}
\end{equation}

\noindent where $u$ and $x$ denote the input and state vectors, respectively. The marking vector $M^C(k)$ is thus expressed as:

\begin{equation}\label{eqau2}
M_q^C(k) =
\begin{bmatrix}
u(k) & x(k)
\end{bmatrix}^T{\color{red}.}
\end{equation}

\noindent At time $k+1$, assuming maximum firing speed, the marking evolves according to:

\begin{equation}\label{eq23}
M_q^C(k+1) = \left[ I_{f \times f} + (Post - Pre) \right] M_q^C(k){\color{red}.}
\end{equation}

\noindent From the system model, the evolution can also be expressed as:

\begin{equation}\label{eq24}
\begin{bmatrix}
u(k+1) \\
x(k+1)
\end{bmatrix}
=
\begin{bmatrix}
u(k) \\
A_q x(k) + B_q u(k)
\end{bmatrix}{\color{red}\,,}
\end{equation}

\noindent which is equivalent to:

\begin{equation}\label{eq25}
\begin{bmatrix}
u(k+1) \\
x(k+1)
\end{bmatrix}
=
\begin{bmatrix}
I_{p \times p} & 0_{p \times n} \\
B_q & A_q
\end{bmatrix}
\begin{bmatrix}
u(k) \\
x(k)
\end{bmatrix}{\color{red}.}
\end{equation}

\noindent By comparison with Equation~\eqref{eq23}, and noting the structure of $(Post - Pre)$, the incidence matrix is deduced:

\begin{equation}\label{eq26}
(Post - Pre) =
\begin{bmatrix}
0_{p \times p} & 0_{p \times n} \\
B_q & A_q - I_{n \times n}
\end{bmatrix}{\color{red}.}
\end{equation}

\noindent Where:

\begin{equation}\label{eaq27}
Pre = I_{f \times f}{\color{red}\,,}
\end{equation}

\begin{equation}\label{eq27}
Post =
\begin{bmatrix}
I_{p \times p} & 0_{p \times n} \\
B_q & A_q 
\end{bmatrix}{\color{red}.}
\end{equation}

\noindent Therefore, the incidence matrix $W_q^C$ is given by:

\begin{equation*}
W_q^C = Post - Pre =
\begin{bmatrix}
I_{p \times p} & 0_{p \times n} \\
B_q & A_q 
\end{bmatrix}
- I_{f \times f}{\color{red}.}
\end{equation*}

\item \textbf{Case with output.}

When outputs are considered, the continuous places are extended to:

\begin{equation}\label{eq28}
P_q^C =
\begin{bmatrix}
u_{p \times 1} & x_{n \times 1} & y_{r \times 1}
\end{bmatrix}^T{\color{red}\,,}
\end{equation}

\noindent and the marking vector becomes:

\begin{equation}\label{eq29}
M_q^C(k) =
\begin{bmatrix}
u(k) & x(k) & y(k)
\end{bmatrix}^T{\color{red}.}
\end{equation}

\noindent Following a similar procedure as in the previous case, the incidence matrix formulation leads to the expression presented in Equation~\eqref{eqnaut} of Theorem~\ref{theo1}.
\end{enumerate}
\end{proof}

\section*{Appendix B. Proof of Theorem~\ref{theo2}} \label{proofB}

\begin{proof}

Consider the discrete time continuous state dynamics of the ETCPN in mode $q$ given by Equation~\eqref{eqn33}, the estimation error $e(k) = x(k) - \hat{x}(k)$ therefore has the dynamics:
\begin{equation}\label{error_dynamics}
e(k+1) = (A_q - L_q C_q) e(k) = \underline{A}_q e(k){\color{red}.}
\end{equation} 

This represents a switched linear system, where the switching signal is the active discrete mode $q$ of the Petri net. To guarantee stability under arbitrary switching sequences generated by the discrete Petri net, we employ a mode-dependent Lyapunov function:

\begin{equation}
V_q(e(k)) = e(k)^T P_q e(k), \quad P_q = P_q^T > 0{\color{red}.}
\end{equation}

For asymptotic stability, the difference of this function must be negative definite for any switch from mode $q$ to mode $\acute{q}$:

\begin{align}
\Delta V &= V_{\acute{q}}(e(k+1)) - V_q(e(k)) \prec 0 \nonumber \\
&=e(k)^T (A_q - L_q C_q)^T P_{\acute{q}} (A_q - L_q C_q) e(k) -  e(k)^T P_q e(k)\\
&=e(k)^T \left[(A_q - L_q C_q)^T P_{\acute{q}} (A_q - L_q C_q) -  P_q \right] e(k)\\
&= e(k)^T \left[ \underline{A}_q^T P_{\acute{q}} \underline{A}_q - P_q \right] e(k) \prec 0 \label{lyap_inequality}
\end{align}

Where $P_q$ and $P_{\acute{q}}$ are successfully the Lyapunov matrix associated with the mode $q$ and their successor mode $\acute{q}$. This inequality is not linear in the variables $P_q, P_{\acute{q}}, L_q$. To convexify it, we perform the following steps:

\begin{itemize}
  \item Apply the Schur complement to \eqref{lyap_inequality}. This is equivalent to:
\begin{equation}\label{schur1}
\begin{bmatrix}
-P_q & \underline{A}_q^T P_{\acute{q}}\\

* & -P_{\acute{q}}^{-1}
\end{bmatrix} \prec 0{\color{red}.}
\end{equation}
  \item Define a change of variables: $T_q = P_q^{-1}$ and $T_{\acute{q}} = P_{\acute{q}}^{-1}$. Introduce a variable $G_q$, non-singular matrix, pre-multiplying \eqref{schur1} by $\text{diag}(G_q^T, T_{\acute{q}})$ and post-multiplying it by $\text{diag}(G_q, T_{\acute{q}})$ yields:
\begin{equation}\label{schur2}
\begin{bmatrix}
-G_q^T T_q^{-1}G_q & G_q^T \underline{A}_q^T \\

* & -T_{\acute{q}}
\end{bmatrix} \prec 0{\color{red}.}
\end{equation}
  \item The inequality $-G_q^T T_q^{-1}G_q \prec -G_q - G_q^T + T_q$ always holds~\cite{zeroual2019road}. Using this and defining a new variable $F_q$, to avoid the non-linearity $L_q C_q G_q$, where $F_qC_q = L_q C_q G_q $, we can write the inequality \eqref{schur2} in the form of the LMI \eqref{theob}:

\begin{equation}
\begin{bmatrix}
T_q - G_q - G_q^T & G_q^T A_q^T - C_q^T F_q^T \\
\ast & -T_{\acute{q}}
\end{bmatrix} \prec 0{\color{red}.}
\end{equation}
\end{itemize}

\end{proof}

\bibliographystyle{IEEEtran}
\bibliography{bib}

@Article{taghavian2024constrained,
  author  = {Ali Taghavian and Ali Safi and Esmaeel Khanmirza},
  journal = {Journal of the Franklin Institute},
  title   = {Constrained computational hybrid controller for Input Affine Hybrid Dynamical Systems},
  year    = {2024},
  issn    = {0016-0032},
  number  = {17},
  pages   = {107142},
  volume  = {361},
}

@Article{khorasgani2017structural,
  author  = {Khorasgani, Hamed and Biswas, Gautam},
  journal = {IEEE Transactions on Automation Science and Engineering},
  title   = {Structural Fault Detection and Isolation in Hybrid Systems},
  year    = {2018},
  number  = {4},
  pages   = {1585-1599},
  volume  = {15},
}

@Article{wang2012hybrid,
  author  = {Zhenheng Wang and Jinsong Zhao and Helen Shang},
  journal = {Journal of Process Control},
  title   = {A hybrid fault diagnosis strategy for chemical process startups},
  year    = {2012},
  issn    = {0959-1524},
  number  = {7},
  pages   = {1287-1297},
  volume  = {22},
}

@Article{shi2024novel,
  author  = {Yilin Shi and Ning Zhang and Xiaolu Song and Hongguang Li and Qunxiong Zhu},
  journal = {Journal of Process Control},
  title   = {Novel approach for industrial process anomaly detection based on process mining},
  year    = {2024},
  issn    = {0959-1524},
  pages   = {103165},
  volume  = {136},
}

@Article{yang2010observer,
  author  = {Yang, Hao and Jiang, Bin and Cocquempot, Vincent},
  journal = {International Journal of Robust and Nonlinear Control},
  title   = {Observer-based fault-tolerant control for a class of hybrid impulsive systems},
  year    = {2010},
  number  = {4},
  pages   = {448-459},
  volume  = {20},
}

@Article{diedrich2019model,
  author  = {Diedrich, Alexander and Maier, Alexander and Niggemann, Oliver},
  journal = {Proceedings of the AAAI Conference on Artificial Intelligence},
  title   = {Model-Based Diagnosis of Hybrid Systems Using Satisfiability Modulo Theory},
  year    = {2019},
  month   = {Jul.},
  number  = {01},
  pages   = {1452-1459},
  volume  = {33},
}

@InProceedings{hofbaur2002mode,
  author    = {Hofbaur, Michael W. and Williams, Brian C.},
  booktitle = {Hybrid Systems: Computation and Control},
  title     = {Mode Estimation of Probabilistic Hybrid Systems},
  year      = {2002},
  address   = {Berlin, Heidelberg},
  editor    = {Tomlin, Claire J. and Greenstreet, Mark R.},
  pages     = {253--266},
  publisher = {Springer Berlin Heidelberg},
}

@Article{liu2018new,
  author  = {Hai Liu and Maiying Zhong and Yang Liu},
  journal = {IFAC-PapersOnLine},
  title   = {A New Strategy of Adaptive Observer Based Fault Isolation},
  year    = {2018},
  issn    = {2405-8963},
  note    = {10th IFAC Symposium on Fault Detection, Supervision and Safety for Technical Processes SAFEPROCESS 2018},
  number  = {24},
  pages   = {1373-1378},
  volume  = {51},
}

@Article{lien2020adaptive,
  author  = {Lien, Yu-Hsuan and Peng, Chao-Chung and Chen, Yi-Hsuan},
  journal = {Applied Sciences},
  title   = {Adaptive Observer-Based Fault Detection and Fault-Tolerant Control of Quadrotors under Rotor Failure Conditions},
  year    = {2020},
  issn    = {2076-3417},
  number  = {10},
  volume  = {10},
}

@Article{yan2024fault,
  author  = {Jing-Jing Yan and Chao Deng and Wei-Wei Che and Xiao-Xu Liu},
  journal = {Journal of the Franklin Institute},
  title   = {Fault estimation for cyber–physical systems with intermittent measurement transmissions via a hybrid observer approach},
  year    = {2024},
  issn    = {0016-0032},
  number  = {3},
  pages   = {1497-1509},
  volume  = {361},
}

@Article{novelli2025identification,
  author  = {Novelli, Nico and Belardinelli, Pierpaolo and Lenci, Stefano},
  journal = {Nonlinear Dynamics},
  title   = {Identification of hybrid dynamic systems via a sparse regression algorithm},
  year    = {2025},
  issn    = {1573-269X},
  month   = aug,
  number  = {16},
  pages   = {20565--20588},
  volume  = {113},
}

@Article{yin2024dynamic,
  author  = {Shicai Yin and Tao Peng and Chao Yang and Chunhua Yang and Zhiwen Chen and Weihua Gui},
  journal = {Control Engineering Practice},
  title   = {Dynamic hybrid observer-based early slipping fault detection for high-speed train wheelsets},
  year    = {2024},
  issn    = {0967-0661},
  pages   = {105736},
  volume  = {142},
}

@Article{zhu2025stability,
  author    = {Zhu, Yanzheng and Che, Junxing and Wu, Fen and Chen, Xinkai and Zheng, Weixing and Zhou, Donghua},
  journal   = {IEEE/CAA Journal of Automatica Sinica},
  title     = {Stability, Control and Fault Diagnosis of Switched Linear Parameter Varying Systems: A Survey},
  year      = {2025},
  number    = {9},
  pages     = {1745--1761},
  volume    = {12},
  publisher = {IEEE/CAA Journal of Automatica Sinica},
}

@InProceedings{harrou2018traffic,
  author    = {Harrou, Fouzi and Zeroual, Abdelhafid and Sun, Ying},
  booktitle = {2018 Annual American Control Conference (ACC)},
  title     = {Traffic congestion detection based on hybrid observer and GLR test},
  year      = {2018},
  pages     = {604-609},
}

@Article{chen2021hybrid,
  author  = {Yu Chen and Yue-E Wang and Di Wu},
  journal = {Journal of the Franklin Institute},
  title   = {Hybrid state observer‐based event‐triggered control for switched linear systems with quantized input},
  year    = {2021},
  issn    = {0016-0032},
  number  = {17},
  pages   = {9086-9109},
  volume  = {358},
}

@Article{lefebvre2017discussion,
  author  = {Dimitri Lefebvre and Enrique Aguayo-Lara},
  journal = {IFAC-PapersOnLine},
  title   = {A Discussion on Fault detection for a class of Hybrid Petri Nets},
  year    = {2017},
  issn    = {2405-8963},
  note    = {20th IFAC World Congress},
  number  = {1},
  pages   = {6837-6842},
  volume  = {50},
}

@Article{davrazos2007modeling,
  author  = {Gregory Davrazos and Nick T. Koussoulas},
  journal = {Simulation Modelling Practice and Theory},
  title   = {Modeling and stability analysis of state-switched hybrid systems via Differential Petri Nets},
  year    = {2007},
  issn    = {1569-190X},
  number  = {8},
  pages   = {879-893},
  volume  = {15},
}

@Article{chen2014novel,
  author  = {Fuyang Chen and Li Wang and Bin Jiang and Changyun Wen},
  journal = {Journal of the Franklin Institute},
  title   = {A novel hybrid petri net model for urban intersection and its application in signal control strategy},
  year    = {2014},
  issn    = {0016-0032},
  number  = {8},
  pages   = {4357-4380},
  volume  = {351},
}

@Article{renganathan2011observer,
  author  = {K. Renganathan and VidhyaCharan Bhaskar},
  journal = {ISA Transactions},
  title   = {An observer based approach for achieving fault diagnosis and fault tolerant control of systems modeled as hybrid Petri nets},
  year    = {2011},
  issn    = {0019-0578},
  number  = {3},
  pages   = {443-453},
  volume  = {50},
}

@Article{hatte2024transforming,
  author  = {Léonie Hatte and Pauline Ribot and Elodie Chanthery},
  journal = {IFAC-PapersOnLine},
  title   = {Transforming Time Petri Nets into Heterogeneous Petri Nets for Hybrid System Monitoring},
  year    = {2024},
  issn    = {2405-8963},
  note    = {12th IFAC Symposium on Fault Detection, Supervision and Safety for Technical Processes SAFEPROCESS 2024},
  number  = {4},
  pages   = {646-651},
  volume  = {58},
}

@PhdThesis{petri1962kommunikation,
  author     = {Petri, Carl},
  school     = {TU Darmstadt},
  title      = {Kommunikation mit Automaten},
  year       = {1962},
  dnbtitleid = {481889272},
}

@Article{grobelna2021challenges,
  author  = {Grobelna, Iwona and Karatkevich, Andrei},
  journal = {Electronics},
  title   = {Challenges in Application of Petri Nets in Manufacturing Systems},
  year    = {2021},
  issn    = {2079-9292},
  number  = {18},
  volume  = {10},
}

@InBook{kahraman2010manufacturing,
  author    = {Kahraman, Cengiz and T{\"u}ys{\"u}z, Fatih},
  editor    = {Kahraman, Cengiz and Yavuz, Mesut},
  pages     = {95--124},
  publisher = {Springer Berlin Heidelberg},
  title     = {Manufacturing System Modeling Using Petri Nets},
  year      = {2010},
  address   = {Berlin, Heidelberg},
  booktitle = {Production Engineering and Management under Fuzziness},
}

@InProceedings{vescio2015petri,
  author    = {Vescio, Giovanni and Riccobon, Paolo and Grasselli, Umberto and De Angelis, Francesco},
  booktitle = {2015 Annual Reliability and Maintainability Symposium (RAMS)},
  title     = {A Petri Net model for electrical power systems operating procedures},
  year      = {2015},
  pages     = {1-6},
}

@Article{julvez2005modelling,
  author  = {Jorge Júlvez and René Boel},
  journal = {IFAC Proceedings Volumes},
  title   = {MODELLING AND CONTROLLING TRAFFIC BEHAVIOUR WITH CONTINUOUS PETRI NETS},
  year    = {2005},
  issn    = {1474-6670},
  note    = {16th IFAC World Congress},
  number  = {1},
  pages   = {43-48},
  volume  = {38},
}

@InProceedings{liang2021modeling,
  author    = {Liang, Xin and Dang, Yuang and Hou, YiFan},
  booktitle = {2021 IEEE International Conference on Networking, Sensing and Control (ICNSC)},
  title     = {Modeling and Analysis of Urban Traffic System Based on Colored Petri Nets},
  year      = {2021},
  pages     = {1-6},
  volume    = {1},
}

@InProceedings{heiner2008petri,
  author    = {Heiner, Monika and Gilbert, David and Donaldson, Robin},
  booktitle = {Formal Methods for Computational Systems Biology},
  title     = {Petri Nets for Systems and Synthetic Biology},
  year      = {2008},
  address   = {Berlin, Heidelberg},
  editor    = {Bernardo, Marco and Degano, Pierpaolo and Zavattaro, Gianluigi},
  pages     = {215--264},
  publisher = {Springer Berlin Heidelberg},
}

@Article{mecheraoui2021petri,
  author  = {Khalil Mecheraoui and Irina A. Lomazova and Nabil Belala},
  journal = {Journal of Parallel and Distributed Computing},
  title   = {A Petri net extension for systems of concurrent communicating agents with durable actions},
  year    = {2021},
  issn    = {0743-7315},
  pages   = {14-23},
  volume  = {155},
}

@Article{ali2024modeling,
  author  = {Nazakat Ali and Sasikumar Punnekkat and Abdul Rauf},
  journal = {Journal of Systems and Software},
  title   = {Modeling and safety analysis for collaborative safety-critical systems using hierarchical colored Petri nets},
  year    = {2024},
  issn    = {0164-1212},
  pages   = {111958},
  volume  = {210},
}

@Book{david2010discrete,
  author    = {David, Ren{\'e} and Alla, Hassane},
  publisher = {Springer-Verlag},
  title     = {Discrete, {Continuous}, and {Hybrid} {Petri} {Nets}},
  year      = {2005},
  address   = {Berlin/Heidelberg},
}

@Article{alla1998continuous,
  author  = {ALLA, HASSANE and DAVID, REN\'{E}},
  journal = {Journal of Circuits, Systems and Computers},
  title   = {CONTINUOUS AND HYBRID PETRI NETS},
  year    = {1998},
  number  = {01},
  pages   = {159-188},
  volume  = {08},
}

@Article{demongodin1998differential,
  author  = {Demongodin, I. and Koussoulas, N.T.},
  journal = {IEEE Transactions on Automatic Control},
  title   = {Differential Petri nets: representing continuous systems in a discrete-event world},
  year    = {1998},
  number  = {4},
  pages   = {573-579},
  volume  = {43},
}

@Article{wang2005hybrid,
  author  = {Chan, Felix and Wang, Zheng and Zhang, Jie and Chan, Felix T.S.},
  journal = {Journal of Manufacturing Technology Management},
  title   = {A hybrid {Petri} nets model of networked manufacturing systems and its control system architecture},
  year    = {2005},
  month   = jan,
  number  = {1},
  pages   = {36--52},
  volume  = {16},
}

@Article{cavone2018hybrid,
  author  = {G. Cavone and M. Dotoli and N. Epicoco and M. Franceschelli and C. Seatzu},
  journal = {IFAC-PapersOnLine},
  title   = {Hybrid Petri Nets to Re-design Low-Automated Production Processes: the Case Study of a Sardinian Bakery},
  year    = {2018},
  issn    = {2405-8963},
  note    = {14th IFAC Workshop on Discrete Event Systems WODES 2018},
  number  = {7},
  pages   = {265-270},
  volume  = {51},
}

@Article{di2004modelling,
  author  = {Angela Di Febbraro and Nicola Sacco},
  journal = {Control Engineering Practice},
  title   = {On modelling urban transportation networks via hybrid Petri nets},
  year    = {2004},
  issn    = {0967-0661},
  note    = {Analysis and Design of Hybrid Systems},
  number  = {10},
  pages   = {1225-1239},
  volume  = {12},
}

@Article{mishra2023multi,
  author  = {Jyotismita Mishra and Pradyumna Kumar Behera and Monalisa Pattnaik and B. Chitti Babu},
  journal = {Sustainable Energy Technologies and Assessments},
  title   = {A multi-agent petri net model power management strategy for wind–solar-battery driven DC microgrid},
  year    = {2023},
  issn    = {2213-1388},
  pages   = {102859},
  volume  = {55},
}

@Article{brinkrolf2021vanesa,
  author  = {Christoph Brinkrolf and Lennart Ochel and Ralf Hofestädt},
  journal = {Biosystems},
  title   = {VANESA: An open-source hybrid functional Petri net modeling and simulation environment in systems biology},
  year    = {2021},
  issn    = {0303-2647},
  pages   = {104531},
  volume  = {210},
}

@Article{peleg2005using,
  author  = {Peleg, Mor and Rubin, Daniel and Altman, Russ B.},
  journal = {Journal of the American Medical Informatics Association},
  title   = {Using {Petri} {Net} {Tools} to {Study} {Properties} and {Dynamics} of {Biological} {Systems}},
  year    = {2005},
  issn    = {1067-5027},
  month   = mar,
  number  = {2},
  pages   = {181--199},
  volume  = {12},
}

@Article{kohler2023fault,
  author  = {Andreas Köhler and Ping Zhang},
  journal = {IFAC-PapersOnLine},
  title   = {Fault Diagnosis in Discrete Event Systems Modeled by Signal Interpreted Petri Nets},
  year    = {2023},
  issn    = {2405-8963},
  note    = {22nd IFAC World Congress},
  number  = {2},
  pages   = {4576-4581},
  volume  = {56},
}

@InProceedings{he2018modeling,
  author    = {He, Xudong},
  booktitle = {2018 IEEE International Conference on Software Quality, Reliability and Security Companion (QRS-C)},
  title     = {Modeling and Analyzing Cyber Physical Systems Using High Level Petri Nets},
  year      = {2018},
  pages     = {469-476},
}

@Article{jiang2024petri,
  author  = {Jiang, Zhongyuan and Wang, Huan and Wang, Wenjie},
  journal = {IEEE Access},
  title   = {A Petri Net Strategy for Fault Diagnosis and Location in Power Distribution Systems to Prevent Local Power Shortages},
  year    = {2024},
  pages   = {161038-161053},
  volume  = {12},
}

@Article{alzalab2021trust,
  author  = {Alzalab, Ebrahim Ali and El-Sherbeeny, Ahmed M. and El-Meligy, Mohammed A. and Rauf, Hafiz Tayyab},
  journal = {IEEE Access},
  title   = {Trust-Based Petri Net Model for Fault Detection and Treatment in Automated Manufacturing Systems},
  year    = {2021},
  pages   = {157997-158009},
  volume  = {9},
}

@Article{arichi2022fault,
  author  = {F. Arichi and B. Cherki and M. Djemai and S.M. Djouadi},
  journal = {ISA Transactions},
  title   = {Fault diagnosis for discrete events systems described by partially observed Petri nets},
  year    = {2022},
  issn    = {0019-0578},
  pages   = {220-228},
  volume  = {128},
}

@Article{coquand2023diagnosabilization,
  author  = {Camille Coquand and Yannick Pencolé and Audine Subias},
  journal = {IFAC-PapersOnLine},
  title   = {Diagnosabilization of Time Petri net for timed fault},
  year    = {2023},
  issn    = {2405-8963},
  note    = {22nd IFAC World Congress},
  number  = {2},
  pages   = {8648-8653},
  volume  = {56},
}

@Article{de2022online,
  author  = {Braian Igreja {de Freitas} and João Carlos Basilio},
  journal = {IFAC-PapersOnLine},
  title   = {Online Fault Diagnosis of Discrete Event Systems Modeled by Labeled Petri Nets Using Labeled Priority Petri Nets*},
  year    = {2022},
  issn    = {2405-8963},
  note    = {16th IFAC Workshop on Discrete Event Systems WODES 2022},
  number  = {28},
  pages   = {329-336},
  volume  = {55},
}

@Article{zhang2019robust,
  author  = {Zhang, Haigang},
  journal = {Information Technology and Control},
  title   = {Robust fault diagnosis for discrete-time switched system with unknown state delays subject to component faults},
  year    = {2019},
  number  = {1},
  pages   = {146--159},
  volume  = {48},
}

@Article{kazemi2018new,
  author  = {Mohammad Ghasem Kazemi and Mohsen Montazeri},
  journal = {International Journal of Electrical and Computer Engineering (IJECE)},
  title   = {A New Hybrid Robust Fault Detection of Switching Systems by Combination of Observer and Bond Graph Method},
  year    = {2018},
  number  = {4},
  volume  = {8},
}

@Article{du2022actuator,
  author  = {Du, Dongsheng and Wu, Yu and Yang, Yue and Mao, Zehui},
  journal = {Optimal Control Applications and Methods},
  title   = {Actuator fault detection for the discrete-time switched systems based on delta operator approach},
  year    = {2022},
  number  = {2},
  pages   = {476-494},
  volume  = {43},
}

@Article{el2025line,
  author  = {Rafika El Harabi and Manel Atitallah and Mohamed Naceur Abdelkrim},
  journal = {Measurement and Control},
  title   = {On-line switched robust fault detection framework for switched systems with unknown inputs},
  year    = {2025},
  pages   = {00202940251339800},
}

@Article{yahia2025actuator,
  author  = {Yahia, Salwa and Bedoui, Saida and Abderrahim, Kamel},
  journal = {IEEE Access},
  title   = {Actuator Fault Tolerant Control in Switched Systems: A Comprehensive Approach Integrating Clustering, Classification, and LMI-Based Compensation},
  year    = {2025},
  pages   = {44090-44106},
  volume  = {13},
}

@Article{song2018practical,
  author  = {Song, Jae-Hwan and Kim, Kyeong-Hwa},
  journal = {Energies},
  title   = {A Practical Approach to Localize Simultaneous Triple Open-Switches for a PWM Inverter-Fed Permanent Magnet Synchronous Machine Drive System},
  year    = {2018},
  issn    = {1996-1073},
  number  = {1},
  volume  = {11},
}

@Article{li2019sensor,
  author  = {Li, Jian and Pan, Kunpeng and Su, Qingyu and Zhao, Xiao-Qi},
  journal = {IEEE Access},
  title   = {Sensor Fault Detection and Fault-Tolerant Control for Buck Converter via Affine Switched Systems},
  year    = {2019},
  pages   = {47124-47134},
  volume  = {7},
}

@Article{li2020robust,
  author  = {Li, Jian and Pan, Kunpeng and Su, Qingyu},
  journal = {International Journal of Adaptive Control and Signal Processing},
  title   = {Robust fault detection and adaptive parameter identification for DC-DC converters via switched systems},
  year    = {2020},
  number  = {11},
  pages   = {1642-1657},
  volume  = {34},
}

@Article{waez2013survey,
  author  = {Md Tawhid Bin Waez and Juergen Dingel and Karen Rudie},
  journal = {Computer Science Review},
  title   = {A survey of timed automata for the development of real-time systems},
  year    = {2013},
  issn    = {1574-0137},
  pages   = {1-26},
  volume  = {9},
}

@Article{giua2018petri,
  author  = {Alessandro Giua and Manuel Silva},
  journal = {Annual Reviews in Control},
  title   = {Petri nets and Automatic Control: A historical perspective},
  year    = {2018},
  issn    = {1367-5788},
  pages   = {223-239},
  volume  = {45},
}

@Article{vardakis2025petri,
  author  = {Christoforos Vardakis and Ioannis Dimolitsas and Dimitrios Spatharakis and Dimitrios Dechouniotis and Anastasios Zafeiropoulos and Symeon Papavassiliou},
  journal = {Simulation Modelling Practice and Theory},
  title   = {A Petri Net-based framework for modeling and simulation of resource scheduling policies in Edge Cloud Continuum},
  year    = {2025},
  issn    = {1569-190X},
  pages   = {103098},
  volume  = {141},
}

@Article{lefebvre2015gradient,
  author  = {Dimitri Lefebvre and Edouard Leclercq and Fabrice Druaux and Philippe Thomas},
  journal = {International Journal of Systems Science},
  title   = {Gradient-based controllers for timed continuous Petri nets},
  year    = {2015},
  number  = {9},
  pages   = {1661--1678},
  volume  = {46},
}

@Article{HAMDI2009310,
  author  = {Fatiha Hamdi and Noureddine Manamanni and Nadhir Messai and Khier Benmahammed},
  journal = {Nonlinear Analysis: Hybrid Systems},
  title   = {Hybrid observer design for linear switched system via Differential Petri Nets},
  year    = {2009},
  issn    = {1751-570X},
  number  = {3},
  pages   = {310-322},
  volume  = {3},
}

@Article{scholkopf2001estimating,
  author  = {Sch{\"o}lkopf, Bernhard and Platt, John C. and Shawe-Taylor, John and Smola, Alex J. and Williamson, Robert C.},
  journal = {Neural Computation},
  title   = {Estimating the Support of a High-Dimensional Distribution},
  year    = {2001},
  number  = {7},
  pages   = {1443-1471},
  volume  = {13},
}

@Article{tax1999support,
  author  = {David M.J Tax and Robert P.W Duin},
  journal = {Pattern Recognition Letters},
  title   = {Support vector domain description},
  year    = {1999},
  issn    = {0167-8655},
  number  = {11},
  pages   = {1191-1199},
  volume  = {20},
}

@Article{rousseeuw1999fast,
  author  = {Peter J. Rousseeuw and Katrien Van Driessen},
  journal = {Technometrics},
  title   = {A Fast Algorithm for the Minimum Covariance Determinant Estimator},
  year    = {1999},
  number  = {3},
  pages   = {212--223},
  volume  = {41},
}

@Article{tax2004support,
  author  = {Tax, David M.J. and Duin, Robert P.W.},
  journal = {Machine Learning},
  title   = {Support {Vector} {Data} {Description}},
  year    = {2004},
  issn    = {1573-0565},
  month   = jan,
  number  = {1},
  pages   = {45--66},
  volume  = {54},
}

@Article{zeroual2019road,
  author  = {Abdelhafid Zeroual and Fouzi Harrou and Ying Sun},
  journal = {Sustainable Cities and Society},
  title   = {Road traffic density estimation and congestion detection with a hybrid observer-based strategy},
  year    = {2019},
  issn    = {2210-6707},
  pages   = {101411},
  volume  = {46},
}

@Article{ghomri2007modeling,
  author  = {Latefa Ghomri and Hassane Alla},
  journal = {Nonlinear Analysis: Hybrid Systems},
  title   = {Modeling and analysis using hybrid Petri nets},
  year    = {2007},
  issn    = {1751-570X},
  note    = {Nonlinear Hybrid Control Systems},
  number  = {2},
  pages   = {141-153},
  volume  = {1},
}

@Article{zeroual_piecewise_2017,
  author  = {Zeroual, Abdelhafid and Messai, Nadhir and Kechida, Sihem and Hamdi, Fatiha},
  journal = {International Journal of Automation and Computing},
  title   = {A piecewise switched linear approach for traffic flow modeling},
  year    = {2017},
  issn    = {1751-8520},
  month   = dec,
  number  = {6},
  pages   = {729--741},
  volume  = {14},
}

@InProceedings{hamdi2024ICSC,
  author    = {Fatiha, Hamdi and Abdelhafid, Zeroual},
  booktitle = {2024 12th International Conference on Systems and Control (ICSC)},
  title     = {Hybrid Timed Petri Net Framework for Switched Dynamical System Modelling},
  year      = {2024},
  pages     = {306-311},
}

@InProceedings{Hamdi2024AFROS,
  author    = {Fatiha, Hamdi and Abdelhafid, Zeroual},
  booktitle = {2024 International Conference of the African Federation of Operational Research Societies (AFROS)},
  title     = {Extended Timed Petri Net Modelling Approach for a Class of Hybrid Dynamic Systems},
  year      = {2024},
  pages     = {1-6},
}

@InProceedings{Zeroual2017ICEE-B,
  author    = {Abdelhafid, Zeroual and Harrou, Fouzi and Sun, Ying},
  booktitle = {2017 5th International Conference on Electrical Engineering - Boumerdes (ICEE-B)},
  title     = {An efficient statistical-based approach for road traffic congestion monitoring},
  year      = {2017},
  pages     = {1-5},
}

@InProceedings{zeroual2015calibration,
  author       = {Zeroual, A and Messai, N and Kechida, S and Hamdi, F},
  booktitle    = {2015 3rd International Conference on Control, Engineering \& Information Technology (CEIT)},
  title        = {Calibration and validation of a switched linear macroscopic traffic model},
  year         = {2015},
  organization = {IEEE},
  pages        = {1--5},
}

\end{document}